\definecolor{shadecolor}{rgb}{0.8,1.0,1.0}  
\newtheorem{Example}{Example}
\newtheorem{lemma}{Lemma}
\newtheorem{theorem}{Theorem}
\newtheorem{remark}{Remark}
\newtheorem{assumption}{Assumption}
\newcommand{\tabincell}[2]{\begin{tabular}{@{}#1@{}}#2\end{tabular}}
\definecolor{curve_pink}{HTML}{9b59b6} 
\definecolor{curve_blue}{HTML}{3498db} 
\begin{document}

% paper title
% can use linebreaks \\ within to get better formatting as desired
% Do not put math or special symbols in the title.
\title{Meta-Wrapper: Differentiable Wrapping Operator for User Interest Selection in CTR Prediction}
%
%
% author names and IEEE memberships
% note positions of commas and nonbreaking spaces ( ~ ) LaTeX will not break
% a structure at a ~ so this keeps an author's name from being broken across
% two lines.
% use \thanks{} to gain access to the first footnote area
% a separate \thanks must be used for each paragraph as LaTeX2e's \thanks
% was not built to handle multiple paragraphs
%

\author{Tianwei Cao,
	Qianqian Xu\IEEEauthorrefmark{1},~\IEEEmembership{Senior Member,~IEEE,}
	Zhiyong Yang,
	and Qingming Huang\IEEEauthorrefmark{1},~\IEEEmembership{Fellow,~IEEE}% <-this % stops a space
	\IEEEcompsocitemizethanks{
		\IEEEcompsocthanksitem Tianwei Cao is with the School of Computer Science and Technology,
		University of Chinese Academy of Sciences, Beijing 101408, China (email: \texttt{caotianwei19@mails.ucas.ac.cn}).\protect\\
		\IEEEcompsocthanksitem Qianqian Xu is with the Key Laboratory of
		Intelligent Information Processing, Institute of Computing Technology, Chinese
		Academy of Sciences, Beijing 100190, China, (email: \texttt{xuqianqian@ict.ac.cn}).\protect\\
		\IEEEcompsocthanksitem Zhiyong Yang is with the School of Computer Science and Technology,
		University of Chinese Academy of Sciences, Beijing 101408, China
		(email: \texttt{yangzhiyong@iie.ac.cn}). \protect\\
		\IEEEcompsocthanksitem Qingming Huang is with the School of Computer Science and Technology,
		University of Chinese Academy of Sciences, Beijing 101408, China, also
		with the Key Laboratory of Big Data Mining and Knowledge Management (BDKM),
		University of Chinese Academy of Sciences, Beijing 101408, China,  also
		with the Key Laboratory of Intelligent Information Processing, Institute of
		Computing Technology, Chinese Academy of Sciences, Beijing 100190, China, and also with Peng Cheng Laboratory, Shenzhen 518055, China
		(e-mail: \texttt{qmhuang@ucas.ac.cn}).\protect\\
		\IEEEcompsocthanksitem * corresponding author \protect\\
}}

% note the % following the last \IEEEmembership and also \thanks - 
% these prevent an unwanted space from occurring between the last author name
% and the end of the author line. i.e., if you had this:
% 
% \author{....lastname \thanks{...} \thanks{...} }
%                     ^------------^------------^----Do not want these spaces!
%
% a space would be appended to the last name and could cause every name on that
% line to be shifted left slightly. This is one of those "LaTeX things". For
% instance, "\textbf{A} \textbf{B}" will typeset as "A B" not "AB". To get
% "AB" then you have to do: "\textbf{A}\textbf{B}"
% \thanks is no different in this regard, so shield the last } of each \thanks
% that ends a line with a % and do not let a space in before the next \thanks.
% Spaces after \IEEEmembership other than the last one are OK (and needed) as
% you are supposed to have spaces between the names. For what it is worth,
% this is a minor point as most people would not even notice if the said evil
% space somehow managed to creep in.

% The paper headers
% \markboth{
% 	Submitted to IEEE TRANSACTIONS 
% 	ON PATTERN ANALYSIS AND MACHINE INTELLIGENCE
% }
\markboth{TO APPEAR IN IEEE TRANSACTIONS ON PATTERN ANALYSIS AND MACHINE INTELLIGENCE}%
{Shell \MakeLowercase{\textit{et al.}}: Bare Demo of IEEEtran.cls for Journals}
% The only time the second header will appear is for the odd numbered pages
% after the title page when using the twoside option.
% 
% *** Note that you probably will NOT want to include the author's ***
% *** name in the headers of peer review papers.                   ***
% You can use \ifCLASSOPTIONpeerreview for conditional compilation here if
% you desire.

% If you want to put a publisher's ID mark on the page you can do it like
% this:
%\IEEEpubid{0000--0000/00\$00.00~\copyright~2012 IEEE}
% Remember, if you use this you must call \IEEEpubidadjcol in the second
% column for its text to clear the IEEEpubid mark.

% use for special paper notices
%\IEEEspecialpapernotice{(Invited Paper)}

% make the title area
\maketitle

% As a general rule, do not put math, special symbols or citations
% in the abstract or keywords.
\begin{abstract}
\justifying
Click-through rate (CTR) prediction, whose goal is to predict the probability 
of the user to click on an item, has become increasingly significant in 
the recommender systems. 
Recently, some deep learning models with the ability to automatically 
extract the user interest from his/her behaviors have achieved great success. 
In these work, the attention mechanism is used to select the user 
interested items in historical behaviors, improving the performance of 
the CTR predictor. Normally, these attentive modules can be jointly trained 
with the base predictor by using gradient descents. In this paper, we 
regard user interest modeling as a feature selection problem, which 
we call user interest selection. For such a problem, we propose a novel 
approach under the framework of the wrapper method, which is named  
Meta-Wrapper. More specifically, we use a differentiable module as our 
wrapping operator and then recast its learning problem as a continuous 
bilevel optimization. Moreover, we use a meta-learning algorithm to solve the 
optimization and theoretically prove its convergence.
Meanwhile, we also provide theoretical analysis to show that 
our proposed method 1) efficiencies the wrapper-based feature selection, and
2) achieves better resistance to overfitting.
Finally, extensive experiments on three public datasets manifest 
the superiority of our method in boosting the performance of CTR prediction.

\end{abstract}
% Note that keywords are not normally used for peerreview papers.
\begin{IEEEkeywords}
	Click-through Rate Prediction, 
	Recommender System, 
	Bilevel Optimization, 
	Meta-learning.
\end{IEEEkeywords}

% For peer review papers, you can put extrobtainedformation on the cover
% page as needed:
% \ifCLASSOPTIONpeerreview
% \begin{center} \bfseries EDICS Category: 3-BBND \end{center}
% \fi
%
% For peerreview papers, this IEEEtran command inserts a page break and
% creates the second title. It will be ignored for other modes.
\IEEEpeerreviewmaketitle

\section{Introduction}
% The very first letter is a 2 line initial drop letter followed
% by the rest of the first word in caps.
% 
% form to use if the first word consists of a single letter:
% \IEEEPARstart{A}{demo} file is ....
% 
% form to use if you need the single drop letter followed by
% normal text (unknown if ever used by IEEE):
% \IEEEPARstart{A}{}demo file is ....
% 
% Some journals put the first two words in caps:
% \IEEEPARstart{T}{his demo} file is ....
% 
% Here we have the typical use of a "T" for an initial drop letter
% and "HIS" in caps to complete the first word.
\IEEEPARstart{C}lick-through rate (CTR) prediction plays a centric role in 
recommender systems, where the goal is to predict the probability that 
a given user clicks on an item. In these systems, each recommendation 
returns a list of items that a user might prefer in terms of the prediction. 
In this way, the performance of the CTR prediction model is directly related 
to the user experience and thus has a critical influence on the final revenue 
of an online platform. Therefore, this task has attracted a large number of 
researchers from the machine learning and data mining community.

% \section{Introduction}
Currently, there are many CTR prediction methods 
\cite{PNN, DFM, XDFM} that 
focus on automatic feature engineering. The main idea behind 
these approaches is learning to combine the features automatically 
for better representation of instances. Following this trend, 
recent researches pay particular attention to the interplay of the target 
item and user behavioral features. 
Deep Interest Network (DIN) \cite{DIN} 
is one of the typical examples. Given a specific user, DIN simply uses an 
attention layer to locally activate the historically clicked items that 
are most relevant to the target item. Here the outputs of the attention 
layer come from the interaction between the given target and each clicked 
item, characterizing the interests of the user. Similar attentive modules 
are also used in \cite{DIEN, DSIN, HPMN, MIMN} ,etc. All of these methods 
capture user interests by means of historically clicked items, 
under the assumption that user's preference is covered by one's behaviors.

Actually, such an attention mechanism can be regarded as a special case of 
feature selection methods. More specifically, it aims to find the most relevant 
features (clicked items) according to a given target. 
Its main difference with traditional feature selection is the 
membership of features in the subset is not necessarily assessed in binary terms. 
Here the feature subset could be a fuzzy set and the attention unit 
can be regarded as the membership function. 

However, there exists a fact that is not paid enough attention, namely, 
feature selection based on attention mechanisms also increase overfitting 
risks. More specifically, when we regard the attentive module 
as a feature selector, the other parts of the model are naturally 
viewed as a base CTR predictor, where the two components are 
jointly learned from the same training set. Based on this, we can 
expect that the model with a feature selector is more capable of 
fitting the training data than using the base predictor only. 
However, such an approach also introduces additional complexity 
to the architecture of the predictive model, which implies a higher 
risk of overfitting. In other words, there is still an improvement 
space in the generalization performance of those feature selection 
methods based on attention mechanisms.

Motivated by this fact, we argue that the feature selector should be learned 
not only from the same data with base predictor, but also from out-of-bag data.  
In this way, 
the knowledge learned in feature selector would come from different sources, 
thus alleviating the possibility of overfitting on a single dataset. 
Such an idea is exactly inspired by wrapper method 
\cite{fswrapper, fswrapper2010}, which is a classical approach to 
feature selection. This kind of methods regard feature selection 
as a non-continuous bilevel optimization (BLO) 
\cite{bilevel_meta} problem which is shown in Fig.(\ref{fig:main_a}). 
% Note that this method learns a feature selector and a base predictor
% from different datasets. 
% More specifically, traditional
% wrapper method, as is shown in Fig.\ref{fig:main_a}, produces a subset of features, 
% uses the features to train a model, and evaluates the model performance repeatedly 
% until the stop condition is reached. After that, it selects the best subset 
% according to the model performance. 
Instead of solely relying on the training set,
the wrapper method takes the performance on the out-of-bag data as 
the metric of the feature selection. 
In this way,
two datasets with different empirical distribution are utilized during 
the learning process, 
thereby mitigating the risk of overfitting.

\begin{figure*}[htbp]
	\centering
	
	\subfloat[Traditional Wrapper Method]{
		\label{fig:main_a}
		\includegraphics[scale=0.45, trim=200 80 200 77, clip]{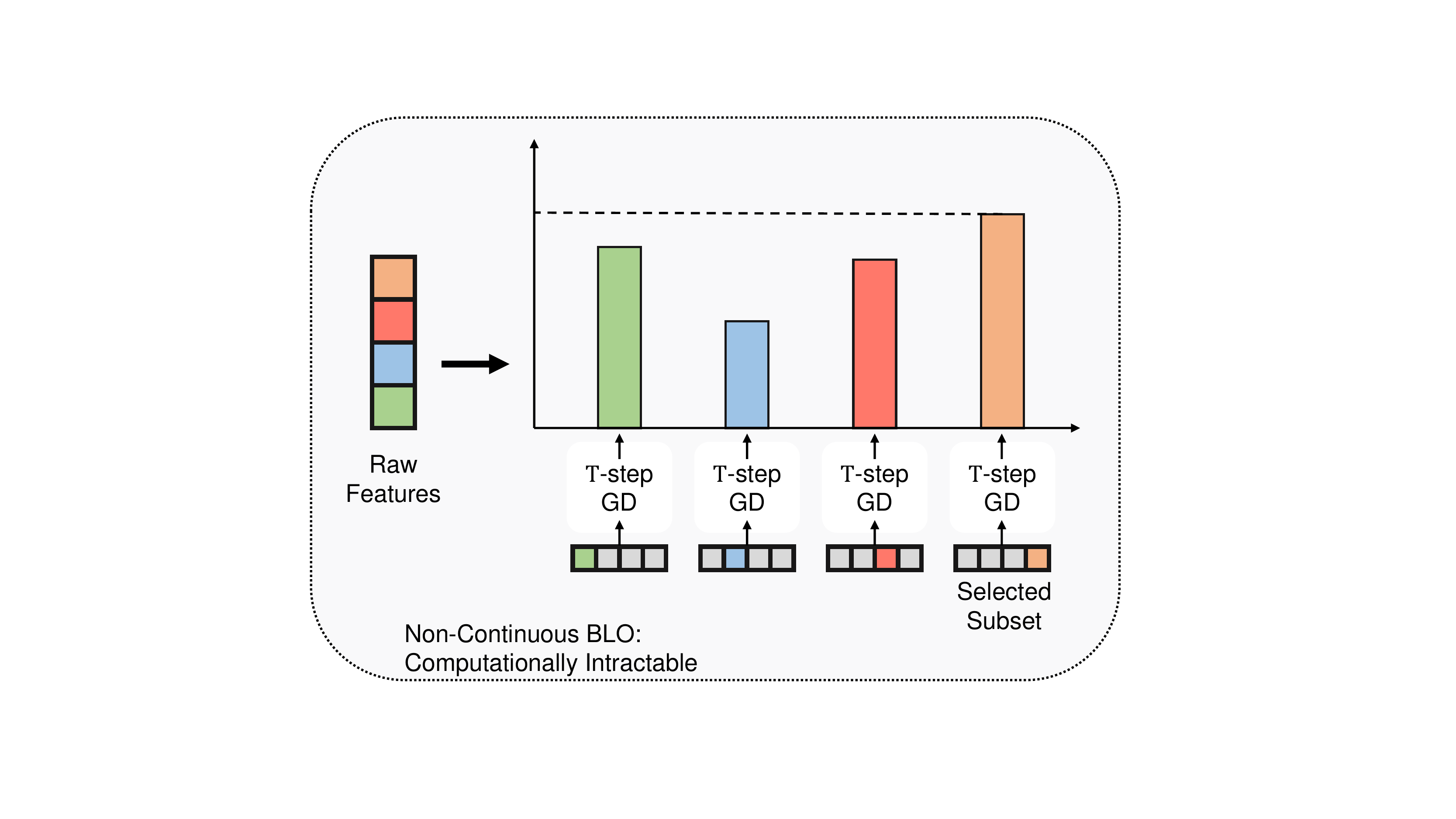}} 
	\subfloat[Our Proposed Meta-Wrapper]{
		\label{fig:main_b}
		\includegraphics[scale=0.45, trim=200 80 200 77, clip]{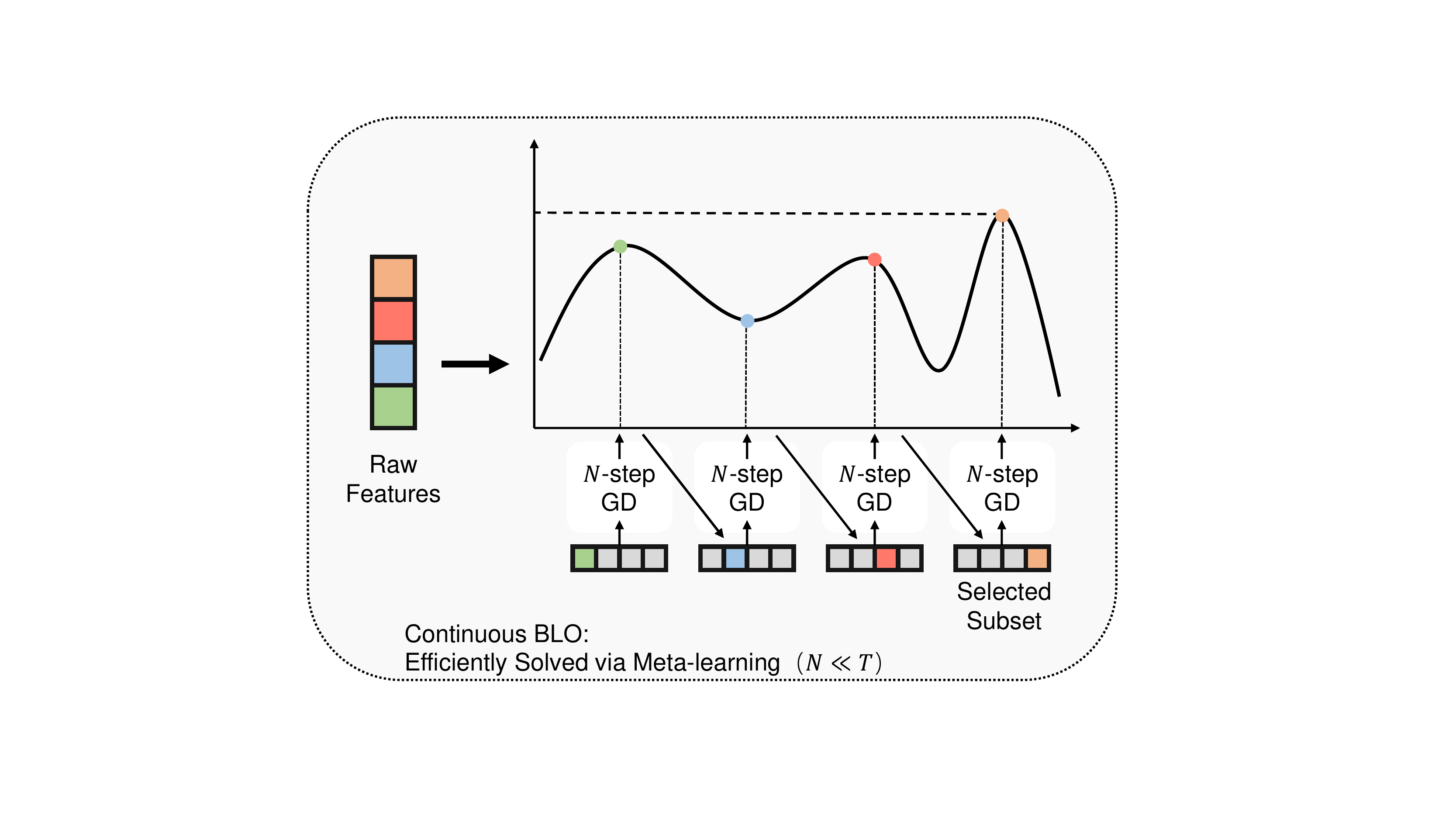}} 
	\centering
	\caption{Comparison between Traditional Wrapper Method and Meta-Wrapper.}
	\label{fig:main}
\end{figure*}

However, there are two significant problems left to be solved 
in traditional wrapper methods:
\begin{itemize}
	\item \textbf{Intractable computational complexity}:
	The number of possible feature combinations 
	increases exponentially with feature dimensions. 
	Therefore, it is actually an NP-hard problem to find 
	the best feature subset. To get an acceptable selection 
	result, the model must be trained again and again 
	on different feature subsets, which leads to a great computing cost.
	\item \textbf{Insufficient flexibility}:
	In CTR prediction, 
	the selection of the features (clicked items) 
	should rely on the target item, which cannot be 
	directly implemented via traditional wrapper methods. 
	Meanwhile, wrapper methods search the best feature combination 
	in discrete space instead of continuous space. Thus it cannot 
	be used as a differentiable sub-module, like attention mechanisms, 
	of an end-to-end deep learning model.
\end{itemize}
In order to solve these problems, we propose a novel feature selection approach 
under the framework of the wrapper methods, which is called Meta-Wrapper. 
Our main idea is to encapsulate the inner-level learning process
by means of a differentiable wrapping operator. 
By applying this operator, the wrapper-based feature selection is 
recast to a continuous BLO problem that can be efficiently solved through
meta-learning \cite{META_LEARNING_REVIEW}. 
To this end, we, following the similar spirit of attention methods, 
regard the feature selection problem as learning a scoring function. 
This function takes behavioral features and the target item as inputs, 
then outputs the relevant 
scores used for properly selecting features. 
% The most 
% important difference between our proposed method and previous studies is that we 
% model the learning problem of the feature selector (wrapping operator) as a continuous 
% bilevel optimization \cite{bilevel_meta}, where the optimization problem is solved 
% through an algorithm following the spirit of meta-learning \cite{META_LEARNING_REVIEW, MAML}. 
More specifically, the novelty of our proposed method can be viewed from two perspectives:
\begin{itemize}
	\item 
	As is shown in Fig.(\ref{fig:main}), 
	our feature selector could be meta-learned in a more efficient manner
	compared with the traditional wrapper method.
	We will elaborate on this meta-learning algorithm and its convergence 
	result in the following sections.
	\item Compared with attention mechanisms, 
	our feature selector is learned from two different sources. 
	This actually leads to an implicit regularization term to prevent overfitting.
	In later sections, we also provide theoretical analysis on this point. 
\end{itemize}
In other words, our work bridges the gap between deep-learning-based 
feature selector and classical wrapper method, in the sense that 
the generalization ability of the feature selector can be improved through an
efficiently meta-learned wrapping operator.

To summarize, the main contributions of this paper are three-fold:
\begin{itemize}
	\item We propose a novel approach to user interest 
	selection, which encapsulates the optimizing iterations
	by means of a differentiable wrapping operator. 
	Such an approach unifies feature
	selection and CTR prediction into a continuous BLO objective.
	\item From theoretical analysis, we show
	that there exists an implicit regularization term in such a
	BLO objective, which explains how this objective 
	alleviates overfitting. 
	\item We develop a meta-learning algorithm for solving this BLO problem
	in an efficient way. 
	Taking a step further, we theoretically prove the convergence of this algorithm
	under a set of broad assumptions.
	% \item We conduct systematic experiments on three public datasets. 
	% Experimental results show the effectiveness of our proposed method.
\end{itemize}

\section{Related Work}

% Our work is relevant to two lines of work: 1) CTR prediction in recommender systems and online advertising, and 2) techniques for addressing feature selection problem.

\subsection{CTR prediction}

% With the popularity of online services, new CTR prediction models increasingly emerge in recent years. In this section, we mainly discuss the development of CTR prediction models from the perspective of feature engineering.

As an early trial, researchers solve the CTR prediction problem mainly through 
simple models with expert feature engineering. \cite{trenches}
integrates multiple discrete features and trains a linear predictor
through an online sparse learning algorithm. 
\cite{feedback} introduces a machine learning framework 
based on Logistic Regression, aiming to predict CTR for 
online ads. \cite{ensemble} ensembles collaborative filters, probability models 
and feature-engineered models to estimate CTR. \cite{FB} uses 
ensembled decision trees to capture the features of social ads,
thereby improving the model performance. In spite of the remarkable success
of these models, they heavily rely on manual feature engineering, 
which results in significant labor costs for developers. 

% To reduce human efforts, it is necessary to develop CTR prediction models with the ability to combine the features automatically. Meanwhile, since the features in recommender systems suffer from extreme sparsity, there are a number of feature interactions that are not or merely seen in the training set. In this condition, these parameters cannot be sufficiently trained. 

To address this problem, Factorization Machines (FM) \cite{FM} are 
proposed to automatically perform the second-order feature-crossing
by inner production of feature embeddings. 
More recently, 
\cite{CFM, FAFM,AFM,FWFM,Roubst_FM,FM_aaai_2019_1,FM_aaai_2019_2,FM_aaai_2019_3} 
improve the FM model from different perspectives, 
for the sake of better predictive performance. 
Notably, such embedding mechanisms
have better resistance to the sparsity of crossed features 
since the embedding for each individual feature is trained independently. 
To capture more sophisticated feature interactions, 
there is a line of works using the universal approximation ability 
of Deep Neural Networks (DNN) to improve the feature embedding. 
Deep crossing model \cite{DCROSS} feeds the feature embeddings 
into a cascade of residual units \cite{resnet}. 
Moreover, Factorization-machine supported Neural Network (FNN) \cite{FNN} 
stacks multiple dense layers on top of an embedding layer transferred from 
a pre-trained FM. Based on a similar architecture, 
Product-based Neural Network (PNN) \cite{PNN} adds a product 
layer between the embedding layer and the dense layers. 
Following a similar spirit, Neural Factorization Machine (NFM) \cite{NFM} 
adds a bi-interaction layer 
before the dense layers to enhance the embeddings, thereby achieving 
state-of-the-art performance via fewer layers. 
It is equivalent to stack dense layers on a second-order FM.
More recently, \cite{AUTOINT} automatically 
learns the high-order feature interactions 
by multi-head self-attention \cite{Transformer}, 
which is called Automatic Feature Interaction Learning (AutoInt). 
By leveraging the ability of representation learning, 
these models implement automatic feature-crossing in CTR prediction.
There is also another line of works that consider low- and high-order 
feature-crossing simultaneously. 
The well-known Wide\&Deep model \cite{WD} captures low- and high-order 
feature-crossing 
in terms of wide and deep components, respectively.   
After this work, there emerge some related researches that 
improve the wide or deep component. 
DeepFM \cite{DFM} uses second-order FM as the wide component to 
automatically capture low-order feature-crossing.
\cite{FPENN} proposes a model named Field-aware 
Probabilistic Embedding Neural Network (FPENN).
It uses field-aware probabilistic embedding (FPE) strategy
to enhance the wide and deep components simultaneously.
Moreover, Deep \& Cross Network (DCN) \cite{DCN} utilizes a
cross network as the wide component, for modeling the N-order 
feature-crossing in a more efficient way. 
\cite{XDFM} proposes eXtreme Deep Factorization Machine (xDeepFM). 
It uses a Compressed Interaction Network (CIN) as the wide component, 
thereby improving the representative ability compared with DCN.
Taking a step further, \cite{He_plus_2} uses graph neural networks (GNN)
with adversarial training strategy
to capture more complex feature relations.
This method successfully achieves robustness towards the perturbations. 
Additionally, \cite{He_plus_1} proposes a novel method to 
capture interest features for cold-start users, which achieves 
state-of-the-art performance in this setting.
All of these methods clearly show the effectiveness of automatic feature 
engineering in this area.

% \begin{mdframed}[hidealllines=true,backgroundcolor=blue!10	,innerleftmargin=3pt,innerrightmargin=3pt,leftmargin=-3pt,rightmargin=-3pt]
Different from the previous research, recent methods
pay more attention to model users' interest by using behavioral features. 
Notably, feature interaction and user interest modeling are not contradictory 
to each other. 
The latter is mainly an approach to deal with behavior feature sequences.
Deep Interest Network (DIN) \cite{DIN} uses attention mechanisms to adaptively 
activate user interests for a given item, where the user interests are modeled
as his/her clicked items.
This method adaptively produces the user interest representation by selecting the 
most relevant behaviors to the target item. 
In other words, the user representation varies over different
targets, thus improving the expressive ability. 
Deep Interest Evolution Network (DIEN) \cite{DIEN} 
designs an RNN-based module as an interest extractor to 
model the user interest evolution. 
Deep Session Interest Network (DSIN) \cite{DSIN} splits the users' 
behavioral sequences as multiple sessions,
for modeling their interest evolution at session-wise level.
\cite{HPMN} proposes a model named Hierarchical Periodic Memory Network (HPMN). 
This model uses memory networks to incrementally capture 
dynamic user interests. 
Similarly, Multi-channel user Interest Memory Network (MIMN) \cite{MIMN}
also takes the memory network to capture the long-term user 
interests and designs a separate module named User Interest Center (UIC) 
to reduce the latency of online serving. 
All these models are developed from the basis of DIN, 
improving the user interest modeling from the time series perspective.
% \end{mdframed} 

Following these works, we also focus on capturing the interests in the user 
behavioral features. 
However, we recast the user modeling as a feature selection problem 
rather than capturing the user interest evolution, 
which provides a novel perspective for the CTR prediction task.

\subsection{Feature Selection}

Feature selection methods can be normally divided into three categories: 
filter, wrapper, and embedded methods \cite{fs2003, fs2017}. 
In this section, we will discuss the three kinds of methods respectively.
Filter methods rank the features by their related scores to the target. 
Different filter approaches adopt various kinds of metrics 
to score the relevance, such as T-score \cite{TSCORE}, 
information-theory based criteria \cite{INFOFS}, 
similarity metrics \cite{SPEC}, and fisher-score \cite{FISHER} .etc. 
These methods can only select the relevant features 
by some pre-defined rules, leading to the inability to adaptively modulate
selected results.
Wrapper methods \cite{fswrapper2010} compare validating performances of 
the models with different feature subsets. 
To this end, the model must be repeatedly trained for every possible 
subset, resulting in intractable computational complexity \cite{fswrapper}. 
In this way, wrapper methods are not desirable for the tasks with large-scale
datasets, such as recommender systems.
In the embedded methods, a feature selector is jointly learned with 
the predictive model, which has more efficiency than the wrapper methods. 
In an early trial, there exist many approaches that penalize weights of irrelevant 
features via specific regularization terms.
\cite{l21fs} uses $L_{2,1}$-norm to learn sparse feature weights, 
where the weights of irrelevant features are reduced to zero.
for the sake of better flexibility, \cite{PL21} uses the probabilistic $L_{2,1}$-norm 
to select features, where feature subset can be different for various targets. 
With the popularity of deep learning, there emerge
another kind of embedded methods based on neural networks, 
which are called attention mechanisms \cite{NMT}. 
This strategy is widely applied in many areas, such as 
object detection \cite{RichFS, SOD, sptransformer, qimg}, 
machine translation \cite{NMT, Transformer}, 
video description \cite{DVLSTM} and
diagnosis prediction \cite{Dipole}.
All of these methods use attentive modules to produce 
real-valued weights to capture feature importance.  
To obtain discrete feature subsets, 
\cite{DFS} and \cite{FSD} design special attentive modules
which provide sparse weights. 
However, the sparsity of the selected subset, for recommender systems, 
is actually less important than its representative ability.

% In this paper, our proposed method is still 
% under the framework of the wrapper method, but 
% the problem of computational complexity is addressed 
% by using a differentiable wrapping operator.

In this paper, we follow the spirit of the wrapper and embedded method
simultaneously. We model the feature selector as a neural network 
meta-learned jointly with base predictor, 
thereby avoiding excessive computing costs. 
Meanwhile, flexibility can also be 
taken into account to handle the diversity of user interests.

\section{Problem Statement}

We commence by introducing deep-learning based CTR prediction. Then we 
formulate the user interest selection problem in this field. Notations
defined in this section will be adopted in the entire paper.
More notations and their descriptions are listed
in Appendix.

\subsection{Deep-Learning Based CTR Prediction}
To define the user interest selection, we first need to formulate the CTR 
prediction task. Given a set of triples $\{({u}, {v}, {y})\}$ 
each of which includes the user $u\in\mathcal{U}$, item $v\in\mathcal{V}$ and the 
corresponding label of user click indicator
$$
{y}=\left\{
\begin{array}{lr}
	1,\ \ {u}\ has\ clicked\ {v}; \\
	0,\ \ otherwise,              \\
\end{array}
\right.
$$
our goal is to predict click-through rate $\hat{{p}}
({y}=1|{u},{v})$. 
Here the $({u},{v})$ are indices of the user and target item, while
other inputs are omitted for their non-directly relevance to our problem.
We implement CTR prediction through a 
learned function $f_{\bm{\theta}}(\cdot)$ 
with parameters $\bm{\theta}$, which can be simply formulated as
$$
\hat{{p}}({y}=1|{u},{v})
=f_{\bm{\theta}}({u},{v}).
$$
To learn $f_{\bm{\theta}}(\cdot)$, the cross-entropy loss 
$L({y},\hat{{p}})$ is used as the optimization target
\begin{equation}\label{cross_entropy}
	L({y},\hat{{p}})=
	-{y}\log(\hat{{p}}) - (1-{y})\log(1-\hat{{p}}).
\end{equation}

Consider that $f_{\bm{\theta}}(\cdot)$ is usually modeled as a deep 
neural network and $L({y},\hat{{p}})$ would be optimized 
by the algorithms with gradient operations. 
The categorical inputs of $f_{\bm{\theta}}(\cdot)$ 
have to be transformed into real-value vectors through an embedding 
layer. More formally, the predictive model can be further represented as
\begin{equation}\label{ctr_with_emb}
	\hat{{p}}({y}=1|{u},{v})=f_{
		\bm{\theta}}(\bm{\theta}^{E}_{u},
	\bm{\theta}^{E}_{v}),
\end{equation}
where $\bm{\theta}^{E}\in\mathbb{R}^{{K} \times {M}}$ is the 
embedding matrix; ${M}$ is the number of categorical features in all; 
${K}$ is the dimension of each embedding vector; 
$(\bm{\theta}^{E}_{u},\bm{\theta}^{E}_{v})$ are embeddings
of the user ${u}$ and target item ${v}$, respectively.

\subsection{Exploiting User Behavioral Features}
From Eq.(\ref{ctr_with_emb}), we observe that traditional CTR prediction methods 
simply represent each user as a ${K}$-dimensional embedding 
vector. However, recent works \cite{svd_pp} find such a method seems 
insufficient for learning user preference
in real-world recommender systems. 

for the sake of better representation, augmenting the user with his/her 
historical features \cite{DIN, NCF, OPNCF} has become a widely adopted paradigm. 
More Specifically, historical features of a user naturally refer to the sequence of 
his/her clicked items. Based on this, we can represent each user ${u}$ as 
\begin{equation}\label{user_repr}
	\bm{r_{{u}}} = 
	[\bm{\theta}^{E}_{v_1}; 
		\bm{\theta}^{E}_{v_2};
		\cdots
	\bm{\theta}^{E}_{v_{T_u}}],
\end{equation}
where $\bm{r_{{u}}}\in\mathbb{R}^{{K}\times {T_u}}$ represents
the user's behavioral sequence; 
${T_u}\in\mathbb{N}^{+}$ is the number of clicked items; 
${v_1}\dots{v_{T_u}}$ are clicked items; 
each $\bm{\theta}^{E}_{v_{.}}\in\mathbb{R}^{{K}}$ is the embedding of 
a clicked item.

In this way, a CTR prediction model with user behavioral features can be formulated as 
\begin{equation}\label{ctr_with_user}
	\hat{{p}}({y}=1|{u},{v})=f_{
		\bm{\theta}}(\texttt{Pooling}(\bm{r_{{u}}}),
	\bm{\theta}^{E}_{v}), 
\end{equation}
where $\texttt{Pooling}(\bm{r_{{u}}})
\in\mathbb{R}^{{K}}$ is the result of applying a pooling function 
to user representation. By taking the behavioral features into account, 
we replace $\bm{\theta}^{E}_{u}$ in Eq.(\ref{ctr_with_emb})
with a more comprehensive user representation $\bm{r_{{u}}}$. 

\subsection{Modeling User Interest by Feature Selection}
% \begin{mdframed}[hidealllines=true,backgroundcolor=blue!10	,innerleftmargin=3pt,innerrightmargin=3pt,leftmargin=-3pt,rightmargin=-3pt]
Despite that introducing behavioral features could enhance user representation, there is a
fact that the number of clicked items is continually growing over time. With this trend, 
some user's behavioral sequence would become extremely long. In this scene, 
not all historical behaviors can play a positive role in every specific prediction.
For example, when  a user shopping for clothes on the e-commerce platform, 
it is totally irrelevant to which books he/she bought in the past.
% \end{mdframed}

Therefore, we need to filter out the non-trivial items from each user's 
behavioral sequence, keeping the items that are beneficial to downstream prediction. 
Since the remaining items are considered more representative for user interest, we name such
a process as \textbf{user interest selection}. Essentially, it can be regarded as a 
special case of feature selection problems. 

More formally, our goal is to learn a function $g_{\bm{\phi}}(\cdot)$ 
with parameter $\bm{\phi}$ for selecting user interests:
\begin{equation}\label{seletor}
	\bm{s_{{u}}} = g_{\bm{\phi}}(
	\{\bm{\theta}^{E}_{v_i}\}_{{i}=1}^{T_u}
	, \bm{\theta}^{E}_{v}),
\end{equation}
where $\bm{\theta}^{E}_{v}$ is the embedding of the target item; 
$\bm{s_{{u}}}\in\mathbb{R}^{{T_u}}$ represents the 
relevant scores between each behavioral feature and the 
target ${v}$:
$$
\left\{
\begin{array}{lr}
	\bm{s_{{u}}}[i] > \bm{s_{{u}}}[j], 
	\ \ {v_i}\ is\ more\ relevent\ to\ {v};    \\
	\bm{s_{{u}}}[i] < \bm{s_{{u}}}[j], 
	\ \ {v_j}\ is\ more\ relevent\ to\ {v};    \\
	\bm{s_{{u}}}[i] = \bm{s_{{u}}}[j], 
	\ \ otherwise.                                           \\
\end{array}
\right.
$$ 
Similar to \cite{NAIS} and \cite{DIN}, we take the inner product
$$
\bm{\hat{r}_{{u}}}=
(\bm{r_{{u}}} \bm{s_{{u}}})\in\mathbb{R^{{K}}}
$$
as the aforementioned pooling function. 
In other words, it is a weighted sum of all feature 
embeddings $\bm{r_{{u}}}$, where the weights 
$\bm{s_{{u}}}$ is the relevance between the features and the target.
Moreover, if we would like to highlight the most
related behavioral feature, $\bm{\hat{r}_{{u}}}$ can also be 
represented as
$$
\bm{\hat{r}_{{u}}}=
\texttt{softmax}(\bm{r_{{u}}} \bm{s_{{u}}})\in\mathbb{R^{{K}}}.
$$
Based on this, a CTR  prediction model with user interest selection can be 
formulated as
\begin{equation}\label{ctr_with_selector}
	\hat{{p}}({y}=1|{u},{v})=f_{
		\bm{\theta}}(\bm{\hat{r}_{{u}}},
	\bm{\theta}^{E}_{v}),
\end{equation}
where $\bm{s_{{u}}}$ is produced by $g_{\bm{\phi}}(\cdot)$ as 
Eq.(\ref{seletor}).

Notably, the relationship between 
$g_{\bm{\phi}}(\cdot)$ and $f_{\bm{\theta}}(\cdot)$
in this paper is a little different from other works. In most other work,  
$f_{\bm{\theta}}(\cdot)$ represents the whole model and 
$g_{\bm{\phi}}(\cdot)$ is merely its component. On the contrary, we regard 
$f_{\bm{\theta}}(\cdot)$ as a \textbf{base predictor} that can be applied with
arbitrary \textbf{feature selector} $g_{\bm{\phi}}(\cdot)$. They are viewed as
two independent modules that can be trained jointly or separately.

\section{Motivation}

\begin{figure*}[htbp]
	\centering
	\includegraphics[width=\linewidth]{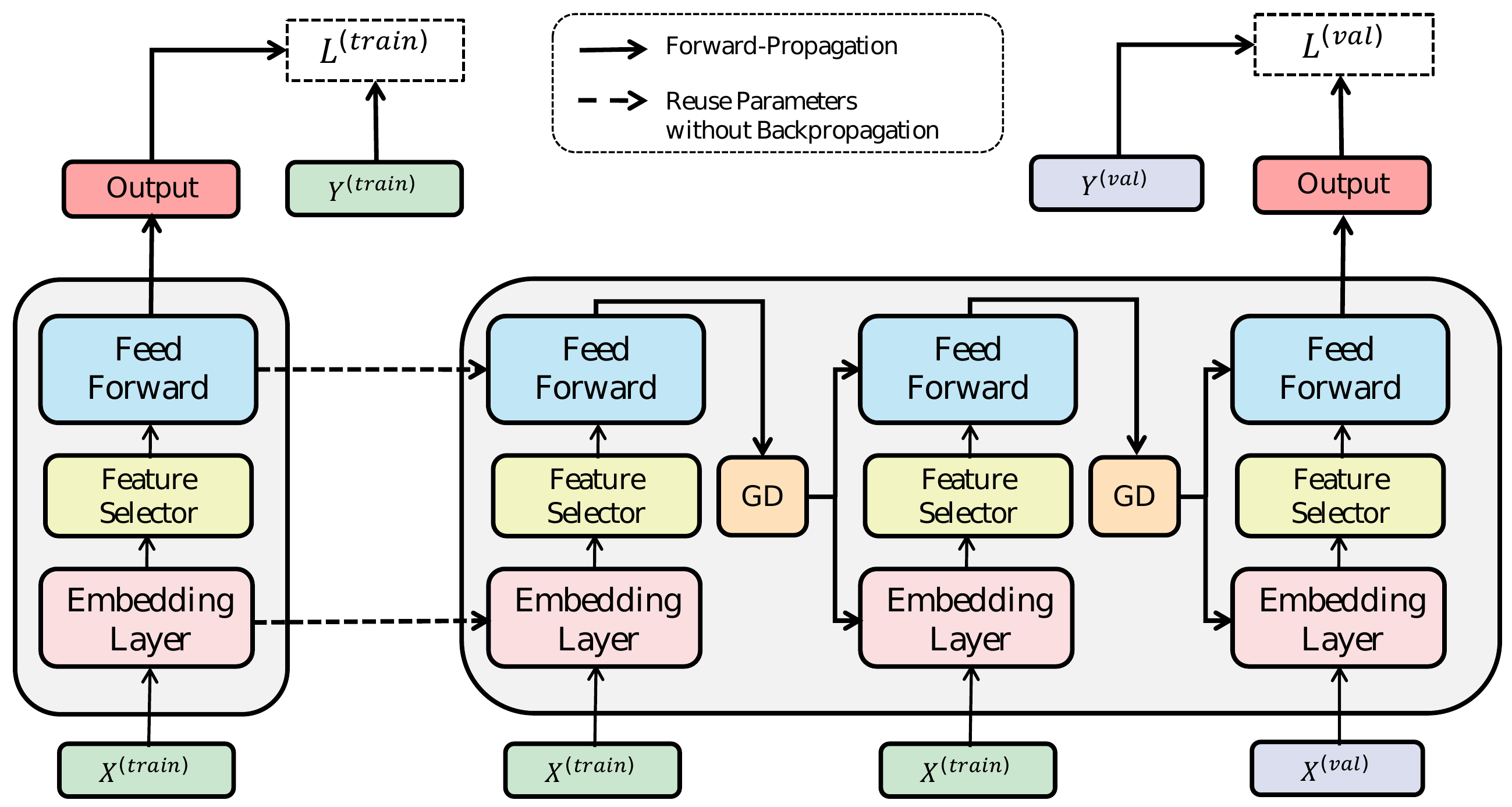}
	\caption{Overview of Meta-Wrapper Method (${N}=2$ for Example).} 
	\label{fig:Method} 
\end{figure*}

Normally, the feature selector $g_{\bm{\phi}}(\cdot)$ is modeled as an 
attention network that locally activates some behavioral features according to
the target item \cite{DIN}. Such an attention mechanism actually enhances the 
ability to fit the training data. Unfortunately, it also increases the 
overfitting risk, due to the over-parameterization.
Motivated by this fact, we first go back to a classical feature selection approach,
i.e., wrapper method \cite{fswrapper}. In this method, the feature selector is not 
learned from the training set, but from out-of-bag data. By incorporating the 
knowledge from different sources, it leads to a better generalization 
performance. 

Corresponding to user interest selection, 
the goal of the wrapper method can be formally represented as the following
bilevel optimization \cite{bilevel_meta}:
\begin{equation}\label{traditional_wrapper}
	\begin{aligned}
		% \bar{\bm{\phi}}= 
		  & \mathop{\arg\min}_{ \bm{\phi} }              
		\ L^{(out)}(
		\bar{\bm{\theta}}, 
		\bm{\phi}),\\
		  & \ \ s.t. \ \bar{\bm{\theta}}= 
		\mathop{\arg\min}_{ \bm{\theta} }\ 
		L^{(in)}(\bm{\theta}, \bm{\phi}), \\
	\end{aligned}
\end{equation}
where $L^{(\cdot)}(\cdot, \cdot)$ is the 
cross-entropy loss Eq.(\ref{cross_entropy}) with label ${y}$ and 
prediction $\hat{{p}}$ yield by Eq.(\ref{ctr_with_selector});
superscripts $(in)$ and $(out)$ indicate the training (in-the-bag) 
and out-of-bag data $\mathcal{D}^{(in)}$ and $\mathcal{D}^{(out)}$
that are also called inner- and outer-level dataset in this paper; 
$\bar{\bm{\theta}}$ represents
the parameters of an optimal $f_{\bm{\theta}}(\cdot)$ w.r.t a predefined 
$g_{\bm{\phi}}(\cdot)$. For such a method, the feature selection is motivated
by the need for generalization well to out-of-bag data  $\mathcal{D}^{(out)}$.

\begin{remark}[Feature Selector in Wrapper Method]
	Notably, canonical wrapper methods restrict
	the feature selector as a learnable indicating vector:
	$$
	g_{\bm{\phi}}(
	\{\bm{\theta}^{E}_{v_i}\}_{{i}=1}^{T_u}
	, \bm{\theta}^{E}_{v})
	=\bm{\phi}_{{u},{v}}\in\{0,1\}^{{T_u}}
	$$
	which is defined as
	$$
	\bm{\phi}_{{u}, {v}}[{i}]=\left\{
	\begin{array}{lr}
		1,\ \ {u}\ has\ clicked\ {v_i}; \\
		0,\ \ otherwise.                              \\
	\end{array}
	\right.
	$$
	Obviously, such a simple model is incapable of producing different 
	relevant scores 
	depending on different target items. Thus it is not suitable for user interest
	selection tasks. For a fair comparison with other methods, we will relax this 
	restriction in this paper, allowing the use of neural networks as 
	feature selectors.
\end{remark}
% $\bar{\bm{\phi}}$ is the parameters of our optimal 
% feature selector. 
% Under the framework of the wrapper method, we can 
% also call our feature selector as \textbf{wrapping operator}.

Unfortunately, the wrapper method actually cannot be directly applied 
to our problem, 
since its computational complexity is unacceptable. To explain, we provide
the following example.  

\begin{Example}[GDmax-Wrapper]\label{example_wrapper}
	Given a user ${u}$ and target item ${v}$, we are required to
	find an optimal feature selector $g_{\bm{\phi}}(\cdot)$ by
	solving the problem Eq.(\ref{traditional_wrapper}). for the sake of
	sufficient flexibility, $g_{\bm{\phi}}(\cdot)$ is modeled as a 
	neural network. Now we reverse
	the inner minimization as its equivalent maximization. 
	Thus Eq.(\ref{traditional_wrapper})
	is reformulated as the following:
	$$
	\begin{aligned}
		% \bar{\bm{\phi}}= 
		  & \mathop{\arg\min}_{ \bm{\phi} }              
		\ L^{(out)}(
		\bar{\bm{\theta}}, 
		\bm{\phi}),\\
		  & \ \ s.t. \ \bar{\bm{\theta}}= 
		\mathop{\arg\max}_{ \bm{\theta} }\ 
		\left(-L^{(in)}(\bm{\theta}, \bm{\phi})\right). \\
	\end{aligned}
	$$
	Following the paradigm of traditional wrapper method, we first get a 
	sufficiently
	precise inner solution $\bar{\bm{\theta}}$
	by performing gradient descent on $\bm{\theta}$ until $L^{(in)}$ 
	converges.
	Based on the inner solution, we solve the outer problem by optimizing
	$\bm{\phi}$. We find this process is quite similar to GDmax
	algorithm \cite{GDMAX}. Borrowing this concept, we call such a 
	wrapper method as GDmax-Wrapper.
\end{Example}
In this case, whenever we calculate $\bar{\bm{\theta}}$, 
we must scan the entire training set to solve the inner maximization. 
Assume that we need to perform $T^{(in)}$ loops of gradient ascent (GA) for inner
maximization and $T^{(out)}$ loops of gradient descent (GD) for outer
minimization.
Let $T^{(all)}$ notate the total number of gradient descent/ascent loops.
To find the optimal feature selector 
in Example.(\ref{example_wrapper}), we have $\mathbb{E}(T^{(all)})=
\mathcal{O}(T^{(in)} \times T^{(out)})$, where the computational complexity 
is unbearable.

% \begin{Example}[GDmax-Wrapper]\label{example_GDmax_wrapper}
% 	For the same objective of Example.(\ref{example_wrapper}),
% 	we set number of inner-GA loops as $N$ such that 
% 	$N \ll T^{(in)}$. Then we name this method as GDmax-Wrapper, 
% 	where the term GDmax is borrowed from gradient descent ascent (GDmax)
% 	algorithm \cite{GDMAX}.
% \end{Example}

% Consier that Example.(\ref{example_wrapper}) is just an extremely simple case. 
% In practical applications, the computational costs of the wrapper method might be 
% even far greater than this.

\section{Methodology}

In this section, we propose a differentiable wrapper method, i.e., Meta-Wrapper, 
for user interest selection in CTR prediction. Specifically, we first 
illustrate two points about the Meta-Wrapper:
\begin{itemize}
	\item How to recast traditional wrapper method Eq.(\ref{traditional_wrapper}) 
	      to a continuous bilevel optimization problem;
	\item How our $g_{\bm{\phi}}(\cdot)$ and $f_{\bm{\theta}}(\cdot)$ 
	      are jointly trained by using our proposed method.
\end{itemize}  
After that, we theoretically analyse how our proposed method prevents overfitting.
Finally, we detail our network architecture.

% \subsection{Overview of Meta-Wrapper Method}
\subsection{Differentiable Wrapping Operator}
Under the framework of the wrapper method, we model the user 
interest selection as a bilevel optimization Eq.(\ref{traditional_wrapper}). 
Unfortunately, such a problem is hardly solved in an efficient way
since its objective function is non-differentiable. Recalling the 
analysis about our Example.(\ref{example_wrapper}), we know the reason lies in
the inner-level \texttt{argmin} function.
With this obstacle, whenever we compute the outer-level loss, 
$\bm{\theta}$ must be trained from scratch, 
leading to great computational costs. 

However, if we approximate the inner-level \texttt{argmin} as
a differentiable function, then the losses in both levels could be jointly
optimized. This will bring a great improvement in efficiency.
To this end, we regard the learning process of the inner-level problem 
in Eq.(\ref{traditional_wrapper}) as a learnable function from the 
meta-learning \cite{MAML} perspective. By expanding its optimizing rollouts, 
we can recast  Eq.(\ref{traditional_wrapper}) as a
continuous bilevel optimization.

More specifically, we first define a function ${U}_{\bm{\phi}}(\cdot)$ 
so that ${U}_{\bm{\phi}}(\bm{\theta}, \mathcal{D}^{(in)})$ is a 
result of ${N}$-step gradient descent (GD) on dataset 
$\mathcal{D}^{(in)}$ for minimizing the inner loss:
\begin{equation}\label{learning-func}
	\begin{aligned}
		  & {U}_{\bm{\phi}}(\bm{\theta}, \mathcal{D}^{(in)}) 
		=\bm{\theta}^{({N})}, \\
		% &\forall j\in\{1,\dots,{N}\}:
		% \bm{\theta}^{(j)}=
		%   \bm{\theta}^{(j-1)}
		%   -\alpha\nabla_{\bm{\theta}^{(j-1)}}L^{(in)}(\bm{\theta}^{(j-1)}, \bm{\phi})
	\end{aligned}
\end{equation}
where $\bm{\theta}^{({N})}$ is the optimal parameters at inner level.
To calculate $\bm{\theta}^{({N})}$, each GD step is performed as
\begin{equation}\label{GD}
	\begin{aligned}
		% &\forall j\in\{1,\dots,{N}\}:                                                                                                                                                      
		\bm{\theta}^{(j)}=                               
		\bm{\theta}^{(j-1)}                              
		-\beta \nabla_{\bm{\theta}^{(j-1)}}              
		L^{(in)}(\bm{\theta}^{(j-1)}, \bm{\phi}), 
	\end{aligned}
\end{equation}
where each $j\in\{1,\dots,{N}\}$ denotes the step number, thus 
$\bm{\theta}^{(0)}$ represents the initial value of $\bm{\theta}$;
$\beta > 0$ is the inner-level learning rate.

Based on this, the objective function of the wrapper method 
Eq.(\ref{traditional_wrapper}) can be 
recast to the following continuous optimization:
\begin{equation}\label{diff_wrapper}
	\begin{aligned}
		% \bar{\bm{\phi}}= 
		  & \mathop{\arg\min}_{ \bm{\phi} }              
		\ L^{(out)}(
		\bar{\bm{\theta}}(\bm{\phi}), 
		\bm{\phi}),\\
		  & \ \ s.t. \ \bar{\bm{\theta}}(\bm{\phi})= 
		{U}_{\bm{\phi}}(\bm{\theta}, \mathcal{D}^{(in)}), \\
	\end{aligned}
\end{equation}
where the inner-level \texttt{argmin} has been approximated as 
\begin{equation}\label{approx_argmin}
	\mathop{\arg\min}_{ \bm{\theta} }\ 
	L^{(in)}(\bm{\theta}, \bm{\phi}) \approx
	{U}_{\bm{\phi}}(\bm{\theta}, \mathcal{D}^{(in)}). 
\end{equation}

% \subsubsection{Inner-level Solution}
Based on Eq.(\ref{learning-func}) and Eq.(\ref{GD}), the computing process of 
$\bar{\bm{\theta}}(\bm{\phi})=
{U}_{\bm{\phi}}(\bm{\theta}, \mathcal{D}^{(in)})$ 
can be formulated as Alg.\ref{calc_U}.

\begin{algorithm}[h]
	\caption{
		Inner-level Optimization as ${U}_{\bm{\phi}}(\cdot)$ 
	}%算法标题
	\label{calc_U}
	\hspace*{\algorithmicindent} 
	\textbf{Input:} initial value $\bm{\theta}^{(0)}$\\
	\hspace*{\algorithmicindent} 
	\textbf{Input:} Training data $\mathcal{D}^{(in)}$\\
	\hspace*{\algorithmicindent} 
	\textbf{Output:} ${N}$-step GD result $\bm{\theta}^{({N})}$ 
	depending on $\bm{\phi}$
						
	\begin{algorithmic}[1]%一行一个标行号
		% ${U}_{\bm{\phi}}(\bm{\theta}, \mathcal{D}^{(in)})$
		\STATE Detach Operation: $\bm{\theta}':=\bm{\theta}^{(0)}$;
		\FOR{$j \leftarrow 1$ to ${N}$}
		\STATE $\texttt{Grad}(\bm{\theta}'):=
		\nabla_{\bm{\theta}'}
		L^{(in)}(\bm{\theta}', \bm{\phi})$;  
		\STATE $\bm{\theta}^{(j)}:=
		\bm{\theta}' -\beta \cdot \texttt{Grad}(\bm{\theta}')$;
		\STATE $\bm{\theta}':=\bm{\theta}^{(j)}$; 
		\ENDFOR
		\RETURN $\bm{\theta}^{({N})}$;
		\\
	\end{algorithmic}
\end{algorithm}

Notably, Alg.\ref{calc_U} is a functional process where the value of 
$\bm{\theta}$ would not be changed. Each GD-step is actually 
performed on a newly created node in the computational graph.
Hence we can apply the chain rule over its 
nested GD-loops to calculate $\nabla_{\bm{\phi}}{U}_{\bm{\phi}}(
\bm{\theta}, \mathcal{D}^{(in)})$, 
which is shown in Alg.\ref{back_U}.

\begin{algorithm}[h]
	\caption{
		backpropagation of ${U}_{\bm{\phi}}(\cdot)$ 
	}%算法标题
	\label{back_U}
	\hspace*{\algorithmicindent} 
	\textbf{Input:} $\{\bm{\theta}^{({1})},\dots,
	\bm{\theta}^{({N})}\}$\\
	\hspace*{\algorithmicindent} 
	\textbf{Output:} $\nabla_{\bm{\phi}}\left(
	{U}_{\bm{\phi}}(\bm{\theta}, \mathcal{D}^{(in)})
	\right)$
						
	\begin{algorithmic}[1]%一行一个标行号
		% ${U}_{\bm{\phi}}(\bm{\theta}, \mathcal{D}^{(in)})$
		\STATE Set $b:=
		\nabla_{\bm{\phi}}
		{U}_{\bm{\phi}}
		(\bm{\theta}^{({N})}, \bm{\phi})$
		\FOR{$j \leftarrow {N}$ to $1$}
		\STATE Set $a:=\bm{\theta}^{({j})}$;
		\STATE Set $b:=b-
		\beta(\nabla^{2}_{a, \bm{\phi}}{U}_{\bm{\phi}}
		(a, \bm{\phi}))^\top b$;
		\ENDFOR
		\RETURN $b$;
		\\
	\end{algorithmic}
\end{algorithm}

In this way, we can regard ${U}_{\bm{\phi}}(\cdot)$ as 
a learnable meta-model \cite{HyperNetworks} with parameters $\bm{\phi}$. 
The meta-model takes a base predictor as input then generates another one that 
has been sufficiently trained. Under the framework of the wrapper method,
we call this meta-model as our \textbf{differentiable wrapping operator}. 
More intuitively, It is demonstrated as the biggest gray box 
in Fig.(\ref{fig:Method}). Without using additional parameters
other than $\bm{\phi}$, learning
${U}_{\bm{\phi}}(\cdot)$ by minimizing Eq.(\ref{diff_wrapper}) 
is equivalent to learning feature selector $g_{\bm{\phi}}(\cdot)$. 
Therefore, we call such a feature selection strategy as \textbf{Meta-Wrapper}
method and Eq.(\ref{diff_wrapper}) as Meta-Wrapper loss.

\subsection{Jointly Learning With Base Predictor}
Recall that our predictive model consists of two components, i.e., the base
predictor $f_{\bm{\theta}}(\cdot)$ and user interest selector 
$g_{\bm{\phi}}(\cdot)$.
To obtain the completed model, we still have to learn 
parameters $\bm{\theta}$ in addition to $\bm{\phi}$. 

To this end, 
we unify the original cross-entropy loss and 
Eq.(\ref{diff_wrapper}) as the final loss function:
\begin{equation}\label{a_plus_b}
	\begin{aligned}
		% (\bar{\bm{\theta}},\bar{\bm{\phi}})= 
		  & \mathop{\arg\min}_{ (\bm{\theta}, \bm{\phi}) } 
		\ \mathcal{L}(\bm{\theta}, \bm{\phi})=
		L^{(in)}(\bm{\theta}, \bm{\phi}) +  
		\mu L^{(out)}(
		\bar{\bm{\theta}}(\bm{\phi}), 
		\bm{\phi}),\\
		  & \ \ s.t. \ \bar{\bm{\theta}}(\bm{\phi})=       
		{U}_{\bm{\phi}}(\bm{\theta}, \mathcal{D}^{(in)}), \\
	\end{aligned}
\end{equation}
where $\mu\in[0,1]$ is a coefficient to balance the CTR prediction and user
interest selection. If we set $\mu=0$, the Meta-Wrapper loss then disappears and
it degrades into an ordinary CTR prediction method 
with an attentive feature selector.

\begin{figure}[htbp]
	\centering
	\includegraphics[width=\linewidth]{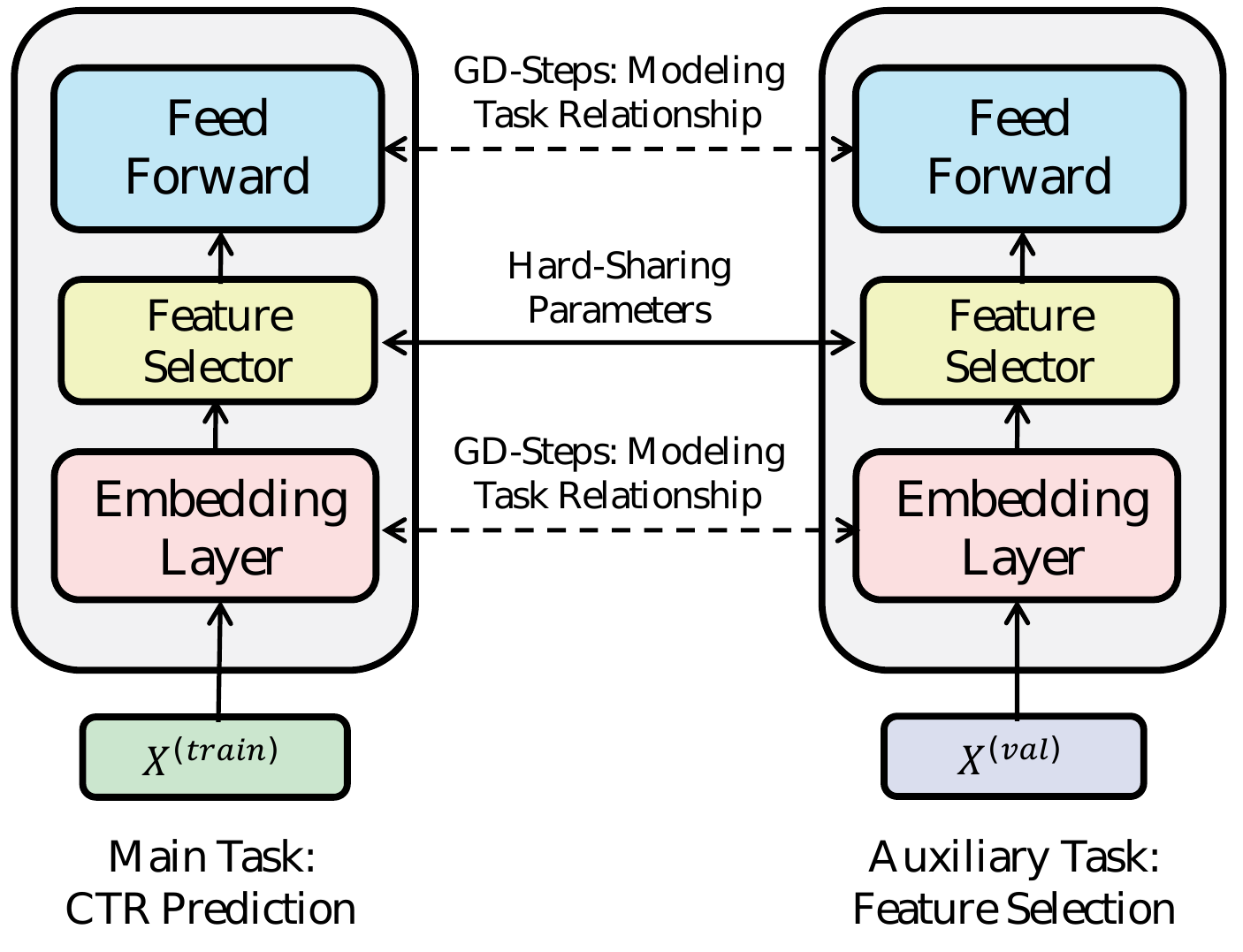}
	\caption{Multi-Task Learning Perspective.} 
	\label{fig:mtl} 
\end{figure}

From the multi-task learning perspective, we can regard CTR prediction as 
our main task and user interest selection as a corresponding auxiliary task.
Each task could transfer its learned knowledge to the other, thus all tasks can be
improved simultaneously. This is shown in Fig.(\ref{fig:mtl}). 
Compared with the traditional wrapper, our proposed method explicitly models the task
relationship by inner GD-loops, which is a more fine-grained manner than 
hard sharing the parameters.

For optimizing Eq.(\ref{a_plus_b}), the first step is to calculate 
$\bar{\bm{\theta}}(\bm{\phi})$ at inner level. After that, 
we plug the result into the outer-level loss and then perform minimization.
Thanks to the differentiability of ${U}_{\bm{\phi}}(\cdot)$, 
this problem can be efficiently solved by adopting mini-batch gradient descent 
\cite{SGD} as our optimizer. Specifically, we can perform inner-level 
optimization as Alg.\ref{calc_U}. By virtue of this, the optimal 
$\bm{\theta}$ and $\bm{\phi}$ can be obtained by minimizing the 
outer-level loss as Alg.\ref{training}, where the $\bm{\gamma}_i$ is the 
outer-level learning rate such that $\lim_{i \rightarrow \infty}\gamma_{i}=0$ and 
$\sum_{i=1}^{\infty}\gamma_{i}=\infty$. 
Meanwhile, Alg.\ref{back_U} performs
as an intermediate node in the computing graph when calculating the gradient
$
\nabla_{(\bm{\theta}, \bm{\phi})}
{\mathcal{L}(\bm{\theta}, \bm{\phi})}
$ in Alg.\ref{training}. Here the 
combination of $\{\mathcal{D}^{(in)}, \mathcal{D}^{(out)}\}$ can be viewed as
a meta-learning task $\mathcal{T}$. We represent 
the distribution of $\mathcal{T}$ as $\mathcal{P}(\mathcal{T})$ for the convenience of 
further analysis.

\begin{algorithm}[h]
	\caption{Solving Eq.(\ref{a_plus_b}) by  Mini-batch GD }%算法标题
	\label{training}
	\hspace*{\algorithmicindent} 
	\textbf{Input:} Entire Inner-level Dataset $\bm{\Omega}^{(in)}$\\
	\hspace*{\algorithmicindent} 
	\textbf{Input:} Entire Outer-level Dataset $\bm{\Omega}^{(out)}$\\
	\hspace*{\algorithmicindent} 
	\textbf{Input:} Initial parameters $\bm{\phi}_{(0)}, \bm{\theta}_{(0)}$\\
	\hspace*{\algorithmicindent} 
	\textbf{Input:} Number of epochs ${T} \in \mathbbm{N}$\\
	\hspace*{\algorithmicindent} 
	\textbf{Output:} 
	Learned parameters 
	$\bm{\phi}_{({T})}, \bm{\theta}_{({T})}$
						
	\begin{algorithmic}[1]%一行一个标行号
											
		\STATE Set  $\bm{\phi}:=\bm{\phi}_{(0)}$;
		\STATE Set $\bm{\theta}:=\bm{\theta}_{(0)}$;
		\FOR{$i \leftarrow 1,2,\ldots, {T} $}
		% \STATE Initialize $\mathbbm{V} \leftarrow \{\}$
		% \WHILE{$\mathbbm{V}\not=\mathbbm{D}$}
		\STATE $\mathcal{D}^{(in)} \leftarrow$  
		Draw a batch from $\bm{\Omega}^{(in)}$;
		\STATE $\mathcal{D}^{(out)} \leftarrow$  
		Draw a batch from $\bm{\Omega}^{(out)}$;
		\STATE 
		$\mathcal{L}(\bm{\theta}, \bm{\phi})
		:=
		L^{(in)}(\bm{\theta}, \bm{\phi}) +  
		\mu L^{(out)}(
		\bar{\bm{\theta}}(\bm{\phi}), 
		\bm{\phi})
		$;
		\STATE 
		$(\bm{\theta}_{(i)}, \bm{\phi}_{(i)})
		:=
		(\bm{\theta}, \bm{\phi}) - 
		\gamma_i \nabla_{(\bm{\theta}, \bm{\phi})}
		{\mathcal{L}(\bm{\theta}, \bm{\phi})}
		$;
		\STATE Set $\bm{\phi}:=\bm{\phi}_{(i)}$;
		\STATE Set $\bm{\theta}:=\bm{\theta}_{(i)}$;
		\ENDFOR
		\RETURN $\bm{\phi}_{({T})}, \bm{\theta}_{({T})}$
		\\
	\end{algorithmic}
\end{algorithm}

\subsection{Convergence Result}
In this section, we provide the convergence result of Alg.\ref{training} under a set
of broad assumptions. We find this algorithm could reach a stationary
point of the objective of our proposed method Eq.(\ref{a_plus_b}).
More formally, the algorithm would find a solution such that 
\begin{equation}\label{converge}
	\lim_{i \rightarrow \infty} 
	\|
	\nabla_{\bm{z}}\mathcal{F}_{i}
	\|^2
	= 0,
\end{equation}
where 
$\Vert \cdot \Vert$ represents the $L_2$-norm;
$\bm{z}=[\bm{\theta}^\top,\bm{\phi}^\top]^\top \in \mathbb{R}^p$ represents 
all of the parameters; $
\bm{z}_{(i)}$ is the value of the parameters after $i$th-step of 
\textbf{outer-level GD}, namely, \textbf{line:$7$ in Alg.\ref{training}};
$\mathcal{F}_{i}$ represents $\mathbb{E}_{p(\mathcal{T})}({\mathcal{L}(\bm{z}_{(i)})})$.
This equation means that there exists
an iteration of Alg.\ref{training}, which approaches a stationary
point to any level of proximity. 
More specifically, we first provide the assumption used for our proofs.
\begin{assumption}\label{lipchiz}
	For any mini-batch data $\mathcal{D}^{(in)} $ or $\mathcal{D}^{(out)} $, 
	$L^{(\cdot)}(\bm{z})$ are second-order differentiable
	as the function of $\bm{z}$, where $L$ is the cross-entropy loss on top of
	a neural network. Given arbitrary 
	$\bm{z},\bm{z}_{w},\bm{z}_{u} \in \mathbb{R}^p$,
	there are constants $C_0, C_1, C_2> 0$
	such that for any mini-batch data, we have
	\begin{enumerate}[label=(\arabic*)] 
		% \item $\Vert L^{(\cdot)}(\bm{z}) \Vert_2 \leq C_0$;
		% \item $\Vert \bm{z} \Vert^2 < C_r$;
		\item $\Vert \nabla_{\bm{z}}  L^{(\cdot)}(\bm{z}) \Vert \leq C_0$;
		\item $\Vert \nabla^2_{\bm{z}} L^{(\cdot)}(\bm{z}) \Vert \leq C_1$;
		\item $\Vert 
				\nabla^2_{\bm{\bm{z}}}L^{(\cdot)}(\bm{z}_{w})-
				\nabla^2_{\bm{\bm{z}}}L^{(\cdot)}(\bm{z}_{u})
			   \Vert \leq C_2 \Vert \bm{z}_{w} - \bm{z}_{u} \Vert$.
	\end{enumerate}
\end{assumption}
With this assumption, we can provide a convergence analysis in
the sense that a stationary point can be achieved within a finite number of 
iterations via Alg.\ref{training}. 
The readers are referred to Appendix for the proof of theorem.
\begin{theorem}[Convergence Result of Alg.\ref{training}]\label{converge_thm}
	Let $\mathcal{L}$ be the loss function defined as Eq.(\ref{a_plus_b}), 
	$L^{(in)}$ and $L^{(out)}$ be cross-entropy loss satisfying 
	Assumption.(\ref{lipchiz}),
	$C_r>0$ be a positive constant.
	Denote parameters
	$\bm{z} \in \mathbb{R}^p$ such that $\Vert \bm{z} \Vert^2 < C_r$,
	$
	C_3=N \beta C_2 (\sqrt{p}+\beta C_1)^{{N}-1}C_0
	$,
	$
	C_{L1}=2 C_1 + \mu C_3
	$,
	$
	C_{L2}=C^2_0 + \mu (\sqrt{p}+\beta C_1)^{{2N}}C_{0}^2
	$,
	$
	C_{L3}=\frac{C_{L1}}{2} (C^2_0 + C_{L2})
	$.
	Moreover, 
	let $\{\bm{z}_{(i)}\}_{i=1}^{\infty}$ be the parameters sequence generated by 
	applying
	Alg.\ref{training} for optimizing $\mathcal{L}(\bm{z})$,
	$\mathcal{F}_{i}=\mathbb{E}_{p(\mathcal{T})}({\mathcal{L}(\bm{z}_{(i)})})$,
	$\{\gamma_{i} > 0 \}_{i=1}^{\infty}$ be the sequence of outer-level learning rate. 
	Then it holds that
	\begin{enumerate}[label=(\arabic*)] 
		\item If $\forall i \in \mathbb{N}: \gamma_i < 2 \frac{C^2_0+C_{L2}}{C_{L3}}$
		, then $\{\mathcal{F}_i\}_{i=1}^{\infty}$ is a
		non-increasing sequence.
		\item If
		% $\lim_{i \rightarrow \infty}\gamma_{i}=0$, 
		% $\sum_{i=1}^{\infty}\gamma_{i}=\infty$ and 
		$\sum_{i=1}^{\infty}\gamma_{i}^2<\infty$, then
		$\lim_{i \rightarrow \infty} 
		\|
		\nabla_{\bm{z}}\mathcal{F}_{i}
		\|^2
		= 0$ and 
		there exists a subsequence of $\{\bm{z}_{s_i}\}_{i=1}^{\infty}$
		that converges to a stationary point.
		\item If
		$\forall k \in \mathbb{N}: \gamma_k = k^{-0.5}$, 
		then for any $\epsilon > 0$ we can get that
		$\min_{0 \leq j < k} \|\nabla_{\bm{z}} 
		{\mathcal{F}_{j}}\|^2
		=\bm{\Omega}(k^{-0.5+\epsilon})$.
	\end{enumerate}
\end{theorem}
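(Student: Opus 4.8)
The plan is to recognize Theorem~\ref{converge_thm} as an instance of the classical convergence theory for (mini-batch) stochastic gradient descent on a smooth, possibly non-convex objective; the only genuinely new work is to verify that the objective $\mathcal{L}$ of Eq.(\ref{a_plus_b}), and hence the population objective $\mathcal{F}(\bm{z})=\mathbb{E}_{p(\mathcal{T})}[\mathcal{L}(\bm{z})]$, is globally $C_{L1}$-gradient-Lipschitz and has stochastic gradients that are uniformly bounded (with $\|\nabla_{\bm{z}}\mathcal{L}\|^2\le C_0^2+C_{L2}$). Once these two structural facts are in hand, items~(1)--(3) all come out of the one-step descent recursion for smooth SGD, so I would establish the smoothness/boundedness first and then turn the crank.

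\textbf{Smoothness and boundedness (the main obstacle).} The difficulty is that $\mathcal{L}$ contains the term $L^{(out)}(\bar{\bm{\theta}}(\bm{\phi}),\bm{\phi})$ with $\bar{\bm{\theta}}(\bm{\phi})=U_{\bm{\phi}}(\bm{\theta},\mathcal{D}^{(in)})$ defined by the $N$-step unrolled recursion Eq.(\ref{GD}). Because $\bm{\theta}^{(0)}$ is detached (line~1 of Alg.\ref{calc_U}), this map depends on the parameters only through $\bm{\phi}$, entering each step via $\nabla_{\bm{\theta}}L^{(in)}(\cdot,\bm{\phi})$. Differentiating Eq.(\ref{GD}) once (precisely the backward pass of Alg.\ref{back_U}) writes $\partial\bar{\bm{\theta}}/\partial\bm{\phi}$ as a sum over $j\in\{1,\dots,N\}$ of a product of $(\mathrm{I}-\beta\nabla^2_{\bm{\theta}}L^{(in)})$-type factors --- each of norm at most $\sqrt{p}+\beta C_1$ by Assumption.(\ref{lipchiz})(2) --- times a cross term of norm at most $\beta C_1$; summing the geometric series gives $\|\partial\bar{\bm{\theta}}/\partial\bm{\phi}\|\le(\sqrt{p}+\beta C_1)^{N}$, which explains the $(\sqrt{p}+\beta C_1)^{2N}$ appearing in $C_{L2}$. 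To obtain the gradient-Lipschitz constant I must differentiate this Jacobian once more, which would ordinarily require a third derivative of $L^{(in)}$; here Assumption.(\ref{lipchiz})(3) --- the $C_2$-Lipschitzness of the Hessian --- is exactly the substitute, and propagating a single perturbation through the $N$ unrolled steps yields a bound of order $N\beta C_2(\sqrt{p}+\beta C_1)^{N-1}C_0=C_3$, which combines with the direct bounds $\|\nabla^2 L^{(\cdot)}\|\le C_1$ into $C_{L1}=2C_1+\mu C_3$. The delicate part is carrying the mean-value estimate through all $N$ GD steps while keeping the exponential-in-$N$ bookkeeping tight, and checking that the two sub-routines (Alg.\ref{calc_U}, Alg.\ref{back_U}) indeed realise these derivatives.

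\textbf{The descent recursion.} With $\mathcal{F}$ being $C_{L1}$-smooth I apply the descent lemma along the update of line~7 of Alg.\ref{training}, $\bm{z}_{(i)}=\bm{z}_{(i-1)}-\gamma_i\nabla_{\bm{z}}\mathcal{L}(\bm{z}_{(i-1)};\mathcal{T}_i)$, and take the conditional expectation over the freshly drawn task $\mathcal{T}_i$; since $\mathbb{E}_{\mathcal{T}}[\nabla_{\bm{z}}\mathcal{L}(\bm{z};\mathcal{T})]=\nabla\mathcal{F}(\bm{z})$ (swapping gradient and expectation is legitimate under the uniform bounds of Assumption.(\ref{lipchiz})) and $\|\nabla_{\bm{z}}\mathcal{L}\|^2\le C_0^2+C_{L2}$, this gives $\mathbb{E}[\mathcal{F}_i]\le\mathbb{E}[\mathcal{F}_{i-1}]-\gamma_i\,\mathbb{E}\|\nabla_{\bm{z}}\mathcal{F}_{i-1}\|^2+C_{L3}\gamma_i^2$ with $C_{L3}=\tfrac{C_{L1}}{2}(C_0^2+C_{L2})$. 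From here: item~(1) follows because the step-size cap $\gamma_i<2(C_0^2+C_{L2})/C_{L3}$ keeps the quadratic remainder $C_{L3}\gamma_i^2$ subordinate to the negative drift, so $\{\mathcal{F}_i\}$ cannot increase. Item~(2): telescoping the recursion and using that cross-entropy is bounded below by $0$ together with $\sum_i\gamma_i^2<\infty$ yields $\sum_i\gamma_i\,\mathbb{E}\|\nabla_{\bm{z}}\mathcal{F}_{i-1}\|^2<\infty$; combined with $\sum_i\gamma_i=\infty$ this forces $\liminf_i\mathbb{E}\|\nabla_{\bm{z}}\mathcal{F}_i\|^2=0$, and since successive gradients satisfy $\|\nabla\mathcal{F}(\bm{z}_{(i)})-\nabla\mathcal{F}(\bm{z}_{(i-1)})\|\le C_{L1}\gamma_i\sqrt{C_0^2+C_{L2}}\to0$ (smoothness plus vanishing step) the standard ``no persistent oscillation'' argument promotes this to $\lim_i\mathbb{E}\|\nabla_{\bm{z}}\mathcal{F}_i\|^2=0$, i.e. Eq.(\ref{converge}); finally, since the iterates stay in the ball $\|\bm{z}\|^2<C_r$, Bolzano--Weierstrass extracts a convergent subsequence whose limit is stationary by continuity of $\nabla\mathcal{F}$, and the conclusion can be strengthened to hold almost surely via the Robbins--Siegmund supermartingale convergence theorem. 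Item~(3): for $\gamma_k=k^{-0.5}$ we have $\sum_{i=1}^{k}\gamma_i^2=\mathcal{O}(\log k)$ and $\sum_{i=1}^{k}\gamma_i=\Theta(\sqrt{k})$, so dividing the telescoped bound by $\sum_{i=1}^{k}\gamma_i$ gives $\min_{0\le j<k}\mathbb{E}\|\nabla_{\bm{z}}\mathcal{F}_j\|^2\le\big(\mathcal{F}_0-\inf\mathcal{F}+C_{L3}\,\mathcal{O}(\log k)\big)/\Theta(\sqrt{k})=\mathcal{O}(k^{-0.5}\log k)$, and absorbing the logarithm into an arbitrarily small $\epsilon>0$ yields the rate stated in item~(3).
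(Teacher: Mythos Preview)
Your proposal is correct and follows essentially the same route as the paper: first establish that $\mathcal{F}$ has $C_{L1}$-Lipschitz gradient by unrolling the $N$ inner GD steps and propagating the Hessian-Lipschitz bound (the paper's Lemma~\ref{bound_meta_grad}), bound the stochastic gradient norm (the paper's Lemma~\ref{bound_grad_norm}), derive the one-step descent recursion and telescope it (the paper's Lemma~\ref{bound_grad_M_2}), and then read off items~(1)--(3). Your treatment of item~(2) is in fact more careful than the paper's---you correctly first obtain $\liminf$ from the summability argument and then upgrade to $\lim$ via the smoothness/vanishing-step oscillation bound, whereas the paper's contradiction argument tacitly negates $\lim=0$ as if it were $\liminf=0$---but the overall strategy is identical.
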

% \begin{figure*}[htbp]
% 	\centering
% 	\subfigure[Meta-Wrapper]{
% 		\begin{minipage}[t]{0.5\linewidth}
% 			\centering
% 			\label{fig:mtl_a}
% 			\includegraphics[scale=0.55, trim=0 0 499 30,clip]{mtl_1.pdf}
% 			%\caption{fig1}
% 		\end{minipage}%
% 	}%
% 	\subfigure[Traditional Wrapper Method]{
% 		\begin{minipage}[t]{0.5\linewidth}
% 			\centering
% 			\label{fig:mtl_b}
% 			\includegraphics[scale=0.55, trim=0 0 499 30,clip]{mtl_2.pdf}
% 			%\caption{fig2}
% 		\end{minipage}%
% 	}%
% 	\centering
					
% 	\caption{Comparison of Knowledge Tansfer Strategy 
% 	between Traditional Wrapper and Meta-Wrapper.}
% 	\label{fig:mtl}
% \end{figure*}

% This will be detailed in the following section.

% Moreover, the embeddings of these two items
% and the outer product go through a concatenating layer. Finally, we use two dense
% layers with \texttt{PRELU} to predict the weight of this clicked item. 

% Notably,
% although the wrapping operator here has a similar architecture with the attention
% layers used by DIN \cite{DIN}, they are actually different in essence. 

\subsection{Implicit Regularization}\label{sec:theory}
% {\color{blue} 
% \begin{mdframed}[hidealllines=true,backgroundcolor=blue!10	,innerleftmargin=3pt,innerrightmargin=3pt,leftmargin=-3pt,rightmargin=-3pt]
In this section, we explore how our proposed Meta-Wrapper loss 
Eq.(\ref{diff_wrapper}) alleviates overfitting 
while learning the user interest selector.

Recall that our model consists of two components 1) base predictor,
and 2) user interest selector, i.e., wrapping operator, that is used to select user 
interested features (items).
We use $\bm{\theta}$ and $\bm{\phi}$ to represent vectorized parameters of 
these two components, respectively. For the convenience of further analysis, 
we define 
$$G_{\bm{\theta}}(\bm{\phi})=\nabla_{\bm{\theta}}{
	L^{(in)}(\bm{\theta}, \bm{\phi}})$$
and 
$$G'_{\bm{\theta}}(\bm{\phi})=\nabla_{\bm{\theta}}{
	L^{(out)}(\bm{\theta}, \bm{\phi}})$$
in this section.

Now let us analyse the Meta-Wrapper loss. With the inner gradient descent 
at the $k$-th step 
\begin{equation}\label{inner_gd}
	\bm{\bm{\theta}}^{(k)}=
	\bm{\bm{\theta}}^{(k-1)}-\beta G_{\bm{\bm{\theta}}^{(k-1)}}(\bm{\phi}),
\end{equation}
the outer loss can be reformulated as
\begin{equation}\label{outer_loss_inner_gd}
	\begin{aligned}
		  & L^{(out)}( 
		\bar{\bm{\theta}}^{(N)}, \bm{\phi})\\
		= &            
		L^{(out)}(\bm{\theta}^{{(N-1)}}-
		\beta G_{\bm{\bm{\theta}}^{{(N-1)}}}(\bm{\phi}), \bm{\phi}) 
	\end{aligned}
\end{equation}
where ${N}$ is the total number of inner updating loops. 

Then we can consider what 
Eq.(\ref{outer_loss_inner_gd}) could do for better resistance towards 
overfitting. We take the first-order Taylor expansion of 
$L^{(out)}(\bar{\bm{\theta}}, \bm{\phi})$ 
with respect to parameter $\bm{\theta}$ on 
$\bm{\theta}=\bm{\theta}^{(N)}$, 
yielding the following equation:
\begin{equation}\label{outer_loss_taylor_1}
	\begin{aligned}
		        & L^{(out)}(\bm{\theta}^{{(N-1)}}- 
		\beta G_{
		\bm{\bm{\theta}}^{{(N-1)}}}
		(\bm{\phi}), \bm{\phi})\\
		\approx &                                             
		L^{(out)}(
		{\bm{\theta}}^{(N-1)}, \bm{\phi})-
		\beta G_{\bm{\theta}^{({N-1})}}^\top (\bm{\phi}) 
		G'_{\bm{\theta}^{({N-1})}}(\bm{\phi}).
	\end{aligned}
\end{equation}
By further expanding Eq.(\ref{outer_loss_taylor_1}) recursively throughout the backpropagation, we ultimately reach to: 
\begin{equation}\label{outer_loss_taylor_n}
	\begin{aligned}
		        & L^{(out)}( 
		\bar{\bm{\theta}}^{(N-1)}, \bm{\phi})-
		\beta G_{\bm{\theta}^{({N-1})}}^\top (\bm{\phi}) 
		G'_{\bm{\theta}^{({N-1})}}(\bm{\phi}) \\
		\approx &            
		L^{(out)}(
		{\bm{\theta}}^{(N-2)}, \bm{\phi})-
		\beta \sum_{j=1}^{2}
		G_{\bm{\theta}^{({N-j})}}^\top (\bm{\phi}) 
		G'_{\bm{\theta}^{({N-j})}}(\bm{\phi}) \\
		\approx &            
		L^{(out)}(
		{\bm{\theta}}^{(N-3)}, \bm{\phi})-
		\beta \sum_{j=1}^{3}
		G_{\bm{\theta}^{({N-j})}}^\top (\bm{\phi}) 
		G'_{\bm{\theta}^{({N-j})}}(\bm{\phi}) \\
		        & \dots      \\
		\approx &            
		L^{(out)}(
		{\bm{\theta}}^{(0)}, \bm{\phi})
		\underbrace{
		- \beta \sum_{j=1}^{{N}}
		G_{\bm{\theta}^{({N-j})}}^\top (\bm{\phi}) 
		G'_{\bm{\theta}^{({N-j})}}(\bm{\phi})}_{\Delta},
	\end{aligned}
\end{equation}
where $\bm{\theta}^{(0)}$ is the inner initial value of 
$\bm{\theta}$ before updating 
the parameters on the current batch. Eq.(\ref{outer_loss_taylor_n}) reveals that the Meta-Wrapper loss could be approximately decomposed as a sum of two 
principled terms. Here the first term simply 
measures the out-of-bag performance of the attentive user interest selector. 
Notably, the second term of Eq.(\ref{outer_loss_taylor_n}) is the sum 
of negative inner products between $\nabla_{\bm{\theta}}{L^{(in)}}$ and 
$\nabla_{\bm{\theta}}{L^{(out)}}$ w.r.t different $\bm{\phi}$. More precisely,  
$L^{(in)}$ and $L^{(out)}$ are cross-entropy losses measured respectively  
on inner- and outer-level datasets. 

In a reasonable case, if both datasets are free of noises and subject 
to the same  distribution, the inner
product should be large since the intrinsic patterns of user interest 
should be dataset-invariant.
However, when noises are presented, the gradients across datasets become 
less similar, leading to a small inner product. In this way, $\Delta$ 
appeared in the last line of Eq.(\ref{outer_loss_taylor_n}) could be 
regarded as an implicit regularization to penalize gradient inconsistency due to the noises. 

Note that once the inner loop is introduced, we have $N \geq 1$, then
$\Delta$ in Eq.(\ref{outer_loss_taylor_n}) becomes nonzero and thus 
starts to alleviate the overfitting. Compared with traditional wrapper method shown in
Example.(\ref{example_wrapper}), we have $N \ll T^{(in)}$. In other words,
we can effectively control overfitting in our meta-wrapper with much smaller inner-GD loops than what 
are required traditionally.
% \end{mdframed}

\subsection{Computational Efficiency}

% \begin{mdframed}[hidealllines=true,backgroundcolor=blue!10	,innerleftmargin=3pt,innerrightmargin=3pt,leftmargin=-3pt,rightmargin=-3pt]
Compared with traditional attention mechanism, the extra computational 
burden of our proposed method lies in the backpropagation of $L^{(out)}$ in Eq.(\ref{a_plus_b}).
Thus we first derive its gradients:

\begin{equation}\label{chain_L_out}
	\begin{aligned}
		& \nabla_{\bm{\phi}}L^{(out)}( 
		  {\bm{\theta}}^{(N)}, \bm{\phi})\\
		= &        
		(\nabla_{\bm{\phi}}{\bm{\theta}}^{(N)})^\top
		\bm{\hat{g}}_{1}(N) + \bm{\hat{g}}_{2}(N)\\
		= &
		\nabla^\top_{\bm{\phi}}[\bm{\theta}^{(N-1)} - \beta \bm{g}_{1}(N-1)]
		\bm{\hat{g}}_{1}(N) + \bm{\hat{g}}_{2}(N) \\
		= &
		\nabla^\top_{\bm{\phi}}[\bm{\theta}^{(0)} - \beta \sum_{k=0}^{N-1} \bm{g}_{1}(k)]
		\bm{\hat{g}}_{1}(N) + \bm{\hat{g}}_{2}(N) \\
		= &
		- \beta \sum_{k=0}^{N-1} \underbrace{\left[ \bm{H}^\top_{1,2}(k)\bm{\hat{g}}_{1}(N) \right]}_{\Lambda_k}
		+ \bm{\hat{g}}_{2}(N) \\
	\end{aligned}
\end{equation}
% \end{mdframed}
% \begin{mdframed}[hidealllines=true,backgroundcolor=blue!10	,innerleftmargin=3pt,innerrightmargin=3pt,leftmargin=-3pt,rightmargin=-3pt]
where
$$\bm{\hat{g}}_{1}(N)=
\frac{\partial L^{(out)}({\bm{\theta}}^{(N)}, \bm{\phi})}{\partial {\bm{\theta}}^{(N)}}
\in \mathbb{R}^{N_\theta}$$
and 
$$\bm{\hat{g}}_{2}(N)=
\frac{\partial L^{(out)}({\bm{\theta}}^{(N)}, \bm{\phi})}{\partial \bm{\phi}}
\in \mathbb{R}^{N_\phi}$$
represent partial gradients at outer level;
$$\bm{g}_{1}(k)=
\frac{\partial L^{(in)}({\bm{\theta}}^{(k)}, \bm{\phi})}{\partial {\bm{\theta}}^{(k)}}
\in \mathbb{R}^{N_\theta}$$
is the partial gradient of ${\bm{\theta}}^{(k)}$ 
for any $k \in \{0, 1, \dots, N-1\}$
at inner level;
$$
\bm{H}_{1,2}(k)=\nabla_{\bm{\phi}}
\left(\frac{\partial L^{(in)}({\bm{\theta}}^{(k)}, \bm{\phi})}{\partial {\bm{\theta}}^{(k)}}
\right)
\in \mathbb{R}^{N_{\theta} \times N_{\phi}}
$$
for any $k \in \{0, 1, \dots, N-1\}$ is a block in Hessian matrix. Moreover, 
we define 
$$\Lambda_k=\bm{H}^\top_{1,2}(k)\bm{\hat{g}}_{1}(N)
\in \mathbb{R}^{N_{\phi}}$$ for the convenience
of further analysis.

Now we discuss the efficiency of computing 
$
\Lambda_k
$ by using the reverse mode of automatic differentiation. 
The automatic differentiation is
widely used in modern deep learning tools like 
TensorFlow \cite{tf} and PyTorch \cite{pytorch}.
Recall that for a differentiable scalar function $f(\bm{x})$, the 
automatic differentiation mechanism could compute $\nabla f(\bm{x})$ 
in a comparable time of computing $f(\bm{x})$ itself.
Such a technique is called \textbf{Cheap Gradient Principle}  \cite{CGP}.
For simplicity, we omit its formal definitions and derivations which can 
be found in the dedicated literature \cite{CGP, reverse_acc}.
Through the lens of this principle, we can further conclude that
$\Lambda_k$ could be computed in an efficient manner.
Specifically, given an arbitrary $\bm{v} \in \mathbb{R}^{N_\theta}$,
we know the Hessian-vector product 
$\bm{H}^\top_{1,2}(k)\bm{v}$ is the gradient w.r.t $\bm{\phi}$ of 
the scalar function $\bm{g}_{1}(k)^\top\bm{v}$.
Thus $\bm{H}^\top_{1,2}(k)\bm{v}$ can be directly evaluated in a 
comparable time of $\bm{g}^\top_{1}(k)\bm{v}$ without 
explicitly constructing Hessian matrix. 
Substituting $\bm{v}$ for $\bm{\hat{g}}_{1}(N)$, 
we can conclude that $\Lambda_k$ could be computed in comparable time
with  $\bm{g}^\top_{1}(k)\bm{\hat{g}}_{1}(N)$.

Based on this, we can analyse the time complexity of Alg.\ref{training}.
Let $\mathcal{O}(C)$ denote the complexity for computing 
$L^{(in)}$ or $L^{(out)}$, then
the gradients and Hessian-vector product $\Lambda_k$ can be obtained 
within $\mathcal{O}(C)$ and $\mathcal{O}(N_\theta + C)$, respectively.
In this way, the time complexity of calculating $\nabla_{\bm{\phi}}L^{(out)}( 
{\bm{\theta}}^{(N)}(\bm{\phi}), \bm{\phi})$ is 
$\mathcal{O}(N \cdot N_\theta + N \cdot C + N_\phi)$, where
$N$ is the number of inner-GD loops.
Thus the overall gradients of $\mathcal{L}$ can be calculated
within 
$\mathcal{O}(N \cdot N_\theta + N \cdot C + N_\phi + C)$
which is no greater than
$\mathcal{O}(N \cdot N_p + N \cdot C + C)$.
Then the complexity of 
each outer-GD step is no greater than
$\mathcal{O}(N \cdot N_p + N \cdot C + N_p + C)$
where $N_p = N_\theta + N_\phi$.
Similarly, 
for traditional single-level optimization strategies, 
the time complexity of each outer GD 
would be $\mathcal{O}(N_p + C)$. Recall that $N$ can be set as a very 
small integer. Hence, the proposed method could achieve 
comparable computational efficiency with traditional attention mechanisms
during the learning process.
% \end{mdframed} 

\subsection{Network Structure}\label{sec:net_struct}
As is mentioned in the previous sections, our model consists of two components:
1) the base predictor and 2) the feature selector, namely the wrapping operator.
In this section, we will detail the model architecture of both components
respectively. 

In this paper, we implement the base predictor as a simple neural
network, with an embedding layer and multiple
dense layers, where the parameters are denoted as $\bm{\theta}$ as is 
mentioned in previous sections. 
More specifically, the base predictor consists of 
the following components:
\begin{itemize}
	\item \textbf{Embedding Layer:} The sparse features, for each instance,
	are first transformed into low dimensional vectors by an embedding layer.
	For each feature $i$, the model uses a $K$-dimensional embedding vector 
	$\bm{\theta}^{E}_{v_{.}}\in\mathbb{R}^{{K}}$ as its representation.
	\item \textbf{User Modeling Layer:} Since the users have various numbers 
	of clicked items, the length of behavioral sequence is not fixed. 
	To handle this situation, we resort to a pooling layer which produces the
	fusion of user's behavioral features. This layer takes embeddings of 
	behavioral features as inputs, and then outputs the weighted sum of the
	embeddings as the user's representation. Here the weights are relevant scores
	between features and target item, which is generated by feature selector.
	\item \textbf{Dense Layers:} The user representation is concatenated with 
	other feature embeddings, going through multiple fully connected layers with
	\texttt{Sigmoid} activation. The last layer outputs the predicted CTR.
\end{itemize}

Similar to other works \cite{DIN, DIEN}, the feature selector is also implemented 
as a neural network, where the trainable parameters are represented as $\bm{\phi}$. 
The network consists of the following components:
\begin{itemize}
	\item \textbf{Interaction Layer:} 
	This layer takes the embeddings of a behavioral feature and 
	the target item as inputs. After that, it outputs the element-wise subtraction
	and production of these two embeddings. These operations could
	capture the second-order information of each feature and the target.
 	\item \textbf{Concatenation Layer:} 
	 In this layer, the model concatenates four vectors: 1) behavioral feature embedding,
	 2) target item embedding, 3) the subtraction output by the previous layer, and 
	 4) the production output by the previous layer.
	\item \textbf{Dense Layers:} 
	Going through multiple dense layers with \texttt{Sigmoid} activation, the model
	outputs the relevance score between a behavioral feature and the target item.
\end{itemize}

% \begin{figure*}
% 	\centering
% 	\subfigure[MovieLens Dataset]{
% 		\begin{minipage}[t]{0.33\linewidth}
% 			\centering
% 			\label{fig:mvlens}
% 			\includegraphics[scale=0.44, trim=20 0 10 0, clip]{mvlens.pdf}
% 			%\caption{fig1}
% 		\end{minipage}%
% 	}%
% 	\subfigure[Amazon Dataset]{
% 		\begin{minipage}[t]{0.33\linewidth}
% 			\centering
% 			\label{fig:amazon}
% 			\includegraphics[scale=0.44, trim=20 0 10 0, clip]{amazon.pdf}
% 			%\caption{fig2}
% 		\end{minipage}%
% 	}%
% 	\subfigure[Taobao Dataset]{
% 		\begin{minipage}[t]{0.33\linewidth}
% 			\centering
% 			\label{fig:taobao}
% 			\includegraphics[scale=0.44, trim=20 0 10 0, clip]{taobao.pdf}
% 			%\caption{fig2}
% 		\end{minipage}%
% 	}%
% 	\centering
% 	\caption{Performance Comparison on Public Datasets.}
% 	\label{fig:experiments}
% \end{figure*}

% \begin{figure*}
% 	\centering
% 	\subfloat[MovieLens Dataset]{
% 		\includegraphics[scale=0.42, trim=8 0 6 5, clip]{mvlens.pdf}} 
% 	\subfloat[Amazon Dataset]{
% 		\includegraphics[scale=0.42, trim=8 0 6 5, clip]{amazon.pdf}} 
% 	\subfloat[Taobao Dataset]{
% 		\includegraphics[scale=0.42, trim=8 0 6 5, clip]{taobao.pdf}}
% 	\caption{ Performance Comparison on Public Datasets.}
%   \end{figure*}

\section{Experiments}

\begin{table*}[htbp]
	% \begin{mdframed}[hidealllines=true,backgroundcolor=red!10	,innerleftmargin=3pt,innerrightmargin=3pt,leftmargin=-3pt,rightmargin=-3pt]
	\centering
	\caption{Performance Comparison On Public Datasets.}
	\setlength{\tabcolsep}{2.81mm}
	  \begin{tabular}{lrr|rr|rr|rr|rr}
			& \multicolumn{2}{c}{Electronics} & \multicolumn{2}{c}{Books} & \multicolumn{2}{c}{Games} & \multicolumn{2}{c}{Taobao} & \multicolumn{2}{c}{Movielens} \\
	  \midrule
			& AUC   & \multicolumn{1}{r}{Impr} & AUC   & \multicolumn{1}{r}{Impr} & AUC   & \multicolumn{1}{r}{Impr} & AUC   & \multicolumn{1}{r}{Impr} & AUC   & Impr \\
	  \midrule
	  Base  & \cellcolor[rgb]{ 1,  .976,  .953}0.7524 & 0.00\% & \cellcolor[rgb]{ 1,  .984,  .922}0.7616 & 0.00\% & \cellcolor[rgb]{ .922,  .984,  .961}0.7199 & 0.00\% & \cellcolor[rgb]{ .914,  .965,  .992}0.8298 & 0.00\% & \cellcolor[rgb]{ .914,  .941,  .969}0.8730 & 0.00\% \\
	  Wide\&Deep    & \cellcolor[rgb]{ 1,  .776,  .576}0.7937 & 16.36\% & \cellcolor[rgb]{ 1,  .937,  .706}0.7739 & 4.70\% & \cellcolor[rgb]{ .647,  .871,  .788}0.7404 & 9.32\% & \cellcolor[rgb]{ .788,  .902,  .965}0.8828 & 16.07\% & \cellcolor[rgb]{ .902,  .933,  .965}0.8739 & 0.24\% \\
	  PNN   & \cellcolor[rgb]{ 1,  .773,  .576}0.7940 & 16.48\% & \cellcolor[rgb]{ 1,  .949,  .765}0.7707 & 3.48\% & \cellcolor[rgb]{ .698,  .894,  .82}0.7384 & 8.43\% & \cellcolor[rgb]{ .71,  .863,  .949}0.8857 & 16.95\% & \cellcolor[rgb]{  .902,  .933,  .965}0.8739 & 0.24\% \\
	  DeepFM   & \cellcolor[rgb]{ 1,  .757,  .541}0.7979 & 18.03\% & \cellcolor[rgb]{ 1,  .961,  .816}0.7678 & 2.37\% & \cellcolor[rgb]{ .588,  .847,  .753}0.7425 & 10.28\% & \cellcolor[rgb]{ .804,  .91,  .969}0.8821 & 15.86\% & \cellcolor[rgb]{ .875,  .914,  .953}0.8754 & 0.64\% \\
	  xDeepFM  & \cellcolor[rgb]{ 1,  .843,  .702}0.7800 & 10.94\% & \cellcolor[rgb]{ 1,  .976,  .882}0.7640 & 0.92\% & \cellcolor[rgb]{ .922,  .984,  .961}0.7203 & 0.18\% & \cellcolor[rgb]{ .914,  .965,  .992}0.8459 & 4.88\% & \cellcolor[rgb]{ .78,  .851,  .918}0.8810 & 2.14\% \\
	  FGCNN & \cellcolor[rgb]{ 1,  .863,  .741}0.7757 & 9.23\% & \cellcolor[rgb]{ 1,  .984,  .922}0.7649 & 1.26\% & \cellcolor[rgb]{ .922,  .984,  .961}0.7229 & 1.36\% & \cellcolor[rgb]{ .914,  .965,  .992}0.8532 & 7.10\% & \cellcolor[rgb]{ .871,  .914,  .953}0.8757 & 0.72\% \\
	  DIN   & \cellcolor[rgb]{ 1,  .745,  .522}0.7999 & 18.82\% & \cellcolor[rgb]{ .996,  .906,  .569}0.7818 & 7.72\% & \cellcolor[rgb]{ .384,  .765,  .624}0.7502 & 13.78\% & \cellcolor[rgb]{ .447,  .729,  .89}0.8955 & 19.92\% & \cellcolor[rgb]{ .808,  .871,  .929}0.8794 & 1.72\% \\
	  DIEN  & \cellcolor[rgb]{ 1,  .753,  .537}0.7981 & 18.11\% & \cellcolor[rgb]{ 1,  .941,  .733}0.7724 & 4.13\% & \cellcolor[rgb]{ .388,  .765,  .624}0.7501 & 13.73\% & \cellcolor[rgb]{ .416,  .714,  .882}0.8967 & 20.29\% & \cellcolor[rgb]{ .584,  .722,  .839}0.8925 & 5.23\% \\
	  DMIN  & \cellcolor[rgb]{ 1,  .694,  .427}0.8171 & 25.63\% & \cellcolor[rgb]{ .996,  .914,  .604}0.7799 & 7.00\% & \cellcolor[rgb]{ .4,  .769,  .631}0.7497 & 13.55\% & \cellcolor[rgb]{ .78,  .898,  .965}0.8831 & 16.16\% & \cellcolor[rgb]{ .588,  .722,  .839}0.8924 & 5.20\% \\
	  AutoFIS    & \cellcolor[rgb]{ 1,  .773,  .573}0.7943 & 16.60\% & \cellcolor[rgb]{ .996,  .918,  .62}0.7789 & 6.61\% & \cellcolor[rgb]{ .431,  .784,  .651}0.7484 & 12.96\% & \cellcolor[rgb]{ .82,  .918,  .973}0.8816 & 15.71\% & \cellcolor[rgb]{ .902,  .933,  .965}0.8739 & 0.24\% \\
	  AutoGroup    & \cellcolor[rgb]{ 1,  .773,  .573}0.7944 & 16.64\% & \cellcolor[rgb]{ .996,  .918,  .62}0.7790 & 6.65\% & \cellcolor[rgb]{ .588,  .847,  .753}0.7425 & 10.28\% & \cellcolor[rgb]{ .804,  .91,  .969}0.8821 & 15.86\% & \cellcolor[rgb]{ .886,  .922,  .957}0.8748 & 0.48\% \\
	  Ours  & \cellcolor[rgb]{ 1,  .694,  .427}0.8201 & 26.82\% & \cellcolor[rgb]{ .992,  .871,  .424}0.8237 & 23.74\% & \cellcolor[rgb]{ .31,  .733,  .576}0.7530 & 15.05\% & \cellcolor[rgb]{ .298,  .655,  .855}0.9010 & 21.59\% & \cellcolor[rgb]{ .408,  .6,  .769}0.9030 & 8.04\% \\
	  \bottomrule
	  \end{tabular}%
	\label{tab:perform_tab}%
	% \end{mdframed}
\end{table*}%

In this section, we aim at answering the following question:
\begin{description}
	\item[\textbf{RQ1}:] How is the performance of Meta-Wrapper compared with 
	existing CTR prediction models?
	\item[\textbf{RQ2}:] Do these improvements come from our proposed
	feature selection method? 
	\item[\textbf{RQ3}:] Does Meta-Wrapper have better generalization
	performance as our theoretical analysis?
	\item[\textbf{RQ4}:] 
	% \begin{mdframed}[hidealllines=true,backgroundcolor=blue!10	,innerleftmargin=3pt,innerrightmargin=3pt,leftmargin=-3pt,rightmargin=-3pt]
		Does Meta-Wrapper have a comparable efficiency as other popular methods? 
	% \end{mdframed} 
	\item[\textbf{RQ5}:] How do the hyperparameters affect the performance of 
	Meta-Wrapper?
\end{description}
To this end, we first introduce the datasets and our experimental setting. 
After that, we elaborate on the evaluation metric and competitors. 
Finally, we demonstrate our experimental results on three public datasets, 
showing the effectiveness of our proposed method.

\begin{table}
	% \begin{mdframed}[hidealllines=true,backgroundcolor=red!10	,innerleftmargin=3pt,innerrightmargin=3pt,leftmargin=-3pt,rightmargin=-3pt]
	\centering
	\small
	\caption{Statistics of Datasets.}
	\begin{tabular*}{\hsize}{@{}@{\extracolsep{\fill}}lcccc@{}}
		% \toprule
		\multirow{2}[2]{*}{Dataset}   & \multirow{2}[2]{*}{\#Users} & \multirow{2}[2]{*}{\#Items}   & \multirow{2}[2]{*}{\#Categories} & \multirow{2}[2]{*}{\#Instances} \\
		                              &                             &                               &                                  &                                 \\
		\toprule
		\multirow{2}[1]{*}{MovieLens} & \multirow{2}[1]{*}{6,040}   & \multirow{2}[1]{*}{3,900}     & \multirow{2}[1]{*}{20}           & \multirow{2}[1]{*}{1,000,209}   \\
		                              &                             &                               &                                  &                                 \\
		\multirow{2}[1]{*}{Electronic}    & \multirow{2}[1]{*}{192,403} & \multirow{2}[1]{*}{63,001}    & \multirow{2}[1]{*}{801}          & \multirow{2}[1]{*}{1,689,188}   \\
		                              &                             &                               &                                  &                                 \\
		\multirow{2}[1]{*}{Taobao}    & \multirow{2}[1]{*}{987,994} & \multirow{2}[1]{*}{4,162,024} & \multirow{2}[1]{*}{9,439}        & \multirow{2}[1]{*}{100,150,807} \\
		                              &                             &                               &                                  &                                 \\
		\multirow{2}[1]{*}{Games}    & \multirow{2}[1]{*}{24,303} & \multirow{2}[1]{*}{10,672} & \multirow{2}[1]{*}{68}        & \multirow{2}[1]{*}{231,780} \\
		                              &                             &                               &                                  &                                 \\
		\multirow{2}[1]{*}{Books}    & \multirow{2}[1]{*}{603,668} & \multirow{2}[1]{*}{367,982} & \multirow{2}[1]{*}{1,579}        & \multirow{2}[1]{*}{8,898,041} \\
		                              &                             &                               &                                  &                                 \\
		
		\bottomrule
	\end{tabular*}%
	\label{tab:data}%
	% \end{mdframed}
\end{table}%

\subsection{Experimental Setting}

\subsubsection{Datasets}

In this paper, experiments are conducted on the following three public datasets 
with rich user behaviors. Moreover, the statistics of these datasets are shown in 
Tab.\ref{tab:data}.

\noindent
\textbf{MovieLens Dataset}.
It refers to MovieLens-1M dataset which is a
popular benchmark for CTR prediction \cite{DIN, DIEN}.
It contains 1,000,209 anonymous ratings of approximately 3,900 movies 
made by 6,040 MovieLens users who joined MovieLens. The item features
include genres, titles and descriptions. In our experiments, training data 
is generated with user interacted items and their categories for each user. 

% \begin{mdframed}[hidealllines=true,backgroundcolor=blue!10	,innerleftmargin=3pt,innerrightmargin=3pt,leftmargin=-3pt,rightmargin=-3pt]
\noindent
\textbf{Electronics Dataset}. 
This dataset is a collection of user logs over online products (items) 
from Amazon platform, which consists of product reviews and product metadata. 
In the field of recommender systems, this dataset is often selected as a benchmark 
in CTR prediction task \cite{DIN}. In this paper, we conduct experiments on a subset named 
Electronics. The subset contains about 200000 users, 60000 items, 800 categories, 
and 1700000 instances. Features include item ID, category ID, user-reviewed 
items list and category list. 

\noindent
\textbf{Books Dataset}. 
This subset is also collected from Amazon, which has the same data structure as Electronics. 
In some recent work \cite{DIEN, DSIN}, this dataset is also selected as a benchmark. 
Books subset contains about 600000 users, 400000 items, 1500 categories, 
and 9000000 instances. 

\noindent
\textbf{Games Dataset}. 
This is another subset of Amazon, which has the same data structure as Books and Electronics. 
Games subset contains about 20000 users, 10000 items, 70 categories, 
and 200000 instances. 
% \end{mdframed} 

%\textbf{MovieLens Dataset}. This dataset contains about 100000 users, 30000 movies, 20 categories and 20 million instances. To make it suitable for CTR prediction task, we transform it into a binary classification data. Original user rating of the movies is continuous value ranging from 0 to 5. We label the samples with rating of 4 and 5 to be positive and the rest to be negative. We segment the datobtainedto training and testing dataset based on userID. Among all the users, of which 100,000 are randomly selected into training
%set and the rest into the test set. The task is to predict whether user will
%rate a given movie to be above 3 (positive label) based on user click history. Features include movie ID, movie category ID and the list of the user rated movies. 
\noindent
\textbf{Taobao Dataset}. This is a dataset of user behaviors from the commercial 
platform of Taobao. It contains about 1 million randomly selected users with 
rich behaviors. Each instance in this 
dataset represents a specific user-item interaction, which involves user ID, 
item ID, item category ID, behavior type and timestamp. The behavior types 
include click, purchasing, adding an item to the shopping cart and favoring 
an item. In the experiments of this paper, we only take the click-behavior 
to model the user interests.

Electronics, Books, Games and Taobao are typical datasets collected from real-world applications. 
In these datasets, some users are inactive and produce relatively 
sparse behaviors during this long time range. As for MovieLens, 
though it has fewer instances, behavioral 
records for each user are even denser than the other two datasets.

\subsubsection{Competitors}

\begin{figure*}[htbp]
	% \begin{mdframed}[hidealllines=true,backgroundcolor=red!10	,innerleftmargin=3pt,innerrightmargin=3pt,leftmargin=-3pt,rightmargin=-3pt]
	\centering
	\subfloat[Electronics]{
		\includegraphics[scale=0.43, trim=0 0 6 0, clip]{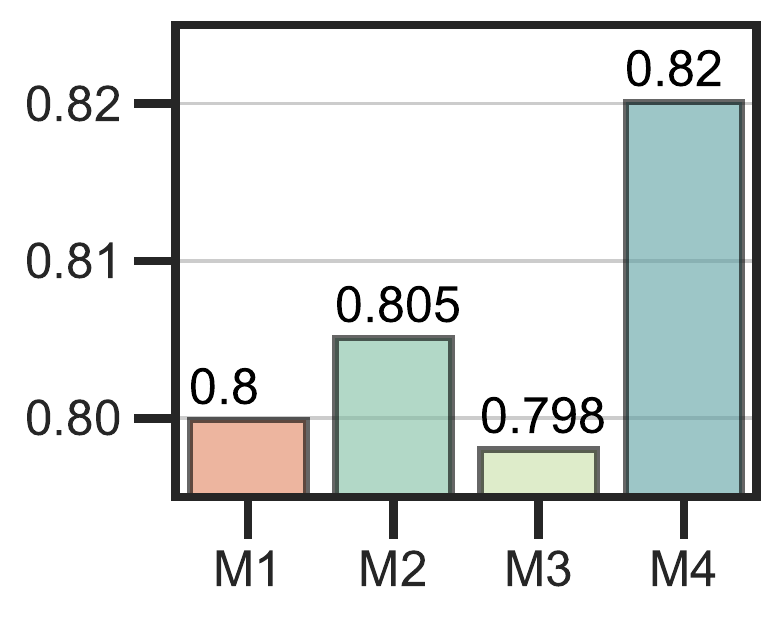}} 
	\subfloat[Books]{
		\includegraphics[scale=0.43, trim=0 0 6 0, clip]{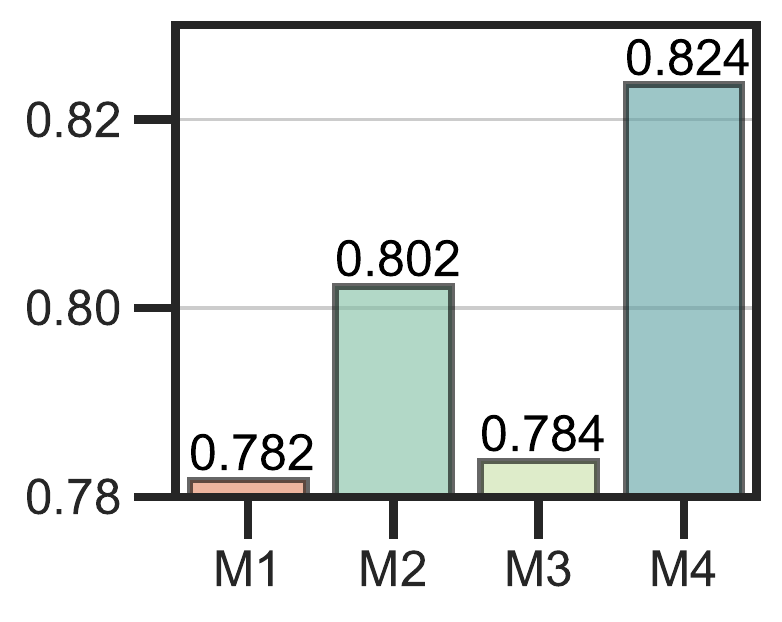}}
	\subfloat[Games]{
		\includegraphics[scale=0.43, trim=0 0 6 0, clip]{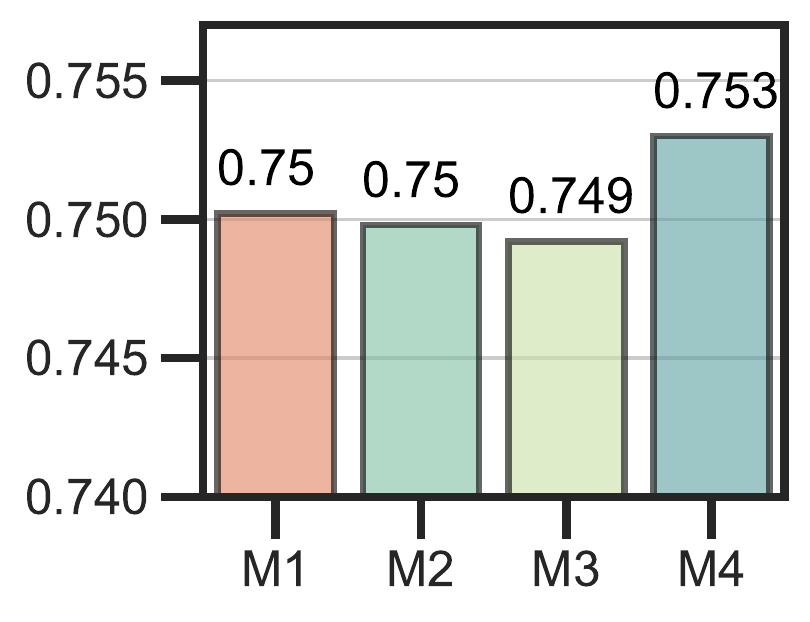}} 
	\subfloat[Taobao]{
		\includegraphics[scale=0.43, trim=0 0 6 0, clip]{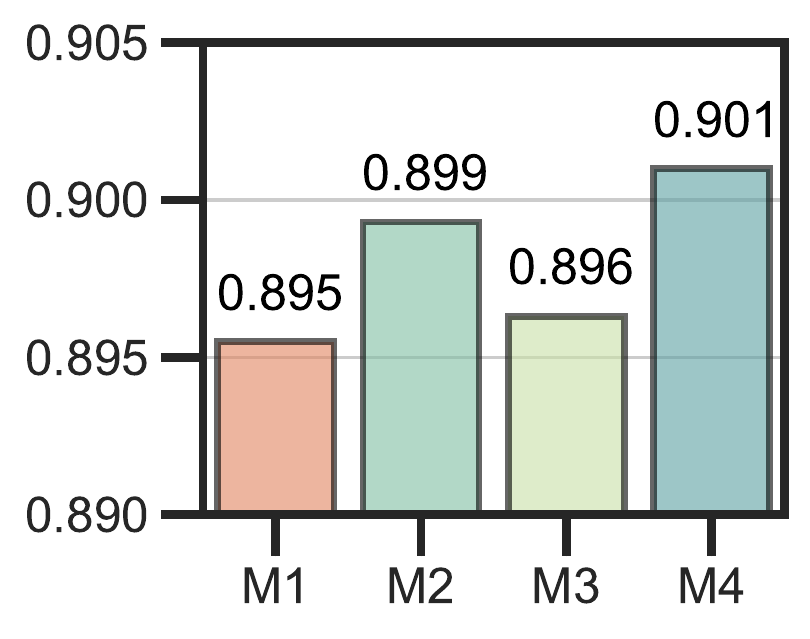}}
	\subfloat[MovieLens]{
		\includegraphics[scale=0.43, trim=0 0 6 0, clip]{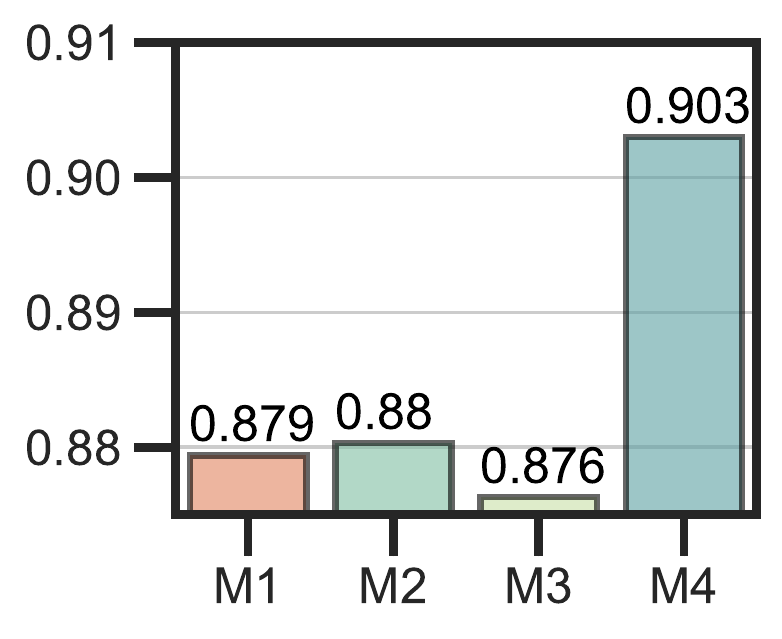}} \\
	\caption{Ablation Study On Public Datasets.} 
	\label{fig:albation}
	% \end{mdframed}
\end{figure*}

In this paper, the following models are selected to make comparisons with our 
proposed method, i.e., Meta-Wrapper:
% \begin{mdframed}[hidealllines=true,backgroundcolor=blue!10	,innerleftmargin=3pt,innerrightmargin=3pt,leftmargin=-3pt,rightmargin=-3pt]
\begin{itemize}
	\item \textbf{Base}. This model uses the same Embedding+MLP 
		  architecture as the base predictor introduced in the preceding section,
		  where the user interest is simply modeled as the sum of clicked-item 
		  embeddings. This model is selected as a baseline in many 
		  studies such as \cite{DIN, DIEN, DMIN}.
	\item \textbf{Wide\&Deep}\cite{WD}. Wide\&Deep model is widely accepted in real-world
		  applications. 
	      It consists of two components: 1) wide module, which handles
		  the manually designed features by a logistic regression model, 
		  2) deep module,
	      which is equivalent
	      to the BaseModel.
	      Following the practice in \cite{DFM}, we take a cross-product 
	      of all features as wide inputs.
	\item \textbf{PNN}\cite{PNN}. PNN can be regarded as an improved version of 
	      BaseModel by adding a product layer after the embedding layer to model 
	      high-order feature interactions. It first inputs the feature embeddings into a 
	      dense layer and a product layer, then concatenates them together and uses 
	      other dense layers to get the output.
	\item \textbf{DeepFM}\cite{DFM}. It takes factorization machines (FM) as a wide 
	      module in Wide\&Deep to reduce the feature engineering efforts. It inputs 
	      feature embeddings into an FM and a deep component and then concatenates 
	      their outputs to the final output through a dense layer.
	\item \textbf{DIN}\cite{DIN}. DIN applies the attention mechanism on top of 
	      the BaseModel for selecting relevant features in user's behaviors. The attentive 
	      feature selector has the same architecture as ours introduced in the 
		  proceeding section.
	\item \textbf{DIEN}\cite{DIEN}. DIEN can be considered as an improved version 
		  of DIN. It uses GRU with an attentional updating gate to model the
		  evolution of user interest.
	\item \textbf{DMIN}\cite{DMIN}. DMIN uses multi-head self-attention to
		  model the interest evolution and extract 
		  multiple user interests within behavioral sequences. It is more efficient
		  at inference stage compared with DIEN.
	\item \textbf{xDeepFM}\cite{XDFM}. 
		  xDeepFM uses Compressed Interaction Network (CIN) 
		  to model the feature crossing. This approach could describe high-order feature
		  interactions in an explicit way. 
	\item \textbf{FGCNN}\cite{FGCNN}. FGCNN uses CNN to automatically generate new features,
		  thereby augmenting feature space. 
		  This can be regarded as novel method to model the high-order feature-crossing.
	\item \textbf{AutoFIS}\cite{AutoFIS}. AutoFIS is proposed to automatically select feature
		  interactions. In this paper, we use this strategy to select user behavioral features.
	\item \textbf{AutoGroup}\cite{AutoGroup} AutoGroup is also a feature selection based method.
	      It first aggregates features into groups, and then applies feature selection on top of
		  these groups. We also use this strategy to process user behavioral features in our 
		  experiments. 
\end{itemize}
% \end{mdframed}
\noindent
For all methods, two dense 
layers of the predictor are each with (80, 40) units. For DIN and our proposed
method, the two dense layers in feature selector have 80 and 40 units, respectively.
We adopt Mini-batch GD \cite{SGD} as the optimizer for all methods,
where the learning rate is dynamically scheduled by an exponential decay.
Meanwhile, the number of epochs is set to 30.
For fair comparison, the following hyperparameters are tuned by grid search:
learning rate $\gamma$ in $\{1, 0.1, 0.01, 0.0001, 1.2, 0.12, 0.012, 0.00012\}$; 
% learning rate decay rate in \{0.1, 0.5, 0.9\};
% learning rate decay loops in \{4000, 4200, 4400, 4600, 4800, 5000\};
batch size in $\{32, 64, 128, 256, 512, 1024\}$;
weight decay rate in $\{0, 1e-8, 1e-7, 1e-6, 1e-5, 1e-4\}$;
embedding dimension in $\{32, 48, 64, 96, 128\}$.
For Meta-Wrapper in Eq.(\ref{a_plus_b}), we additionally tune 
$\mu$ in $\{0.2, 0.4, 0.6, 0.8\}$,
$\beta$ in $\{0.0001, 0.001, 0.01, 1\}$.
Moreover, we set ${N}=1$ when comparing with other methods, but we 
tune ${N}$ in $\{1, 2, 3\}$ during the hyperparameter study.

% Besides the base model, Wide\&Deep is a representation method with manually feature engineering; PNN and DeepFM are selected as a representation of the automatic feature interaction method; DIN uses feature selection method to capture user interest. These methods apply different feature engineering methods on the base predictor. Therefore, our experimental results could evaluate the goodness of different feature engineering methods.

\subsubsection{Data Splitting}

% \begin{mdframed}[hidealllines=true,backgroundcolor=blue!10	,innerleftmargin=3pt,innerrightmargin=3pt,leftmargin=-3pt,rightmargin=-3pt]
To simulate the experimental setting, we first sort the clicked records for each 
user by the timestamp and
construct the training, validation and test set. 
Specifically, given ${T_u}$ behaviors 
for each user, we take 
last records as the positive samples in test set. 
From the rest records, we 
use the last one
for each user as positive samples in validation set, while the remaining ones 
belong to the training set. 
For those methods needing out-of-bag data, we randomly split the training set 
into two parts at each epoch. Specifically, we regard $80\%$ of the samples as
$\mathcal{D}^{(in)}$ in previous sections, while the rest $20\%$ of the samples 
are selected as $\mathcal{D}^{(out)}$.
In each dataset, the items of positive samples are randomly replaced with 
another non-clicked item to construct the negative samples. 
In our experiment, each model is trained on the training set and its
hyperparameters are tuned on the validation set. 
Notably, the validation set is not directly used to update parameters.
% \end{mdframed} 

\begin{figure*}[htbp]
	% \begin{mdframed}[hidealllines=true,backgroundcolor=red!10	,innerleftmargin=3pt,innerrightmargin=3pt,leftmargin=-3pt,rightmargin=-3pt]
	\centering
	\subfloat[Training Time on Electronics]{
		\includegraphics[scale=0.43, trim=20 0 5 0, clip]{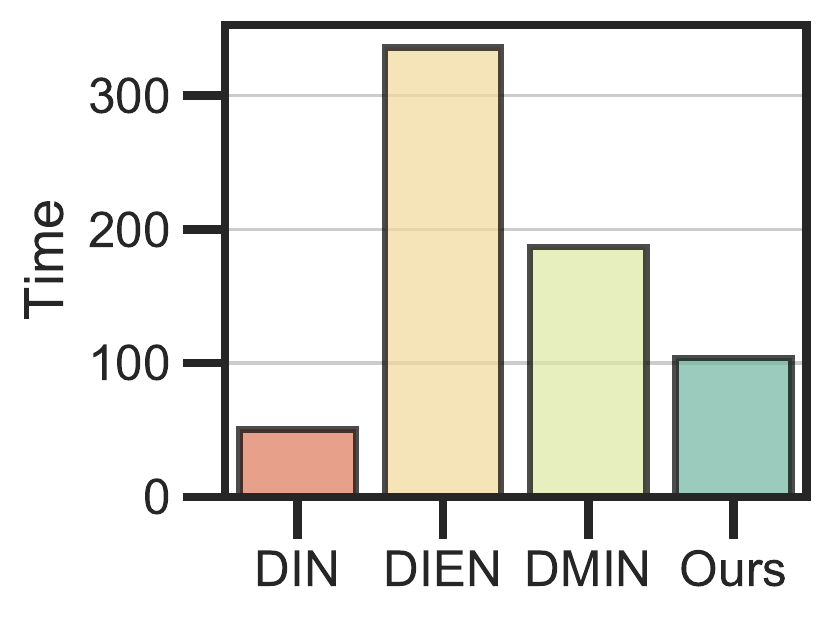}} 
	\subfloat[Training Time on \\Books]{
		\includegraphics[scale=0.43, trim=20 0 5 0, clip]{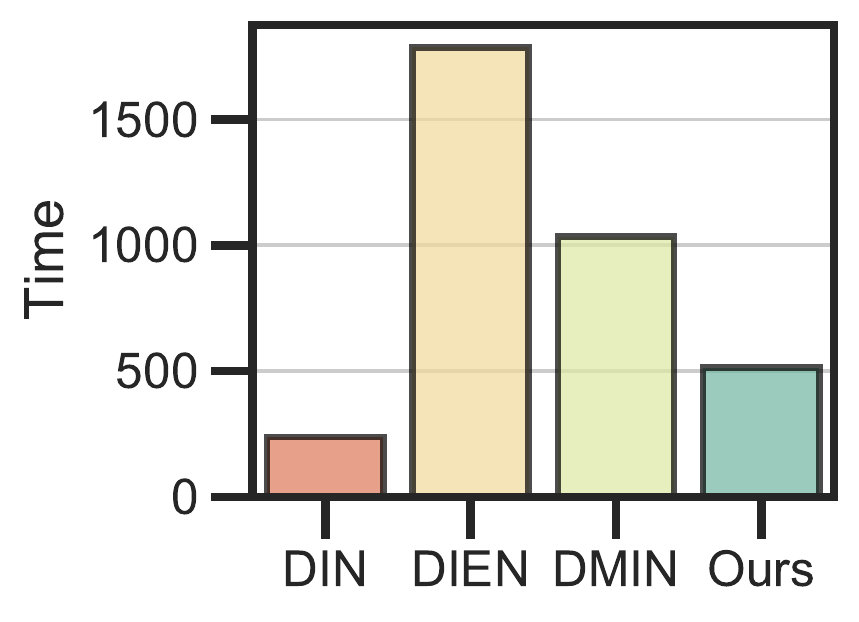}}
	\subfloat[Training Time on Games]{
		\includegraphics[scale=0.43, trim=20 0 5 0, clip]{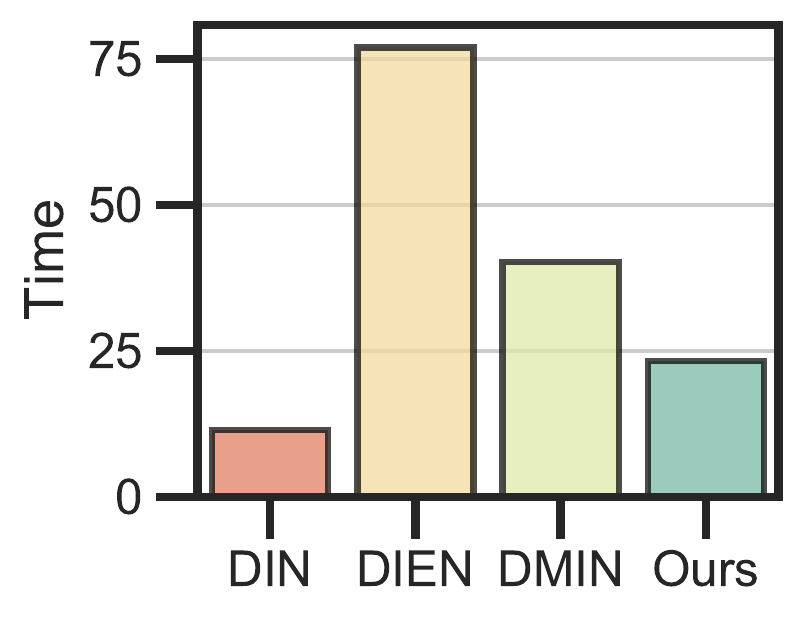}} 
	\subfloat[Training Time on Taobao]{
		\includegraphics[scale=0.43, trim=20 0 5 0, clip]{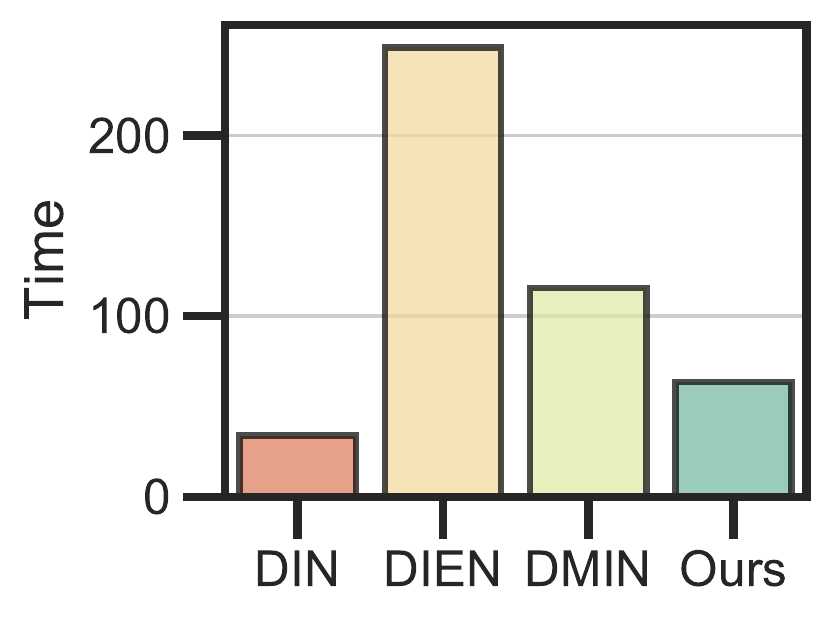}}
	\subfloat[Training Time on MovieLens]{
		\includegraphics[scale=0.43, trim=20 0 5 0, clip]{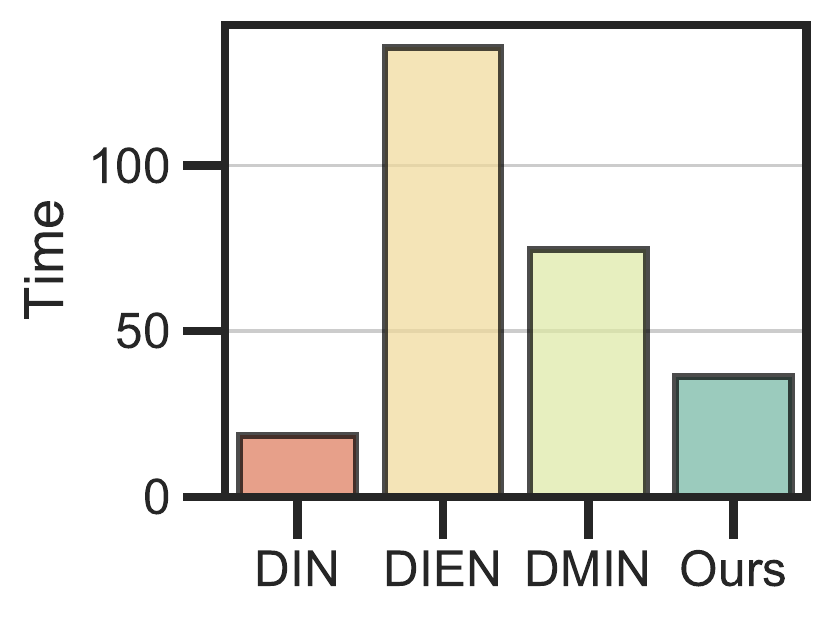}} \\
	\subfloat[Inference Time on Electronics]{
		\includegraphics[scale=0.43, trim=19 0 0 0, clip]{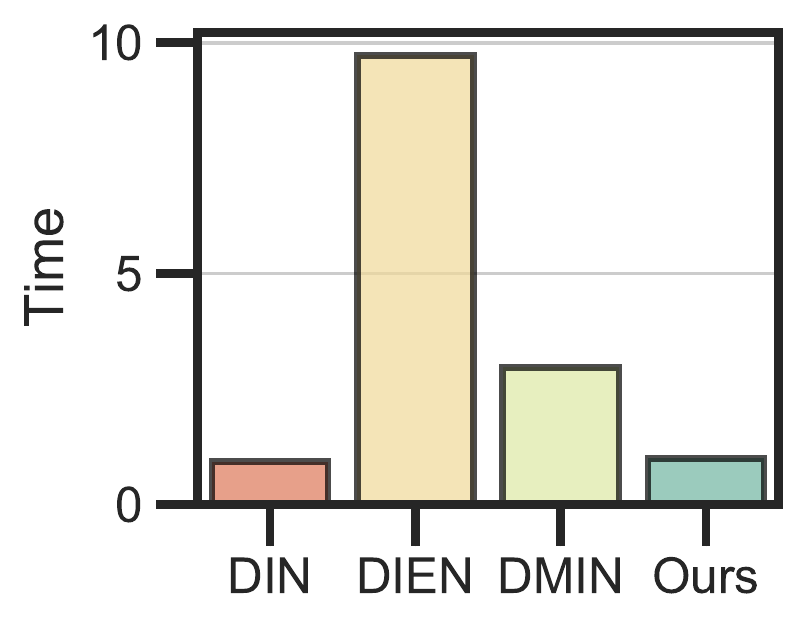}} 
	\subfloat[Inference Time on \\Books]{
		\includegraphics[scale=0.43, trim=19 0 0 0, clip]{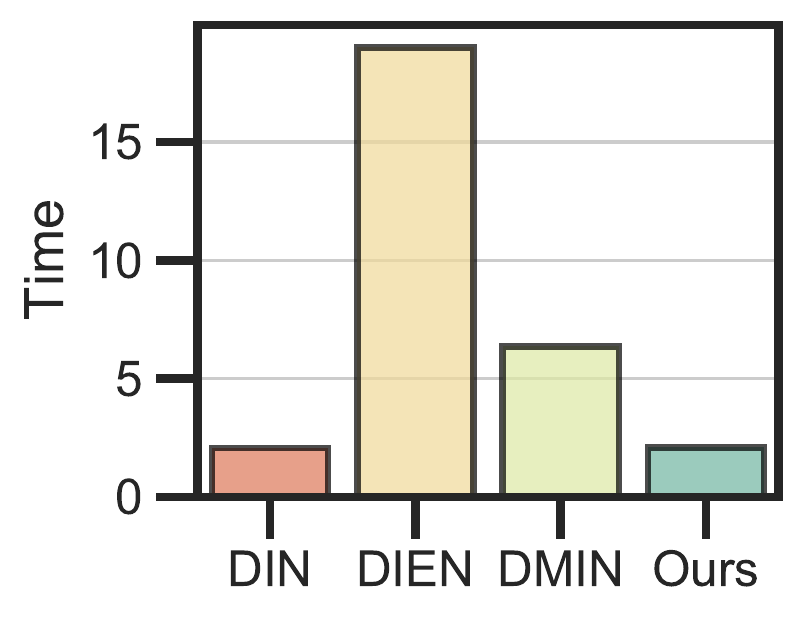}}
	\subfloat[Inference Time on Games]{
		\includegraphics[scale=0.43, trim=19 0 0 0, clip]{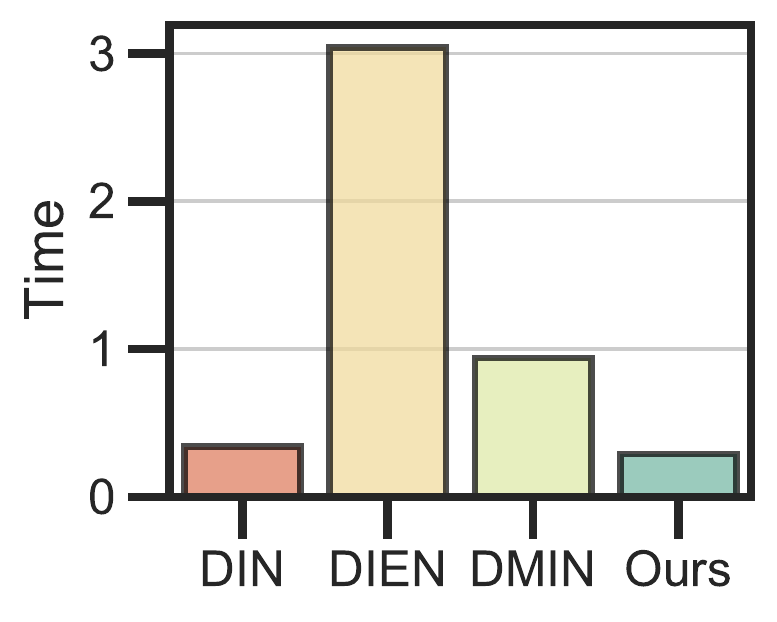}} 
	\subfloat[Inference Time on Taobao]{
		\includegraphics[scale=0.43, trim=19 0 0 0, clip]{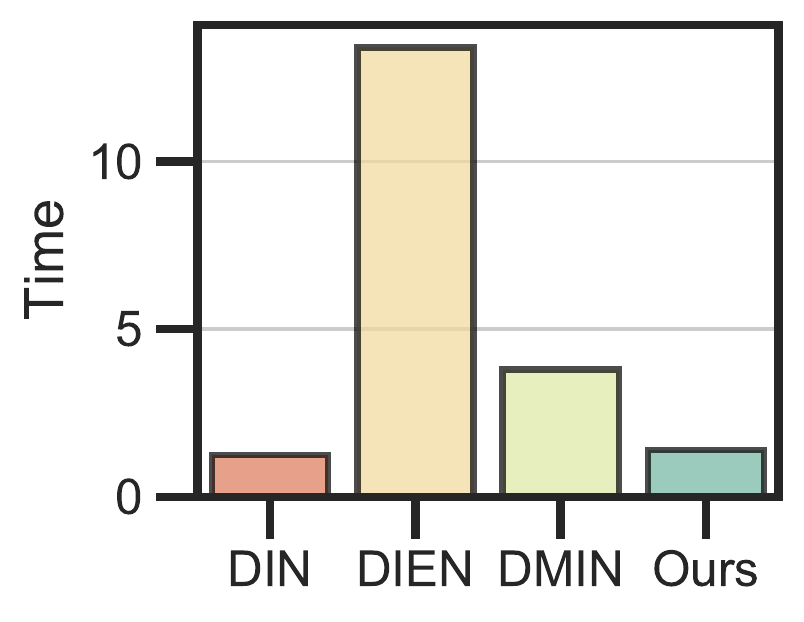}}
	\subfloat[Inference Time on MovieLens]{
		\includegraphics[scale=0.43, trim=19 0 0 0, clip]{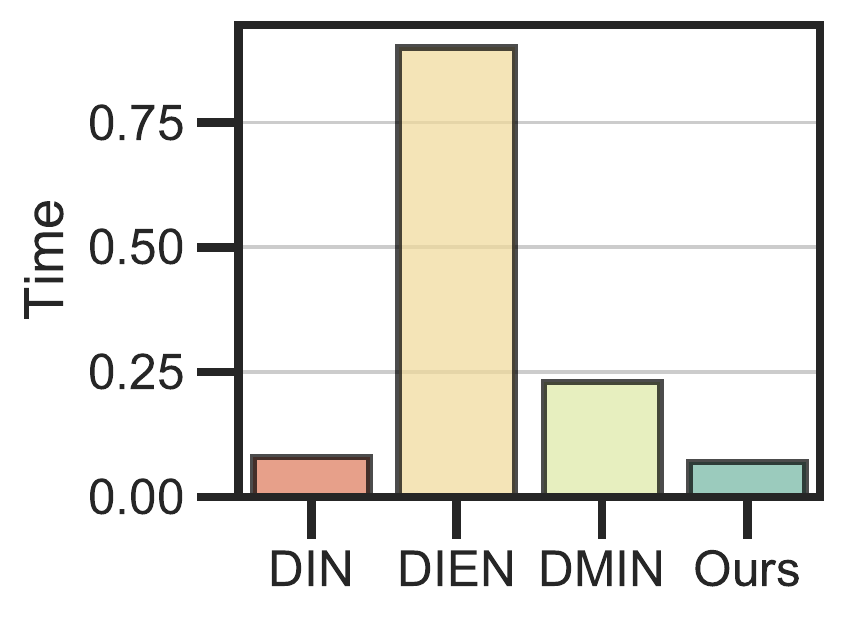}} \\
	\caption{Comparison of Training and Inference Time.} 
	\label{fig:speed}
	% \end{mdframed}
\end{figure*}

\subsubsection{Evaluation Protocols}

In the field of recommender systems, we often use AUC as the metric. 
It could evaluate the ranking order of all the items with an estimated 
click-through rate:
% In this paper, we take the user weighted AUC to evaluate the effectiveness of our method, which is introduced in \cite{DIN,taobao2017}. It focuses on evaluating the intra-user ranking order and averages AUC over users. To avoid redundancy of the symbols, we still call it as AUC in the rest of this paper:
\begin{equation}
	AUC=\frac{ \sum_{b\in\mathcal{S}_c}{\texttt{rank}(b)}-
		\frac{{N_{c}}\times({N_{c}}+1)}{2}}
	{{N_{c}} \times {N_{u}}},
\end{equation}
where $\mathcal{S}_c$ represents the set of all positive instances;
$b$ denotes a positive instance drawn from $\mathcal{S}_c$;
${N_{c}}$ denotes the number of the positive (clicked) instances; 
${N_{u}}$ is the number of the negative (unclicked) instances.
Moreover, we also provide the relative improvement over the base model, 
following \cite{groupctr}. 
Since the AUC score of a random model is about 0.5, 
the relative improvement $Impr$ in this paper is defined as the following:
\begin{equation}
	Impr= \left( \frac{ AUC^{(model)} - 0.5 }{AUC^{(base)} - 0.5} - 1 \right) \cdot 100 \%,
\end{equation}
where $AUC^{(model)}$ and $AUC^{(base)}$ refer to the AUC score w.r.t
the measured model and base model, respectively.

\subsection{Performance Comparison (RQ1)}
% \begin{mdframed}[hidealllines=true,backgroundcolor=blue!10	,innerleftmargin=3pt,innerrightmargin=3pt,leftmargin=-3pt,rightmargin=-3pt]
{ Tab.\ref{tab:perform_tab} shows the performance of all competitors 
on five public datasets. From this table, we can get the following 
observations:
\begin{itemize}
	\item All models with manual or automatic feature engineering outperform the 
	      base model significantly. This implies 
		  extensive feature engineering could improve the model performance on these 
		  datasets.
	\item Comparing the feature-crossing-based models, we see that the second-order feature-crossing methods 
		  (Wide\&Deep, PNN, DeepFM) have better performance than their high-order counterparts 
		  (xDeepFM and FGCNN) in the vast majority of the cases. 
		  The reason might be that higher-order interactions are not suitable
		  to model the behavioral features, though it is effective in dealing with general 
		  features.
	\item In most cases, general feature selection methods (AutoFIS and AutoGroup) are  comparable 
		  with second-order feature-crossing methods, but have worse performance than attention-based 
		  methods (DIN, DIEN, DMIN). To see this, attention mechanism could select different 
		  historical behaviors with respect to different target items. On the contrary, 
		  general feature selection methods can only provide the same feature subsets for 
		  all target items, without considering the relevance between users' past and present behaviors.
		  These results also reflect the difference between user behavior features and general
		  features.
	\item Meta-Wrapper achieves the best performance on all datasets. 
	      This implies that our meta-learned wrapping operator could enhance the learning of the 
	      feature selector (attentive module), thereby achieving better performance.
\end{itemize}}
% \end{mdframed} 

% \begin{figure}[htbp]
% 	\centering
% 	\includegraphics[scale=0.55, trim=9 0 6 0, clip]{amazon.pdf}
% 	\caption{Performance Comparison on Amazon Dataset.} 
% 	\label{fig:amazon} 
% \end{figure}

% \begin{figure}[htbp]
% 	\centering
% 	\includegraphics[scale=0.55, trim=9 0 6 0, clip]{taobao.pdf}
% 	\caption{Performance Comparison on Taobao Dataset.} 
% 	\label{fig:taobao} 
% \end{figure}

% \begin{figure}[htbp]
% 	\centering
% 	\includegraphics[scale=0.55, trim=9 0 6 0, clip]{mvlens.pdf}
% 	\caption{Performance Comparison on MovieLens Dataset.} 
% 	\label{fig:mvlens} 
% \end{figure}

\subsection{Ablation Study (RQ2)}

{To further justify our claim that improvements come from the 
meta-learned wrapping operator, we conduct a series of ablation experiments to validate the 
effectiveness of different modules of our method.

More specifically, we study the impact of adding 
the following components on top of a base model in the training process:
\begin{itemize}
	\item \textbf{C1}: An attentive module as the user interest selector;
	\item \textbf{C2}: The first term of Eq.(\ref{outer_loss_taylor_n})
	as an auxiliary loss used to regularize the feature selector;
	\item \textbf{C3}: The wrapper method described in 
	Example.(\ref{example_wrapper}) (i.e. GDmax-Wrapper) where we set
	inner-GD loops $N=1$;
	\item \textbf{C4}: The second term of Eq.(\ref{outer_loss_taylor_n})
	where we also set $N=1$ for fair comparisons.
\end{itemize}
Tab.\ref{tab:ablation_desc} further shows which components are 
included in the different compared methods (\textbf{M1-M4}).

The experimental results are presented in Fig.\ref{fig:albation}.
From these results, we have the following observations:
\begin{itemize}
	\item Comparing the performance of \textbf{M1} and the base model,
		  we can get that adding a feature selector (\textbf{C1}) on the top 
		  of the base predictor could significantly improve the model performance. 
		  Note that the performance can be found in the first row of 
		  Tab.\ref{tab:perform_tab}, which is not repeatedly shown in 
		  Fig.\ref{fig:albation}.
	\item \textbf{M2} brings an improvement on 4 out of 5 datasets with respect to \textbf{M1}. This illustrates that adding the first term of Eq.(\ref{outer_loss_taylor_n}) 
		  (\textbf{C2}) actually works in CTR prediction tasks.
	\item \textbf{M3} has lower performance than \textbf{M2} on all datasets, which means that
		  (\textbf{C3}) would result in performance degradation. The reason might be that inner 
		  parameters cannot be sufficiently trained when ${N}$ is small. Thus the inner solution 
		  is not precise enough to direct the outer-level problem, leading to the 
		  ineffectiveness of the wrapper method.
	\item \textbf{M4} achieves the best performance on all datasets, which manifests the second term 
	      of Eq.(\ref{outer_loss_taylor_n}) (\textbf{C4}) is also a source of performance improvements.
		  This result is consistent with our theoretical analysis in Sec.(\ref{sec:theory}).
\end{itemize}
}
% \end{mdframed} 
% From the results, we can observe that:
% \begin{itemize}
% 	\item The bare model without feature selection yields the worst performance.
% 	\item Adding a feature selector on the top of the base predictor could 
% 	      significantly improve the model performance.
% 	\item GDmax-based wrapper method brings about no improvement, though 
% 	      it does not result in obvious degradation of performance. The reason
% 	      might be that inner parameters cannot be sufficiently trained when 
% 	      ${N}=1$. Thus the inner solution is not precise enough to
% 	      direct the outer-level problem, leading to the ineffectiveness of the
% 	      wrapper method.
% 	\item When we apply meta-learning to wrapper method, a significant improvement
% 	      occurs. This trend clearly shows where the performance improvement comes from.
% \end{itemize}

\begin{table}[htbp]
	% \begin{mdframed}[hidealllines=true,backgroundcolor=red!10	,innerleftmargin=3pt,innerrightmargin=3pt,leftmargin=-3pt,rightmargin=-3pt]
	\centering
	\caption{Methods (\textbf{M1}-\textbf{M4}) w.r.t 
			 Components (\textbf{C1}-\textbf{C4}).}
	\begin{tabular*}{\hsize}{@{}@{\extracolsep{\fill}}lcccccc@{}}
		% \toprule
		\multicolumn{1}{c}{\multirow{2}[2]{*}{\tabincell{c}{}}} & 
		\multicolumn{1}{c}{\multirow{2}[2]{*}{\tabincell{c}{\textbf{C1}}}} & 
		\multicolumn{1}{c}{\multirow{2}[2]{*}{{\tabincell{c}{\textbf{C2}}}}} & 
		\multicolumn{1}{c}{\multirow{2}[2]{*}{\tabincell{c}{\textbf{C3}}}} & 
		\multicolumn{1}{c}{\multirow{2}[2]{*}{\tabincell{c}{\textbf{C4}}}} \\
		&       &       &       &  \\
		\midrule
		\multirow{2}[1]{*}{\textbf{M1}} & 
		\multirow{2}[1]{*}{\large \checkmark} & 
		\multirow{2}[1]{*}{\large $\times$} & 
		\multirow{2}[1]{*}{\large $\times$} & 
		\multirow{2}[1]{*}{\large $\times$} \\
		&       &       &       &  \\
		\multirow{2}[1]{*}{\textbf{M2}} & 
		\multirow{2}[0]{*}{\large \checkmark} & 
		\multirow{2}[0]{*}{\large \checkmark} & 
		\multirow{2}[0]{*}{\large $\times$} & 
		\multirow{2}[0]{*}{\large $\times$} \\
		&       &       &       &  \\
		\multirow{2}[1]{*}{\textbf{M3}} & 
		\multirow{2}[0]{*}{\large \checkmark} & 
		\multirow{2}[0]{*}{\large \checkmark} & 
		\multirow{2}[0]{*}{\large \checkmark} & 
		\multirow{2}[0]{*}{\large $\times$} \\
		&       &       &       &  \\
		\multirow{2}[1]{*}{\textbf{M4}} & 
		\multirow{2}[1]{*}{\large \checkmark} & 
		\multirow{2}[1]{*}{\large \checkmark} & 
		\multirow{2}[1]{*}{\large \checkmark} & 
		\multirow{2}[1]{*}{\large \checkmark} \\
		&       &       &       &  \\
		\bottomrule
	\end{tabular*}%
	\label{tab:ablation_desc}%
	% \end{mdframed}
\end{table}%

\begin{figure}[htbp]
	% \begin{mdframed}[hidealllines=true,backgroundcolor=red!10	,innerleftmargin=3pt,innerrightmargin=3pt,leftmargin=-3pt,rightmargin=-3pt]
	\centering
	\subfloat[$\mu$ w.r.t Electronics]{
		\includegraphics[scale=0.42, trim=20 0 0 5, clip]{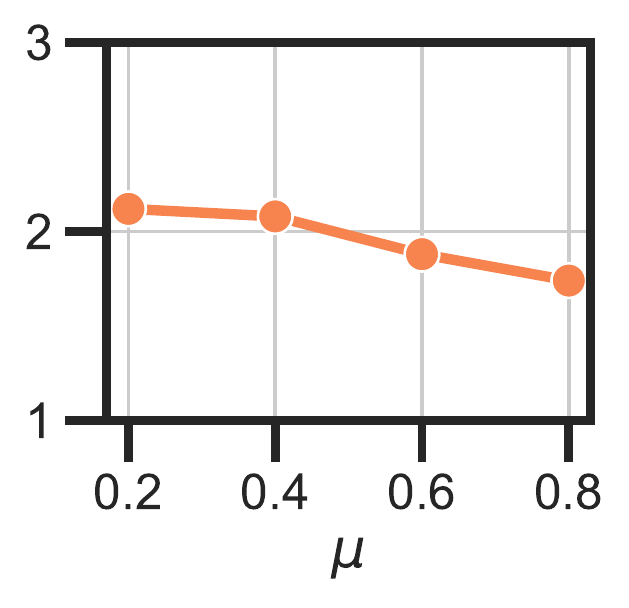}} 
	\subfloat[$N$ w.r.t Electronics]{
		\includegraphics[scale=0.42, trim=20 0 0 5, clip]{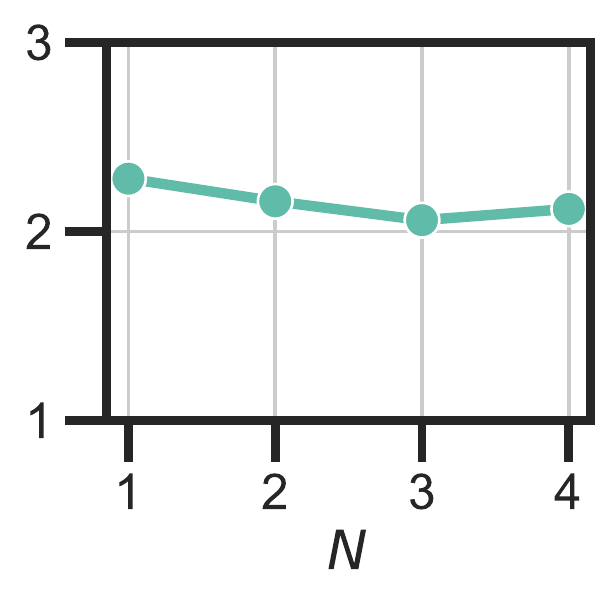}} 
	\subfloat[$\beta$ w.r.t Electronics]{
		\includegraphics[scale=0.42, trim=20 0 0 5, clip]{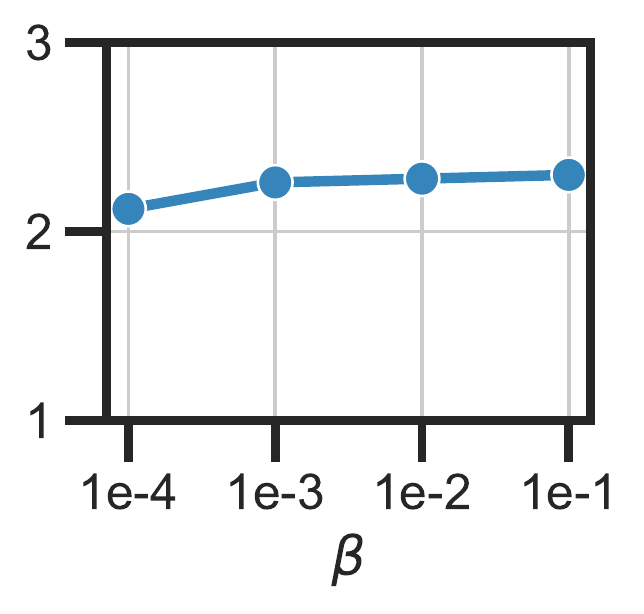}} \\
	\subfloat[$\mu$ w.r.t Books]{
		\includegraphics[scale=0.42, trim=20 0 0 5, clip]{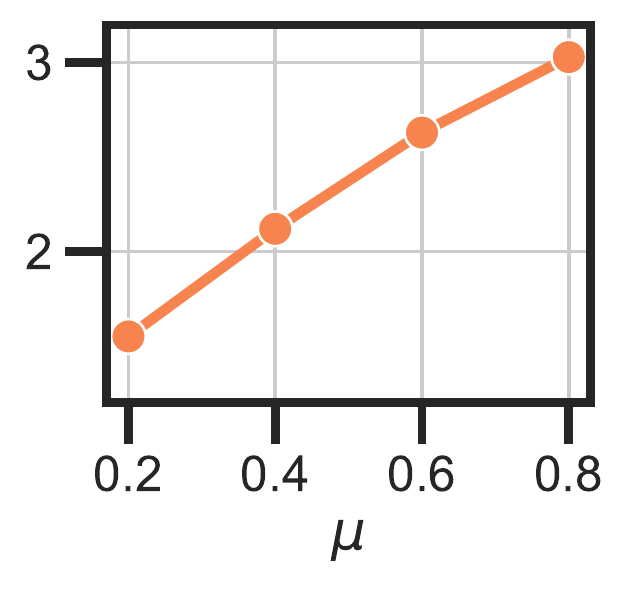}} 
	\subfloat[$N$ w.r.t Books]{
		\includegraphics[scale=0.42, trim=20 0 0 5, clip]{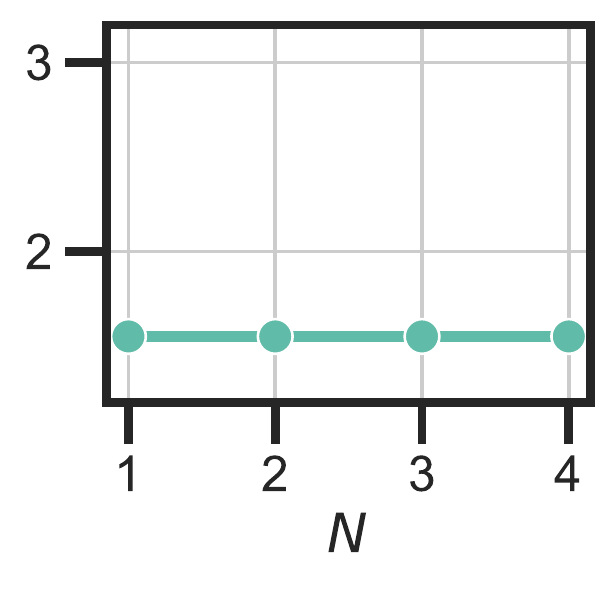}} 
	\subfloat[$\beta$ w.r.t Books]{
		\includegraphics[scale=0.42, trim=20 0 0 5, clip]{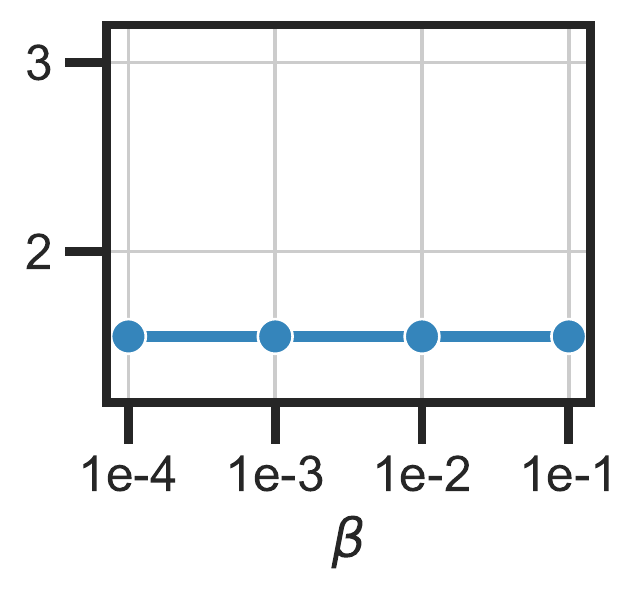}} \\
	\subfloat[$\mu$ w.r.t Games]{
		\includegraphics[scale=0.42, trim=20 0 0 5, clip]{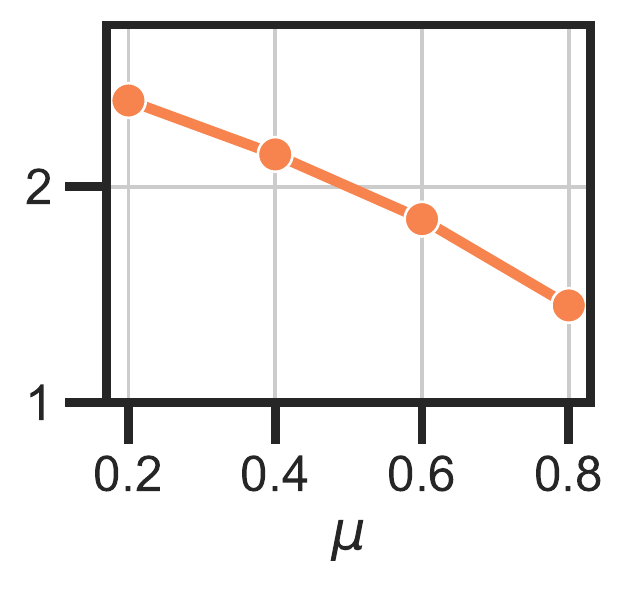}} 
	\subfloat[$N$ w.r.t Games]{
		\includegraphics[scale=0.42, trim=20 0 0 5, clip]{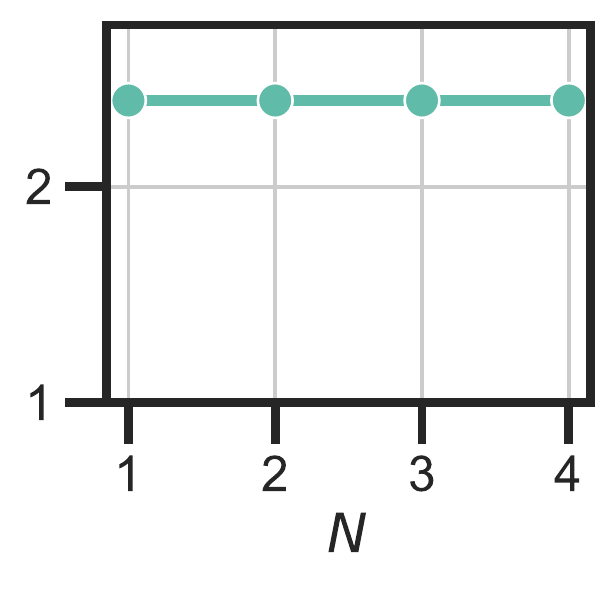}}
	\subfloat[$\beta$ w.r.t Games]{
		\includegraphics[scale=0.42, trim=20 0 0 5, clip]{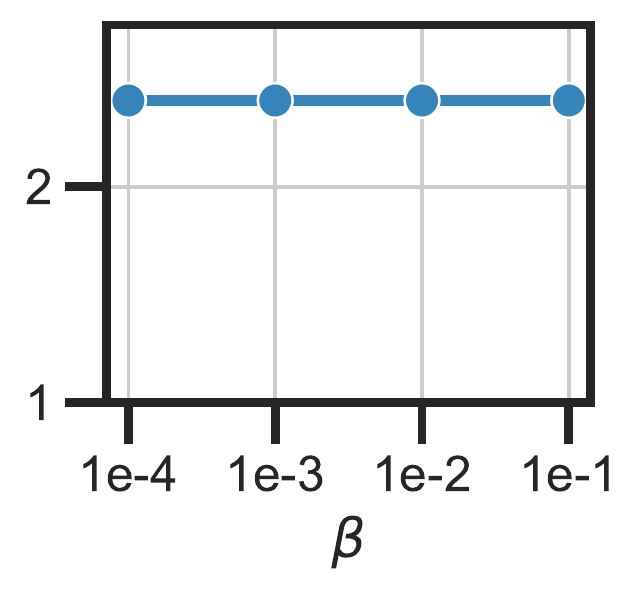}} \\
	\subfloat[$\mu$ w.r.t Taobao]{
		\includegraphics[scale=0.42, trim=20 0 0 5, clip]{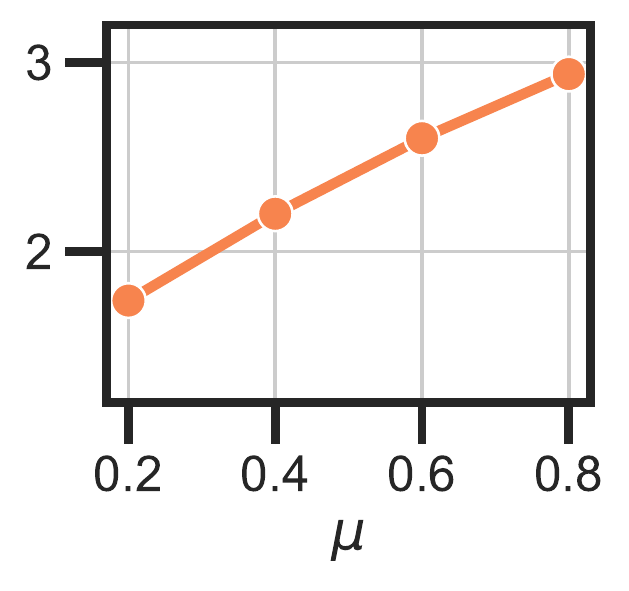}} 
	\subfloat[$N$ w.r.t Taobao]{
		\includegraphics[scale=0.42, trim=20 0 0 5, clip]{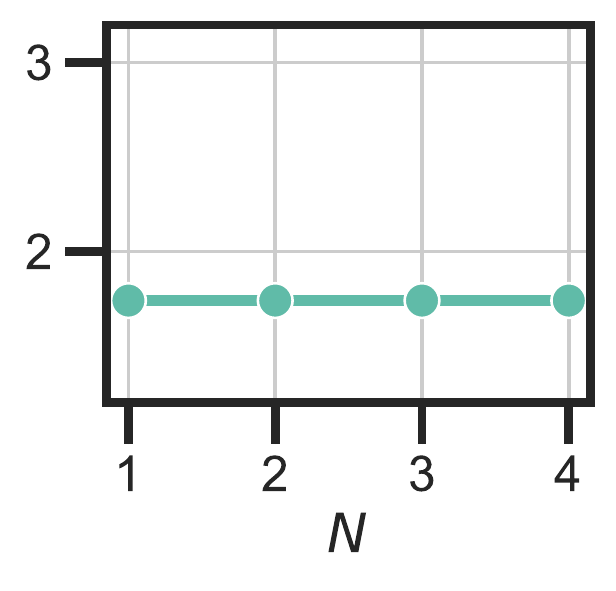}}
	\subfloat[$\beta$ w.r.t Taobao]{
		\includegraphics[scale=0.42, trim=20 0 0 5, clip]{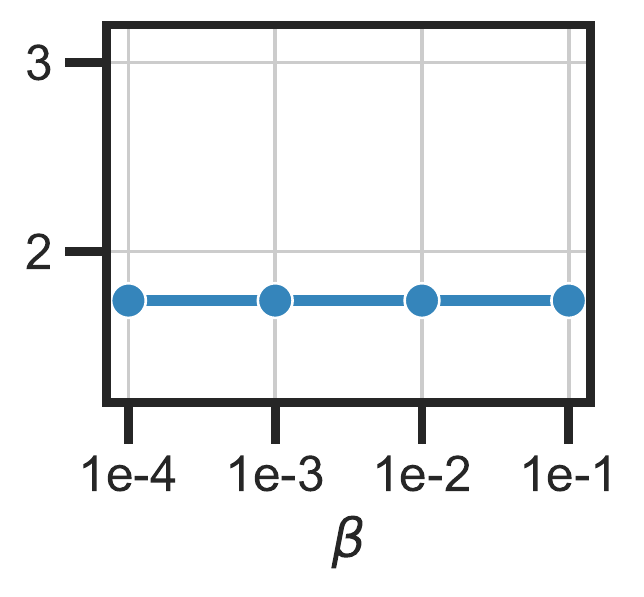}} \\
	\subfloat[$\mu$ w.r.t MovieLens]{
		\includegraphics[scale=0.42, trim=20 0 0 5, clip]{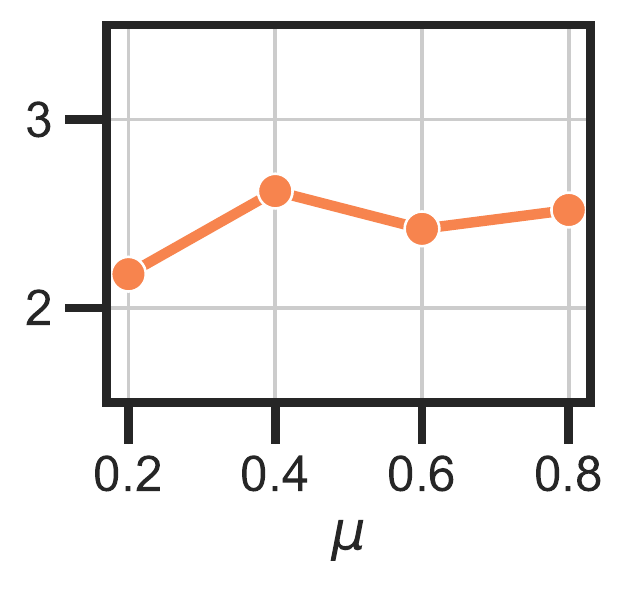}} 
	\subfloat[$N$ w.r.t MovieLens]{
		\includegraphics[scale=0.42, trim=20 0 0 5, clip]{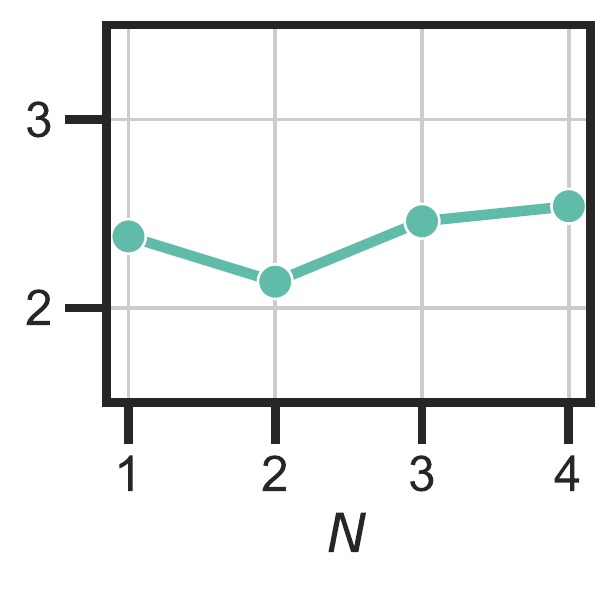}} 
	\subfloat[$\beta$ w.r.t MovieLens]{
		\includegraphics[scale=0.42, trim=20 0 0 5, clip]{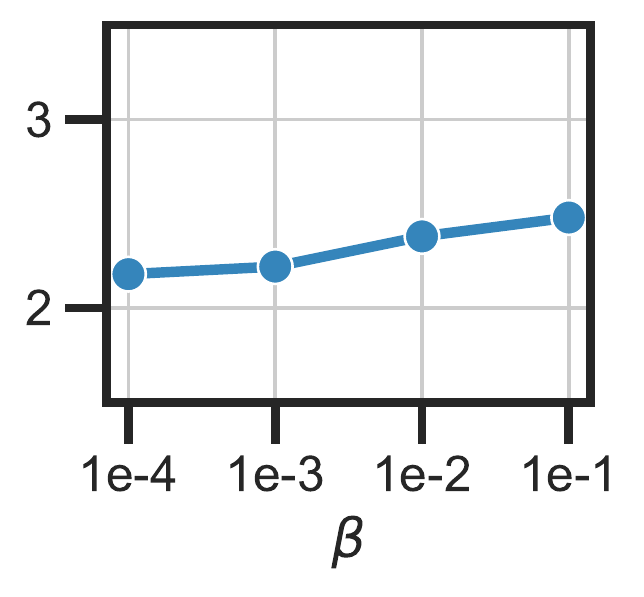}} \\
	\caption{ Hyperparameter Study on Public Datasets.}
	\label{fig:hyper}
	% \end{mdframed}
\end{figure}

\subsection{Generalization Performance (RQ3)}

{
% \begin{mdframed}[hidealllines=true,backgroundcolor=blue!10	,innerleftmargin=3pt,innerrightmargin=3pt,leftmargin=-3pt,rightmargin=-3pt]
In this section, we conduct  experiments on the synthetic
data to demonstrate the overfitting in CTR prediction. 
We generate the synthetic data with user behavioral noises
and observe to what extent are different models overfit.
Meanwhile, we also show that Meta-Wrapper can better deal with 
such a situation.

\subsubsection{Synthetic Data} 
We generate $100$ users and $10000$ items, while users and items are both 
divided into 10 groups. Here each group represents a type of interest.
More specifically, the $i$-th user is assigned to the $g_u(i)$-th group,
where $g_u(i)=(i\ MOD\ 10)$. Similarly, the $j$-th item is assigned to the $g_i(j)$-th group, 
where $g_i(j)=(j\ MOD\ 1000)$.
For each user $i$, we select $50$ items as his/her interested items (i.e. positive instances),
where the items are uniformly sampled from the subset 
$\{j | g_i(j)=g_u(i), j \in \mathcal{N}, 1 \leq j \leq 10000\}$.
Meanwhile, we uniformly sample $50$ non-interested items (i.e. negative instances) for each user 
$i$ from the item subset 
$\{j | g_i(j) \neq g_u(i),  j \in \mathcal{N}, 1 \leq j \leq 10000\}$.
In this way, we can get a noiseless dataset with $5000$ positive instances and 
$5000$ negative instances.
To simulate the noises, we now apply a perturbation to this dataset.
More specifically, we generate labels $\bm{Y_p}$ for positive instances 
such that $\bm{Y_p} \sim Binomial(5000, 0.5)$, and 
labels $\bm{Y_n}$ for negative instances such that 
$\bm{Y_n} \sim Binomial(5000, 0.2)$.
This implies that 1) there are about $50\%$ interested items labeled
as non-clicked, and 2) about $20\%$ non-interested items labeled
as clicked.
In real-word scenario, it is possible that some interested items 
are ignored. 
Meanwhile, non-interested items might also
be clicked by mistake. 
Thus we can say that such perturbations do exist
in real-world CTR predictions, though it is not necessarily severe as
our simulation. The reason why we construct such strong noises is to
clearly show the overfitting and the improvements of Meta-Wrapper.

\subsubsection{Experimental Results on Synthetic Data}
Based on the synthetic data, we compare our proposed method with the 
1) \textbf{Base:} the base model;
2) \textbf{FS:} the base model with attentive feature selector (FS);
3) \textbf{FS+L2:} the base model with FS and $L2$ weight decay ($L2$);
4) \textbf{FS+DO:} the base model with FS and $L2$ dropout (DO).
% \end{mdframed} 

\begin{figure}[htbp]
	% \begin{mdframed}[hidealllines=true,backgroundcolor=red!10	,innerleftmargin=3pt,innerrightmargin=3pt,leftmargin=-3pt,rightmargin=-3pt]
	\centering
	\includegraphics[scale=0.55, trim=9 0 6 0, clip]{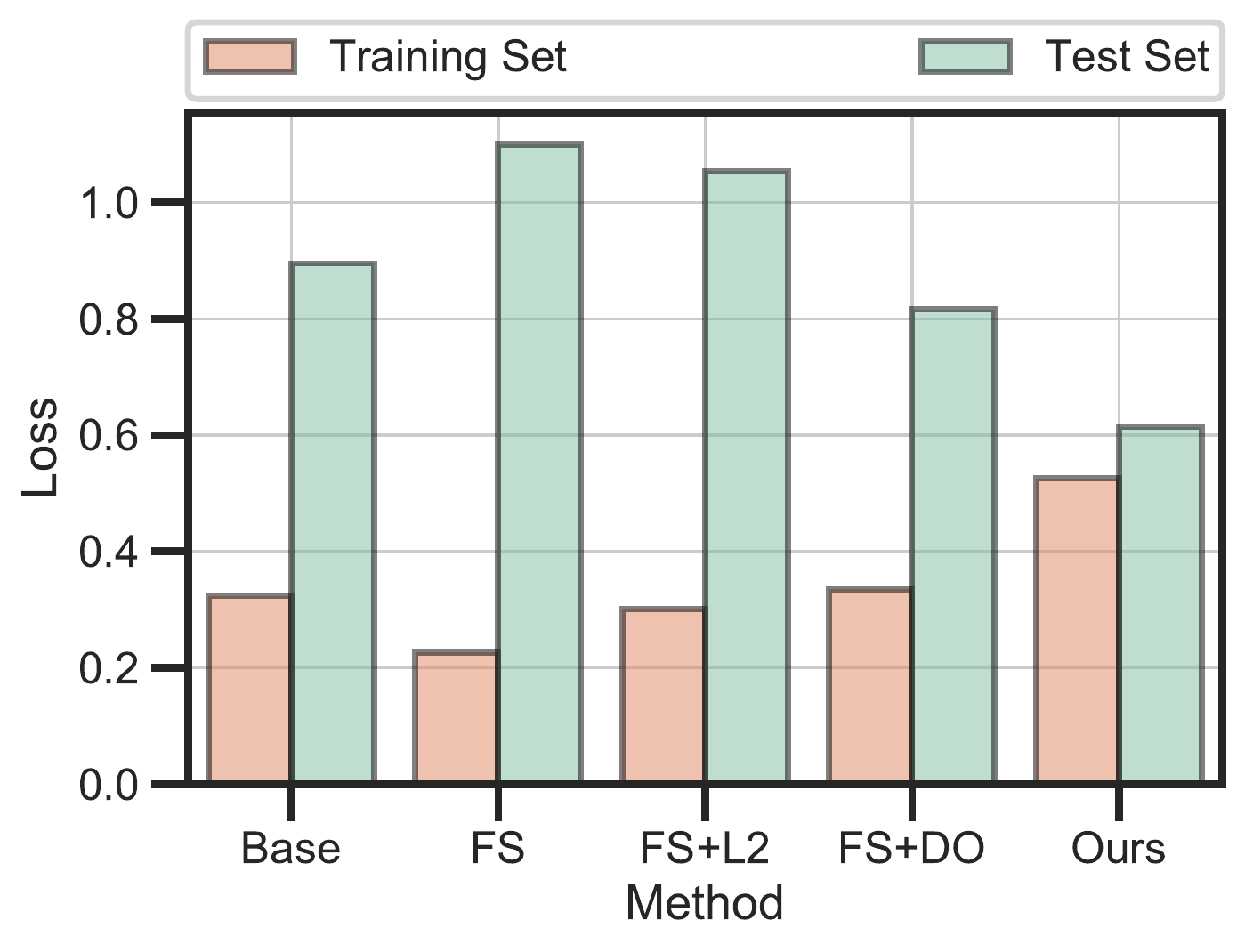}
	\caption{Overfitting on Synthetic Data With Strong Noises.} 
	\label{fig:sim_data}
	% \end{mdframed} 
\end{figure}

% \begin{mdframed}[hidealllines=true,backgroundcolor=blue!10	,innerleftmargin=3pt,innerrightmargin=3pt,leftmargin=-3pt,rightmargin=-3pt]
The experimental results are shown in Fig.\ref{fig:sim_data}. 
For each approach, we can observe a major gap between the training and 
test loss, which results from our strong noises. Notably, the \textbf{FS}
has lower training loss and higher test loss compared with \textbf{Base}.
This means that attentive feature selector might exacerbate overfitting.
Meanwhile, \textbf{FS+L2} and  \textbf{FS+DO} are a little better than \textbf{FS}, but its
overfitting is still more serious than \textbf{Base}. This demonstrates
the impact of $L2$ weight decay is limited in this scene. Finally, 
\textbf{MW} significantly reduces the training and test loss compared
with other methods, where the overfitting is even less serious than \textbf{Base}.

From these results, we can conclude that 1) attentive feature selector
would increase the overfitting risk, and 2) Meta-Wrapper is an effective
approach to alleviate overfitting.
% \end{mdframed} 
}
% In order to test the generalization performance of Meta-Wrapper, 
% the experiments are done with networks of many different architectures 
% keeping all of the other settings fixed. To produce multiple architectures, 
% we change the unit number
% of the first dense layer in $\{40, 80\}$ and of the second layer in $\{20, 40\}$,
% for both base predictor and feature selector. 

% Fig.(\ref{fig:curves}) presents the test AUC obtained for these different 
% architectures as training progresses. 
% On MovieLens dataset (Fig.\ref{fig:mvlens_curve}), the models with
% Meta-Wrapper perform better at first, but it is exceeded before 10-th
% epoch. Finally, the two curves are very close and the pink one (Meta-Wrapper) 
% is a little higher.
% One possible reason is that sparse datasets are more likely 
% to cause overfitting in user interest selection, which is less obvious
% on a dense dataset like MovieLens.
% Not surprisingly, this is immediately verified via the experimental results
% on other two datasets. Specifically, Fig.\ref{fig:amazon_curve} and 
% Fig.\ref{fig:taobao_curve} manifest the same architectures trained 
% with and without Meta-Wrapper have evidently different performance 
% as seen as by the two separate curves. 
% Thus we can conclude that Meta-Wrapper gives a significant improvement 
% across all architectures, on spare datasets like Taobao and Amazon.

\subsection{Efficiency Analysis (RQ4)}
 
{
% \begin{mdframed}[hidealllines=true,backgroundcolor=blue!10	,innerleftmargin=3pt,innerrightmargin=3pt,leftmargin=-3pt,rightmargin=-3pt]
The data volume tends to be extremely large in practical applications, 
but the available computing resources are always limited.
In such a scenario, a CTR prediction model should not only pay attention
to the accuracy, but also consider the efficiency.
Therefore, despite the effectiveness of Meta-Wrapper has been 
validated, it is also necessary to empirically analyse its computational 
efficiency.

In this section, we compare the training and inference time of Meta-Wrapper
with three industrial models: DIN, DIEN and DMIN on one NVIDIA TITAN RTX GPU.
In these experiments, we fix the inner loops $N=1$ for Meta-Wrapper.
The experimental results are demonstrated in Fig.\ref{fig:speed}. 
From this figure, we can observe that:
\begin{itemize}
	\item At the training stage, DIN is the fastest approach. This result
	is in line with our expectation, since DIN is the most simple method
	compared with the other four competitors. Meanwhile, DIEN and DMIN need
	more training time, especially DIEN. This is because they have the 
	complex modeling of sequential evolution. DMIN is faster than DIEN since 
	it uses transformer instead of RNN to improve the concurrency. Moreover,
	the training time of Meta-Wrapper is about twice as DIN, which  
	is consistent with our theoretical analysis.
	\item At the inference stage, DIEN is still the slowest method, and DMIN
	is a little faster than DIEN by virtue of its parallel computing 
	capability. Meanwhile, DIN and Meta-Wrapper are significantly 
	faster than others. Notably, our proposed meta-learning strategy only
	focuses on the training stage, so the inference time of Meta-wrapper 
	should be equal to DIN. This is also proved by experimental results.
\end{itemize}
% \end{mdframed}

\subsection{Hyperparameter Study (RQ5)}

% \begin{mdframed}[hidealllines=true,backgroundcolor=blue!10	,innerleftmargin=3pt,innerrightmargin=3pt,leftmargin=-3pt,rightmargin=-3pt]
Recall that our final loss function Eq.(\ref{a_plus_b}) introduces 
three hyperparameters: 
1) the weight of feature selection task $\mu$, 
2) inner-level learning rate $\beta$ and 
3) the number of inner-loops ${N}$.
We study how these hyperparameters affect the model performance.

Fig.(\ref{fig:hyper}) presents our experimental results on 
five datasets, respectively. 
On Electronics dataset, all three hyperparameters do not have a great impact 
on performance. On Books, Games and Taobao, 
$\mu$ has a relatively larger impact on performance, while 
the impact of $\beta$ and ${N}$ is almost negligible.
On MovieLens dataset, 
$\mu$, $\beta$ and ${N}$ all have an impact on performance, but  the impact is quite small.
Above all, the changing of performance is relatively evident when tuning the value of $\mu$, 
while it is quite stable across different $\beta$ and ${N}$.
Therefore, we should give priority to tune the value of $\mu$ in 
practical applications, for the sake of better performance. Moreover,
the stable performance w.r.t ${N}$ shows that Meta-Wrapper is 
robust towards the number of inner loops. In this way, the wrapper 
method could become computationally tractable by setting a smaller
number of inner-loops ${N}$.
% \end{mdframed}

% \begin{figure}
% 	\centering
% 	\subfigure[AUC w.r.t $\mu$]{
% 		\begin{minipage}[t]{0.16\textwidth}
% 			\label{fig:hp_amazon_mu}
% 			\includegraphics[scale=0.25, trim=0 0 25 0, clip]{hp_amazon_mu.pdf}
% 			%\caption{fig1}
% 		\end{minipage}%
% 	}%
% 	\subfigure[AUC w.r.t $\beta$]{
% 		\begin{minipage}[t]{0.16\textwidth}
% 			\label{fig:hp_amazon_lr}
% 			\includegraphics[scale=0.25, trim=25 0 0 0, clip]{hp_amazon_lr.pdf}
% 			%\caption{fig2}
% 		\end{minipage}%
% 	}%
% 	\subfigure[AUC w.r.t ${N}$]{
% 		\begin{minipage}[t]{0.16\textwidth}
% 			\label{fig:hp_amazon_loops}
% 			\includegraphics[scale=0.25, trim=50 0 0 0, clip]{hp_amazon_loops.pdf}
% 			%\caption{fig2}
% 		\end{minipage}%
% 	}%
% 	\centering
% 	\caption{Hyperparameter Study on Amazon Dataset.}
% 	\label{fig:hp_amazon}
% \end{figure}

% \begin{figure}
% 	\centering
% 	\subfigure[AUC w.r.t $\mu$]{
% 		\begin{minipage}[t]{0.17\textwidth}
% 			\label{fig:hp_taobao_mu}
% 			\includegraphics[scale=0.25, trim=0 0 0 0, clip]{hp_taobao_mu.pdf}
% 			%\caption{fig1}
% 		\end{minipage}%
% 	}%
% 	\subfigure[AUC w.r.t $\beta$]{
% 		\begin{minipage}[t]{0.16\textwidth}
% 			\label{fig:hp_taobao_lr}
% 			\includegraphics[scale=0.25, trim=25 0 0 0, clip]{hp_taobao_lr.pdf}
% 			%\caption{fig2}
% 		\end{minipage}%
% 	}%
% 	\subfigure[AUC w.r.t ${N}$]{
% 		\begin{minipage}[t]{0.16\textwidth}
% 			\label{fig:hp_taobao_loops}
% 			\includegraphics[scale=0.25, trim=50 0 0 0, clip]{hp_taobao_loops.pdf}
% 			%\caption{fig2}
% 		\end{minipage}%
% 	}%
% 	\centering
% 	\caption{Hyperparameter Study on Taobao Dataset.}
% 	\label{fig:hp_taobao}
% \end{figure}
}
\section{Conclusion}
In this paper, we propose a differentiable wrapper method, namely Meta-Wrapper, 
to automatically select user interests in CTR prediction tasks. 
More specifically, we first regard user interest modeling as a feature 
selection problem. Then we provide the objective function, under the 
framework of the wrapper method, for such a problem. Moreover, we design a 
differentiable wrapping operator to simultaneously improve the efficiency and 
flexibility of the wrapper method. Based on these, the learning problems of the 
feature selector and downstream predictor are unified as a bilevel optimization 
which can be solved by a meta-learning algorithm. Meanwhile, we prove the 
effectiveness of our proposed method from a theoretical perspective. From 
the experimental results on three datasets, we also observe that our method 
significantly improves the performance of CTR prediction in recommender systems. 
In the future, we will try to incorporate more criteria besides performance 
into the framework of the differentiable wrapper to improve the user 
interest selection in CTR prediction.

\section*{Acknowledgment}
This work was supported in part by the National Key R\&D Program 
of China under Grant 2018AAA0102003, in part by National Natural Science 
Foundation of China: 61931008, 61620106009, 61836002 and 61976202, 
in part by the Fundamental Research Funds for the Central Universities, 
in part by Youth Innovation Promotion Association CAS, in part by 
the Strategic Priority Research Program of Chinese Academy of Sciences, 
Grant No. XDB28000000, and in part by National Postdoctoral Program 
for Innovative Talents under Grant No. BX2021298.

% if have a single appendix:
%\appendix[Proof of the Zonklar Equations]
% or
%\appendix  % for no appendix heading
% do not use \section anymore after \appendix, only \section*
% is possibly needed

% use appendices with more than one appendix
% then use \section to start each appendix
% you must declare a \section before using any
% \subsection or using \label (\appendices by itself
% starts a section numbered zero.)
%
% use section* for acknowledgement

% Can use something like this to put references on a page
% by themselves when using endfloat and the captionsoff option.
\ifCLASSOPTIONcaptionsoff
\newpage
\fi

% trigger a \newpage just before the given reference
% number - used to balance the columns on the last page
% adjust value as needed - may need to be readjusted if
% the document is modified later
%\IEEEtriggeratref{8}
% The "triggered" command can be changed if desired:
%\IEEEtriggercmd{\enlargethispage{-5in}}

% references section

% can use a bibliography generated by BibTeX as a .bbl file
% BibTeX documentation can be easily obtained at:
% http://www.ctan.org/tex-archive/biblio/bibtex/contrib/doc/
% The IEEEtran BibTeX style support page is at:
% http://www.michaelshell.org/tex/ieeetran/bibtex/
%\bibliographystyle{IEEEtran}
% argument is your BibTeX string definitions and bibliography database(s)
%\bibliography{IEEEabrv,../bib/paper}
%
% <OR> manually copy in the resultant .bbl file
% set second argument of \begin to the number of references
% (used to reserve space for the reference number labels box)

% biography section
% 
% If you have an EPS/PDF photo (graphicx package needed) extra braces are
% needed around the contents of the optional argument to biography to prevent
% the LaTeX parser from getting confused when it sees the complicated
% \includegraphics command within an optional argument. (You could create
% your own custom macro containing the \includegraphics command to make things
% simpler here.)
%\begin{IEEEbiography}[{\includegraphics[width=1in,height=1.25in,clip,keepaspectratio]{mshell}}]{Michael Shell}
% or if you just want to reserve a space for a photo:

\bibliographystyle{IEEEtran}
\bibliography{ref}{}

% Generated by IEEEtran.bst, version: 1.14 (2015/08/26)
\begin{thebibliography}{10}
\providecommand{\url}[1]{#1}
\csname url@samestyle\endcsname
\providecommand{\newblock}{\relax}
\providecommand{\bibinfo}[2]{#2}
\providecommand{\BIBentrySTDinterwordspacing}{\spaceskip=0pt\relax}
\providecommand{\BIBentryALTinterwordstretchfactor}{4}
\providecommand{\BIBentryALTinterwordspacing}{\spaceskip=\fontdimen2\font plus
\BIBentryALTinterwordstretchfactor\fontdimen3\font minus
  \fontdimen4\font\relax}
\providecommand{\BIBforeignlanguage}[2]{{%
\expandafter\ifx\csname l@#1\endcsname\relax
\typeout{** WARNING: IEEEtran.bst: No hyphenation pattern has been}%
\typeout{** loaded for the language `#1'. Using the pattern for}%
\typeout{** the default language instead.}%
\else
\language=\csname l@#1\endcsname
\fi
#2}}
\providecommand{\BIBdecl}{\relax}
\BIBdecl

\bibitem{PNN}
Y.~Qu, H.~Cai, K.~Ren, W.~Zhang, Y.~Yu, Y.~Wen, and J.~Wang, ``Product-based
  neural networks for user response prediction,'' in \emph{ICDM}, 2016, pp.
  1149--1154.

\bibitem{DFM}
H.~Guo, R.~Tang, Y.~Ye, Z.~Li, and X.~He, ``Deepfm: {A} factorization-machine
  based neural network for {CTR} prediction,'' in \emph{IJCAI}, 2017, pp.
  1725--1731.

\bibitem{XDFM}
J.~Lian, X.~Zhou, F.~Zhang, Z.~Chen, X.~Xie, and G.~Sun, ``xdeepfm: Combining
  explicit and implicit feature interactions for recommender systems,'' in
  \emph{SIGKDD}, 2018, pp. 1754--1763.

\bibitem{DIN}
G.~Zhou, X.~Zhu, C.~Song, Y.~Fan, H.~Zhu, X.~Ma, Y.~Yan, J.~Jin, H.~Li, and
  K.~Gai, ``Deep interest network for click-through rate prediction,'' in
  \emph{SIGKDD}, 2018, pp. 1059--1068.

\bibitem{DIEN}
G.~Zhou, N.~Mou, Y.~Fan, Q.~Pi, W.~Bian, C.~Zhou, X.~Zhu, and K.~Gai, ``Deep
  interest evolution network for click-through rate prediction,'' in
  \emph{AAAI}, 2019, pp. 5941--5948.

\bibitem{DSIN}
Y.~Feng, F.~Lv, W.~Shen, M.~Wang, F.~Sun, Y.~Zhu, and K.~Yang, ``Deep session
  interest network for click-through rate prediction,'' in \emph{IJCAI}, 2019,
  pp. 2301--2307.

\bibitem{HPMN}
K.~Ren, J.~Qin, Y.~Fang, W.~Zhang, L.~Zheng, W.~Bian, G.~Zhou, J.~Xu, Y.~Yu,
  X.~Zhu, and K.~Gai, ``Lifelong sequential modeling with personalized
  memorization for user response prediction,'' in \emph{SIGIR}, 2019, pp.
  565--574.

\bibitem{MIMN}
Q.~Pi, W.~Bian, G.~Zhou, X.~Zhu, and K.~Gai, ``Practice on long sequential user
  behavior modeling for click-through rate prediction,'' in \emph{SIGKDD},
  2019, pp. 2671--2679.

\bibitem{fswrapper}
J.~Tang, S.~Alelyani, and H.~Liu, ``Feature selection for classification: {A}
  review,'' in \emph{Data Classification: Algorithms and Applications}, 2014,
  pp. 37--64.

\bibitem{fswrapper2010}
M.~M. Kabir, M.~M. Islam, and K.~Murase, ``A new wrapper feature selection
  approach using neural network,'' \emph{Neurocomputing}, vol.~73, no. 16-18,
  pp. 3273--3283, 2010.

\bibitem{bilevel_meta}
L.~Franceschi, P.~Frasconi, S.~Salzo, R.~Grazzi, and M.~Pontil, ``Bilevel
  programming for hyperparameter optimization and meta-learning,'' in
  \emph{ICML}, 2018, pp. 1563--1572.

\bibitem{META_LEARNING_REVIEW}
R.~Vilalta and Y.~Drissi, ``A perspective view and survey of meta-learning,''
  \emph{Artificial intelligence review}, vol.~18, no.~2, pp. 77--95, 2002.

\bibitem{trenches}
H.~B. McMahan, G.~Holt, D.~Sculley, M.~Young, D.~Ebner, J.~Grady, L.~Nie,
  T.~Phillips, E.~Davydov, D.~Golovin, S.~Chikkerur, D.~Liu, M.~Wattenberg,
  A.~M. Hrafnkelsson, T.~Boulos, and J.~Kubica, ``Ad click prediction: a view
  from the trenches,'' in \emph{SIGKDD}, 2013, pp. 1222--1230.

\bibitem{feedback}
O.~Chapelle, ``Modeling delayed feedback in display advertising,'' in
  \emph{SIGKDD}, 2014, pp. 1097--1105.

\bibitem{ensemble}
M.~Jahrer, A.~Toscher, J.-Y. Lee, J.~Deng, H.~Zhang, and J.~Spoelstra,
  ``Ensemble of collaborative filtering and feature engineered models for click
  through rate prediction,'' in \emph{KDDCup Workshop}, 2012.

\bibitem{FB}
X.~He, J.~Pan, O.~Jin, T.~Xu, B.~Liu, T.~Xu, Y.~Shi, A.~Atallah, R.~Herbrich,
  S.~Bowers, and J.~Q. Candela, ``Practical lessons from predicting clicks on
  ads at facebook,'' in \emph{ADKDD}, 2014, pp. 5:1--5:9.

\bibitem{FM}
S.~Rendle, Z.~Gantner, C.~Freudenthaler, and L.~Schmidt-Thieme, ``Fast
  context-aware recommendations with factorization machines,'' in \emph{SIGIR},
  2011, pp. 635--644.

\bibitem{CFM}
X.~Xin, B.~Chen, X.~He, D.~Wang, Y.~Ding, and J.~Jose, ``{CFM:} convolutional
  factorization machines for context-aware recommendation,'' in \emph{IJCAI},
  2019, pp. 3926--3932.

\bibitem{FAFM}
Y.~Juan, D.~Lefortier, and O.~Chapelle, ``Field-aware factorization machines in
  a real-world online advertising system,'' in \emph{WWW}, 2017, pp. 680--688.

\bibitem{AFM}
J.~Xiao, H.~Ye, X.~He, H.~Zhang, F.~Wu, and T.~Chua, ``Attentional
  factorization machines: Learning the weight of feature interactions via
  attention networks,'' in \emph{IJCAI}, 2017, pp. 3119--3125.

\bibitem{FWFM}
J.~Pan, J.~Xu, A.~L. Ruiz, W.~Zhao, S.~Pan, Y.~Sun, and Q.~Lu, ``Field-weighted
  factorization machines for click-through rate prediction in display
  advertising,'' in \emph{WWW}, 2018, pp. 1349--1357.

\bibitem{Roubst_FM}
Y.~Liu, X.~Xia, L.~Chen, X.~He, C.~Yang, and Z.~Zheng, ``Certifiable robustness
  to discrete adversarial perturbations for factorization machines,'' in
  \emph{SIGIR}.\hskip 1em plus 0.5em minus 0.4em\relax {ACM}, 2020, pp.
  419--428.

\bibitem{FM_aaai_2019_1}
M.~Lin, S.~Qiu, J.~Ye, X.~Song, Q.~Qian, L.~Sun, S.~Zhu, and R.~Jin, ``Which
  factorization machine modeling is better: {A} theoretical answer with optimal
  guarantee,'' in \emph{AAAI}, 2019, pp. 4312--4319.

\bibitem{FM_aaai_2019_2}
F.~Hong, D.~Huang, and G.~Chen, ``Interaction-aware factorization machines for
  recommender systems,'' in \emph{AAAI}, 2019, pp. 3804--3811.

\bibitem{FM_aaai_2019_3}
L.~Lan and Y.~Geng, ``Accurate and interpretable factorization machines,'' in
  \emph{AAAI}, 2019, pp. 4139--4146.

\bibitem{DCROSS}
Y.~Shan, T.~R. Hoens, J.~Jiao, H.~Wang, D.~Yu, and J.~Mao, ``Deep crossing:
  Web-scale modeling without manually crafted combinatorial features,'' in
  \emph{SIGKDD}, 2016, pp. 255--262.

\bibitem{resnet}
K.~He, X.~Zhang, S.~Ren, and J.~Sun, ``Deep residual learning for image
  recognition,'' in \emph{CVPR}, 2016, pp. 770--778.

\bibitem{FNN}
W.~Zhang, T.~Du, and J.~Wang, ``Deep learning over multi-field categorical data
  - - {A} case study on user response prediction,'' in \emph{ECIR}, 2016, pp.
  45--57.

\bibitem{NFM}
X.~He and T.-S. Chua, ``Neural factorization machines for sparse predictive
  analytics,'' in \emph{SIGIR}, 2017, pp. 355--364.

\bibitem{AUTOINT}
W.~Song, C.~Shi, Z.~Xiao, Z.~Duan, Y.~Xu, M.~Zhang, and J.~Tang, ``Autoint:
  Automatic feature interaction learning via self-attentive neural networks,''
  in \emph{CIKM}, 2019, pp. 1161--1170.

\bibitem{Transformer}
A.~Vaswani, N.~Shazeer, N.~Parmar, J.~Uszkoreit, L.~Jones, A.~N. Gomez,
  L.~Kaiser, and I.~Polosukhin, ``Attention is all you need,'' in \emph{NIPS},
  2017, pp. 5998--6008.

\bibitem{WD}
H.-T. Cheng, L.~Koc, J.~Harmsen, T.~Shaked, T.~Chandra, H.~Aradhye,
  G.~Anderson, G.~Corrado, W.~Chai, M.~Ispir \emph{et~al.}, ``Wide \& deep
  learning for recommender systems,'' in \emph{workshop on DLRS}, 2016, pp.
  7--10.

\bibitem{FPENN}
W.~Liu, R.~Tang, J.~Li, J.~Yu, H.~Guo, X.~He, and S.~Zhang, ``Field-aware
  probabilistic embedding neural network for {CTR} prediction,'' in
  \emph{RecSys}, 2018, pp. 412--416.

\bibitem{DCN}
R.~Wang, B.~Fu, G.~Fu, and M.~Wang, ``Deep {\&} cross network for ad click
  predictions,'' in \emph{ADKDD, 2017}, 2017, pp. 12:1--12:7.

\bibitem{He_plus_2}
F.~Feng, X.~He, J.~Tang, and T.~Chua, ``Graph adversarial training: Dynamically
  regularizing based on graph structure,'' \emph{{IEEE} Trans. Knowl. Data
  Eng.}, 2020.

\bibitem{He_plus_1}
S.~Li, W.~Lei, Q.~Wu, X.~He, P.~Jiang, and T.~Chua, ``Seamlessly unifying
  attributes and items: Conversational recommendation for cold-start users,''
  \emph{TOIS}, 2021.

\bibitem{fs2003}
Y.~Bengio, O.~Delalleau, N.~Roux, J.-F. Paiement, P.~Vincent, and M.~Ouimet,
  ``Feature extraction: Foundations and applications, chapter spectral
  dimensionality reduction,'' 2003.

\bibitem{fs2017}
J.~Gui, Z.~Sun, S.~Ji, D.~Tao, and T.~Tan, ``Feature selection based on
  structured sparsity: {A} comprehensive study,'' \emph{TNNLS}, vol.~28, no.~7,
  pp. 1490--1507, 2017.

\bibitem{TSCORE}
R.~Shumway, ``Statistics and data analysis in geology,'' 1987.

\bibitem{INFOFS}
H.~Peng, F.~Long, and C.~H.~Q. Ding, ``Feature selection based on mutual
  information: Criteria of max-dependency, max-relevance, and min-redundancy,''
  \emph{TPAMI}, vol.~27, no.~8, pp. 1226--1238, 2005.

\bibitem{SPEC}
H.~Liu, H.~Mao, and Y.~Fu, ``Robust multi-view feature selection,'' in
  \emph{ICDM}, 2016, pp. 281--290.

\bibitem{FISHER}
R.~O. Duda, P.~E. Hart, and D.~G. Stork, \emph{Pattern classification}.\hskip
  1em plus 0.5em minus 0.4em\relax John Wiley \& Sons, 2012.

\bibitem{l21fs}
J.~Liu, S.~Ji, and J.~Ye, ``Multi-task feature learning via efficient l2,
  1-norm minimization,'' in \emph{UAI}, 2009.

\bibitem{PL21}
D.~Ming, C.~Ding, and F.~Nie, ``A probabilistic derivation of {LASSO} and
  l12-norm feature selections,'' in \emph{AAAI}, 2019, pp. 4586--4593.

\bibitem{NMT}
D.~Bahdanau, K.~Cho, and Y.~Bengio, ``Neural machine translation by jointly
  learning to align and translate,'' in \emph{ICLR}, 2015.

\bibitem{RichFS}
R.~B. Girshick, J.~Donahue, T.~Darrell, and J.~Malik, ``Rich feature
  hierarchies for accurate object detection and semantic segmentation,'' in
  \emph{CVPR}, 2014, pp. 580--587.

\bibitem{SOD}
D.~Erhan, C.~Szegedy, A.~Toshev, and D.~Anguelov, ``Scalable object detection
  using deep neural networks,'' in \emph{CVPR}, 2014, pp. 2155--2162.

\bibitem{sptransformer}
M.~Jaderberg, K.~Simonyan, A.~Zisserman, and K.~Kavukcuoglu, ``Spatial
  transformer networks,'' in \emph{NIPS}, 2015, pp. 2017--2025.

\bibitem{qimg}
Z.~Laskar and J.~Kannala, ``Context aware query image representation for
  particular object retrieval,'' in \emph{SCIA}, 2017, pp. 88--99.

\bibitem{DVLSTM}
L.~Yao, A.~Torabi, K.~Cho, N.~Ballas, C.~J. Pal, H.~Larochelle, and A.~C.
  Courville, ``Describing videos by exploiting temporal structure,'' in
  \emph{ICCV}, 2015, pp. 4507--4515.

\bibitem{Dipole}
F.~Ma, R.~Chitta, J.~Zhou, Q.~You, T.~Sun, and J.~Gao, ``Dipole: Diagnosis
  prediction in healthcare via attention-based bidirectional recurrent neural
  networks,'' in \emph{SIGKDD}, 2017, pp. 1903--1911.

\bibitem{DFS}
Y.~Li, C.~Chen, and W.~W. Wasserman, ``Deep feature selection: Theory and
  application to identify enhancers and promoters,'' in \emph{RECOMB}, 2015,
  pp. 205--217.

\bibitem{FSD}
D.~Roy, K.~S.~R. Murty, and C.~K. Mohan, ``Feature selection using deep neural
  networks,'' in \emph{IJCNN}, 2015, pp. 1--6.

\bibitem{svd_pp}
Y.~Koren, ``Factorization meets the neighborhood: a multifaceted collaborative
  filtering model,'' in \emph{SIGKDD}, 2008, pp. 426--434.

\bibitem{NCF}
X.~He, L.~Liao, H.~Zhang, L.~Nie, X.~Hu, and T.~Chua, ``Neural collaborative
  filtering,'' in \emph{WWW}, 2017, pp. 173--182.

\bibitem{OPNCF}
X.~He, X.~Du, X.~Wang, F.~Tian, J.~Tang, and T.~Chua, ``Outer product-based
  neural collaborative filtering,'' in \emph{IJCAI}, 2018, pp. 2227--2233.

\bibitem{NAIS}
X.~He, Z.~He, J.~Song, Z.~Liu, Y.~Jiang, and T.~Chua, ``{NAIS:} neural
  attentive item similarity model for recommendation,'' \emph{{IEEE} Trans.
  Knowl. Data Eng.}, vol.~30, no.~12, pp. 2354--2366, 2018.

\bibitem{GDMAX}
T.~Lin, C.~Jin, and M.~I. Jordan, ``On gradient descent ascent for
  nonconvex-concave minimax problems,'' \emph{arXiv preprint arXiv:1906.00331},
  2019.

\bibitem{MAML}
C.~Finn, P.~Abbeel, and S.~Levine, ``Model-agnostic meta-learning for fast
  adaptation of deep networks,'' in \emph{ICML}, 2017, pp. 1126--1135.

\bibitem{HyperNetworks}
D.~Ha, A.~M. Dai, and Q.~V. Le, ``Hypernetworks,'' in \emph{ICLR}, 2017.

\bibitem{SGD}
L.~Bottou, F.~E. Curtis, and J.~Nocedal, ``Optimization methods for large-scale
  machine learning,'' \emph{{SIAM} Rev.}, vol.~60, no.~2, pp. 223--311, 2018.

\bibitem{tf}
S.~S. Girija, ``Tensorflow: Large-scale machine learning on heterogeneous
  distributed systems,'' \emph{Software available from tensorflow. org}, 2016.

\bibitem{pytorch}
A.~Paszke, S.~Gross, S.~Chintala, G.~Chanan, E.~Yang, Z.~DeVito, Z.~Lin,
  A.~Desmaison, L.~Antiga, and A.~Lerer, ``Automatic differentiation in
  pytorch,'' 2017.

\bibitem{CGP}
A.~Griewank and A.~Walther, \emph{Evaluating derivatives: principles and
  techniques of algorithmic differentiation}.\hskip 1em plus 0.5em minus
  0.4em\relax SIAM, 2008.

\bibitem{reverse_acc}
B.~Christianson, ``Automatic hessians by reverse accumulation,'' \emph{IMA
  Journal of Numerical Analysis}, 1992.

\bibitem{DMIN}
Z.~Xiao, L.~Yang, W.~Jiang, Y.~Wei, Y.~Hu, and H.~Wang, ``Deep multi-interest
  network for click-through rate prediction,'' in \emph{CIKM}.\hskip 1em plus
  0.5em minus 0.4em\relax {ACM}, 2020, pp. 2265--2268.

\bibitem{FGCNN}
B.~Liu, R.~Tang, Y.~Chen, J.~Yu, H.~Guo, and Y.~Zhang, ``Feature generation by
  convolutional neural network for click-through rate prediction,'' in
  \emph{WWW}, 2019, pp. 1119--1129.

\bibitem{AutoFIS}
B.~Liu, C.~Zhu, G.~Li, W.~Zhang, J.~Lai, R.~Tang, X.~He, Z.~Li, and Y.~Yu,
  ``Autofis: Automatic feature interaction selection in factorization models
  for click-through rate prediction,'' in \emph{SIGKDD}, 2020, pp. 2636--2645.

\bibitem{AutoGroup}
B.~Liu, N.~Xue, H.~Guo, R.~Tang, S.~Zafeiriou, X.~He, and Z.~Li, ``Autogroup:
  Automatic feature grouping for modelling explicit high-order feature
  interactions in {CTR} prediction,'' in \emph{SIGIR}, 2020, pp. 199--208.

\bibitem{groupctr}
L.~Yan, W.~Li, G.~Xue, and D.~Han, ``Coupled group lasso for web-scale {CTR}
  prediction in display advertising,'' in \emph{ICML}, 2014, pp. 802--810.

\end{thebibliography}

% insert where needed to balance the two columns on the last page with
% biographies
% \newpage

% You can push biographies down or up by placing
% a \vfill before or after them. The appropriate
% use of \vfill depends on what kind of text is
% on the last page and whether or not the columns
% are being equalized.

%\vfill

% Can be used to pull up biographies so that the bottom of the last one
% is flush with the other column.
%\enlargethispage{-5in}

\vspace{0cm}\begin{IEEEbiography}[{\includegraphics[width=1in,height=1.25in,clip,keepaspectratio]{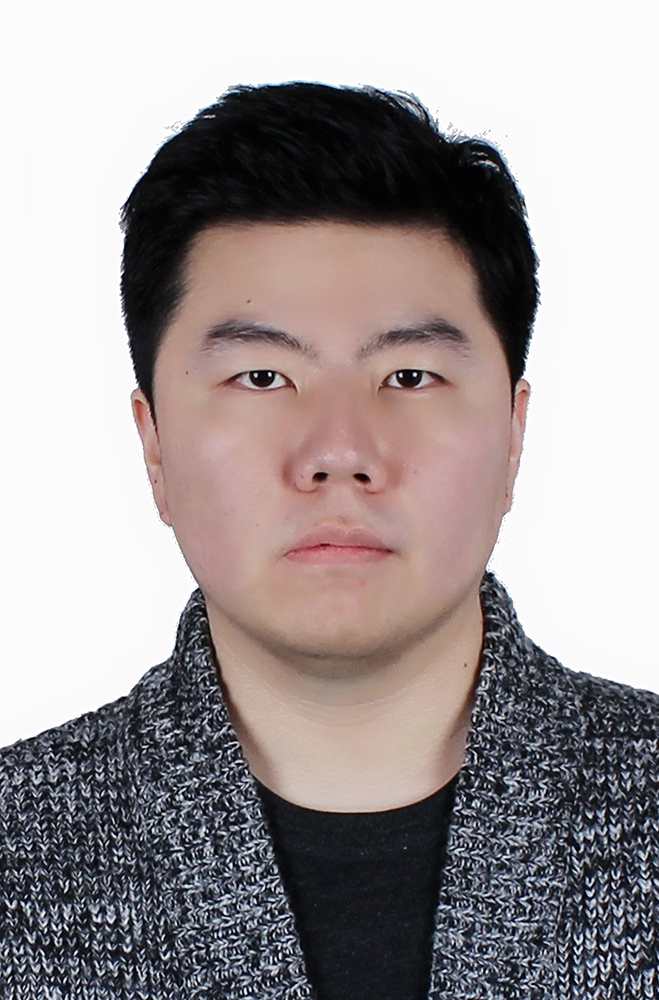}}]{Tianwei Cao}
received the M.E. degree in computer science and technology from University of Science and Technology Beijing (USTB) in 2017. He is currently pursuing the Ph.D. degree with University of Chinese Academy of Sciences. His research interests is data mining, especially recommender system and computational advertising.
\end{IEEEbiography}

% if you will not have a photo at all:

\vspace{0cm}\begin{IEEEbiography}[{\includegraphics[width=1in,height=1.25in,clip,keepaspectratio]{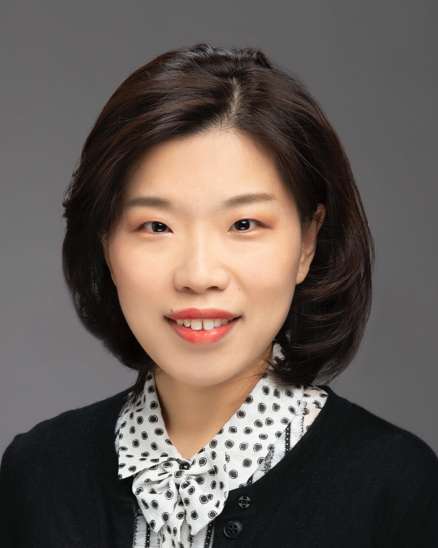}}]{Qianqian Xu}
	received the B.S. degree in computer
	science from China University of Mining
	and Technology in 2007 and the Ph.D. degree
	in computer science from University of Chinese
	Academy of Sciences in 2013. She is currently
	an Associate Professor with the Institute of Computing
	Technology, Chinese Academy of Sciences,
	Beijing, China. Her research interests include
	statistical machine learning, with applications
	in multimedia and computer vision. She has
	authored or coauthored 40+ academic papers in
	prestigious international journals and conferences, 
	including T-PAMI/IJCV/T-IP/T-KDE/ICML/NeurIPS/CVPR/AAAI, {etc}. 
	She served as a reviewer for several top-tier journals and conferences, 
	such as T-PAMI, ICML, NeurIPS, ICLR, CVPR, ECCV, AAAI, IJCAI, and ACM MM, etc.
\end{IEEEbiography}

\vspace{0cm}\begin{IEEEbiography}[{\includegraphics[width=1in,height=1.25in,clip,keepaspectratio]{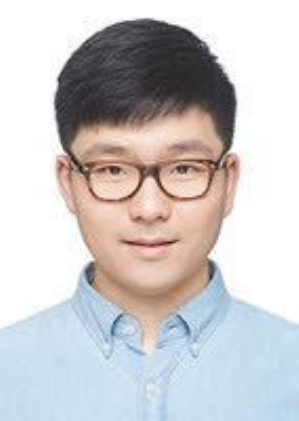}}]{Zhiyong Yang} received the M.Sc. degree in
	computer science and technology from University of Science and Technology Beijing
	(USTB) in 2017, and the Ph.D. degree from
	University of Chinese Academy of Sciences
	(UCAS) in 2021. He is currently a postdoctoral research fellow with the University of
	Chinese Academy of Sciences. His research
	interests lie in machine learning and learning theory, with special focus on AUC optimization, meta-learning/multi-task learning,
	and learning theory for recommender systems. He has authored
	or coauthored 20+ academic papers in top-tier international
	conferences and journals including T-PAMI/ICML/NeurIPS/CVPR.
	He served as a reviewer for several top-tier journals and conferences such as T-PAMI, ICML, NeurIPS and ICLR.
\end{IEEEbiography}

\vspace{0cm}\begin{IEEEbiography}[{\includegraphics[width=1in,height=1.25in,clip,keepaspectratio]{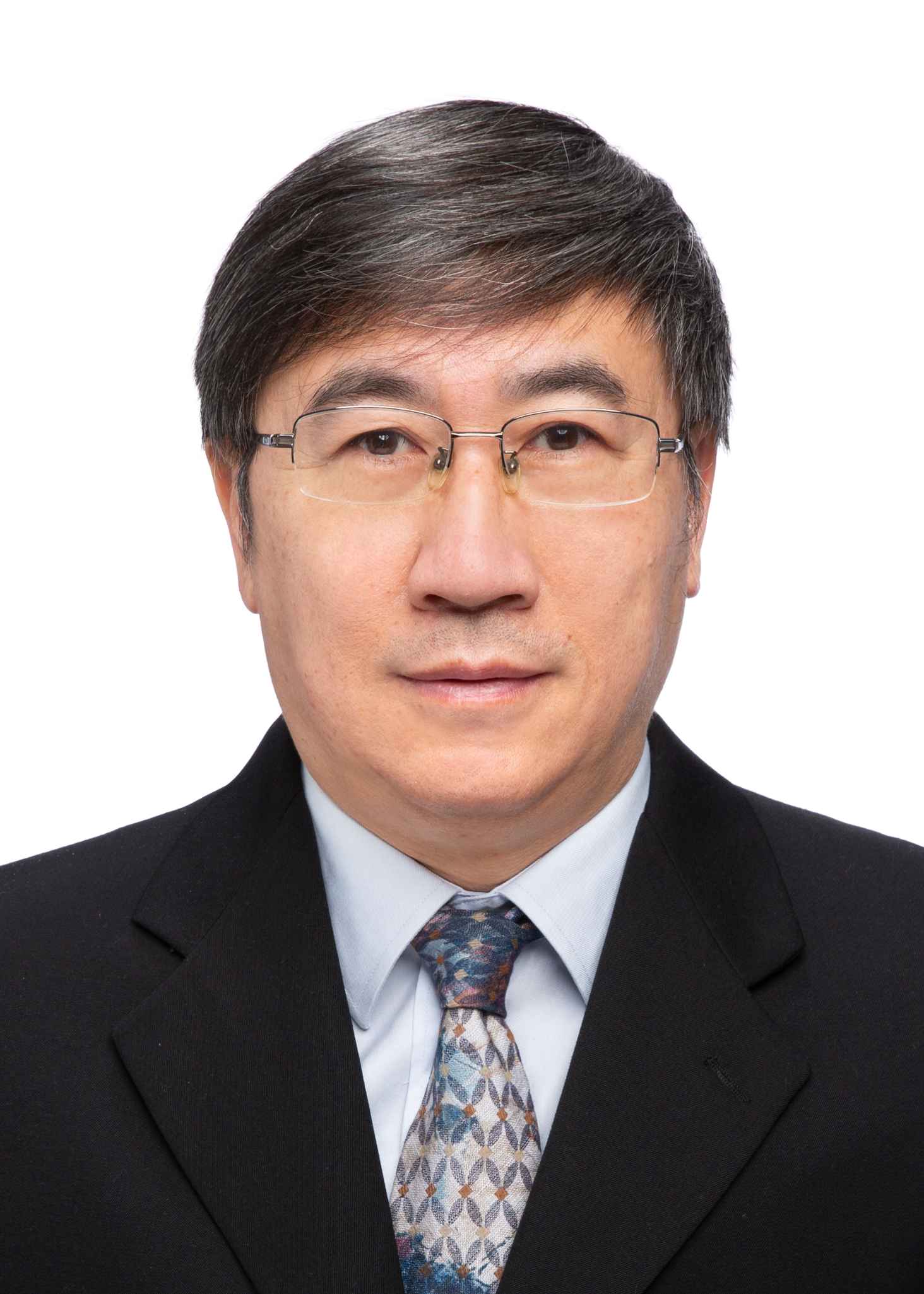}}]{Qingming Huang} 
is a chair professor in the University of Chinese Academy of Sciences and an adjunct research professor in the Institute of Computing Technology, Chinese Academy of Sciences. He graduated with a Bachelor degree in Computer Science in 1988 and Ph.D. degree in Computer Engineering in 1994, both from Harbin Institute of Technology, China. His research areas include multimedia computing, image processing, computer vision and pattern recognition. He has authored or coauthored more than 400 academic papers in prestigious international journals and top-level international conferences. He is the associate editor of IEEE Trans. on CSVT and Acta Automatica Sinica, and the reviewer of various international journals including IEEE Trans. on PAMI, IEEE Trans. on Image Processing, IEEE Trans. on Multimedia, etc. He is a Fellow of IEEE and has served as general chair, program chair, track chair and TPC member for various conferences, including ACM Multimedia, CVPR, ICCV, ICME, ICMR, PCM, BigMM, PSIVT, etc.
\end{IEEEbiography}

\clearpage
\appendix[Convergence Result of Alg.\ref{training}]\begin{table*}[htbp]
	\renewcommand\arraystretch{1.5} 
	\normalsize
	\centering
	\caption{Notations and Descreptions.}
	% \begin{tabular}{\hsize}{@{}@{\extracolsep{\fill}}c|l@{}}
	\begin{tabular*}{\hsize}{ll}
	% \toprule
	Notations & \multicolumn{1}{c}{Descreptions} \\
	\toprule
	$N_\theta$, $N_\phi$ & 
		$N_\theta$ and $N_\phi$ are dimensions of parameters vector $\bm{\theta}$ 
		and $\bm{\phi}$, and 
		$p=N_\theta+N_\phi$; \\
	$p$ & The number of all parameters: $p=N_\theta + N_\phi$; \\
	${\bm{\theta}}, {\bm{\theta}_w}, {\bm{\theta}_u}$ & 
	Parameters value of base predictor: 
	${\bm{\theta}}, {\bm{\theta}_w}, {\bm{\theta}_u}\in \mathbb{R}^{N_\theta}$; \\
	${\bm{\phi}}, {\bm{\phi}_w}, {\bm{\phi}_u}$ & 
	Parameters value of feature selector:
	${\bm{\theta}}, {\bm{\phi}_w}, {\bm{\phi}_u} \in \mathbb{R}^{N_\phi}$; \\
	$\bm{z}$, $\bm{z}_{w}$, $\bm{z}_{u}$ & 
	All parameters:
	$\bm{z}=[\bm{\theta}^\top,\bm{\phi}^\top]^\top \in \mathbb{R}^p$,
	$\bm{z}_{w}=[{\bm{\theta}^\top_w},{\bm{\phi}^\top_q}]^\top \in \mathbb{R}^p$,
	$\bm{z}_{u}=[{\bm{\theta}^\top_u},{\bm{\phi}^\top_u}]^\top \in \mathbb{R}^p$; \\
	$\bm{\theta}^{(i)}(\bm{\phi})$ & 
	The value of $\bm{\theta}$ after $i$ loops of inner-GD;\\
	$\bm{z}_{w}^{{(i})}$, $\bm{z}_{u}^{{(i})}$ & 
	Parameters value after $i$ loops of inner-GD:
	$\bm{z}_{w}^{{(i})}=
	[{(\bm{\theta}^{(i)}(\bm{\phi}_w))}^\top ,{\bm{\phi}_w}^\top]^\top$,
	$\bm{z}_{u}^{{(i})}=
	[{(\bm{\theta}^{(i)}(\bm{\phi}_u))}^\top ,{\bm{\phi}_u}^\top]^\top$; \\
	% $G_{}^{(i,j)}$ & $
	% \nabla_{\bm{z}_{}^{{(i})}}                
	% L^{(in)}(\bm{z}_{}^{{(j})})$, gradient across inner-GD loops; \\
	% $\hat{G}_{}^{(i,j)}$ & $\nabla_{\bm{z}_{}^{{(i})}}          
	% L^{(out)}(\bm{z}_{}^{{(j})})$ \\
	$G_{w}^{(i,j)}, G_{u}^{(i,j)}$ & 
	Gradient on $\mathcal{D}^{(in)}$ across inner-GD loops: 
	$G_{w}^{(i,j)}=\nabla_{\bm{z}_{w}^{{(i})}}              
	L^{(in)}(\bm{z}_{w}^{{(j})}), 
	G_{u}^{(i,j)}=\nabla_{\bm{z}_{u}^{{(i})}}              
	L^{(in)}(\bm{z}_{u}^{{(j})})$;\\
	$\hat{G}_{w}^{(i,j)}, \hat{G}_{u}^{(i,j)}$ & 
	Gradient on $\mathcal{D}^{(out)}$ across inner-GD loops: 
	$G_{w}^{(i,j)}=\nabla_{\bm{z}_{w}^{{(i})}}              
	L^{(out)}(\bm{z}_{w}^{{(j})}), 
	G_{u}^{(i,j)}=\nabla_{\bm{z}_{u}^{{(i})}}              
	L^{(out)}(\bm{z}_{u}^{{(j})})$;\\
	$\mathcal{G}_{\square}$ & 
	The outer-level gradient: $\nabla_{\square}{\mathcal{L}(\bm{z})}$,
	where $\square \in \{ \bm{z}, \bm{\phi}, \bm{\theta}\}$
	and $\mathcal{L}$ is defined in Eq.(\ref{a_plus_b});\\
	$H_{w}^{(i)}, H_{u}^{(i)}$ & 
	Hessian at $j$-th inner-GD step:
	$\nabla^{2}_{\bm{z}_{w}^{{(i})}}            
	L^{(in)}(\bm{z}_{w}^{{(i})})^\top$,
	$\nabla^{2}_{\bm{z}_{u}^{{(i})}}            
	L^{(in)}(\bm{z}_{u}^{{(i})})^\top$; \\
	$\mathcal{T}$ & 
	Meta-learning Task: 
	$\mathcal{T}_i=\{\mathcal{D}^{(in)}_i, \mathcal{D}^{(out)}_i\}$,
	where subscript $i$ is the number of outer-GD step; \\
	$\mathcal{P}(\mathcal{T})$ & 
	The distribution of $\mathcal{T}$;\\ 
	$\bm{z}_{(i)}$ & 
	Parameters value after $i$ loops of Alg.\ref{training} (outer-GD). \\
	$\mathcal{F}$ & 
	Expectated loss w.r.t $\mathcal{P}(\mathcal{T})$:
	$
	\mathbb{E}(                                           
		{\mathcal{L}(\bm{z})})$, ; \\
	% $\nabla_{\square}\mathcal{F}$ & 
	% Expectated gradient w.r.t $\mathcal{P}(\mathcal{T})$: 
	% $
	% \mathbb{E}(                        
	% 	\nabla_{\square}{\mathcal{L}(\bm{z})})$, 
	% where $\square \in \{ \bm{z}, \bm{\phi}, \bm{\theta}\}$; \\
	$\mathcal{F}_i, \mathcal{F}(\bm{z}_{(i)})$ & 
	Expectated loss w.r.t $\mathcal{P}(\mathcal{T})$ at $i$-th outer-GD step:
	$\mathcal{F}_i=\mathcal{F}(\bm{z}_{(i)})=\mathbb{E}(                                         
		{\mathcal{L}(\bm{z}_{(i)})})$; \\
	% $\nabla_{\square}\mathcal{F}_{i}$ & 
	% Expectated gradient w.r.t $\mathcal{P}(\mathcal{T})$ at $i$-th outer-GD step:
	% $\mathbb{E}(\nabla_{\square} 
	% 	{\mathcal{L}(\bm{z}_{(i)})})$,
	% where $\square \in \{ \bm{z}, \bm{\phi}, \bm{\theta}\}$; \\
	\bottomrule
	\end{tabular*}%
	\label{tab:symbols}%
\end{table*}%

We start by proving the following lemmas. After that, we
prove the convergence of Theorem.(\ref{converge_thm}). The used notations
are listed in Tab.\ref{tab:symbols}.

\begin{lemma}\label{bound_meta_grad}
	Let
	$
	C_3=N \beta C_2 (\sqrt{p}+\beta C_1)^{{N}-1}C_0 
	$
	and 
	$
	C_{L1}=2 C_1 + \mu C_3.
	$
	Based on the notations defined in Tab.\ref{tab:symbols},
	for arbitrary parameters 
	$\bm{z}_{w}, \bm{z}_{u} 
	\in \mathbb{R}^{p}$, 
	we have
	$$
	\| 
	\nabla_{\bm{z}_{w}}\mathcal{F} - 
	\nabla_{\bm{z}_{u}}\mathcal{F}
	\| \leq C_{L1} \| \bm{z}_{w} - \bm{z}_{u}\|.
	$$
\end{lemma}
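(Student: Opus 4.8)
The plan is to prove the per-task bound first and then pass to the expectation. Since $\mathcal{F}(\bm z)=\mathbb{E}_{p(\mathcal{T})}\!\left[\mathcal{L}(\bm z)\right]$ is an average of the per-task losses over the task distribution and the norm is convex, it suffices to show that for an \emph{arbitrary} fixed pair of mini-batches $(\mathcal{D}^{(in)},\mathcal{D}^{(out)})$ one has $\|\nabla_{\bm z}\mathcal{L}(\bm z_w)-\nabla_{\bm z}\mathcal{L}(\bm z_u)\|\le C_{L1}\|\bm z_w-\bm z_u\|$; Jensen's inequality then yields the same bound for $\nabla_{\bm z}\mathcal{F}$. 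Recalling $\mathcal{L}(\bm z)=L^{(in)}(\bm\theta,\bm\phi)+\mu\,L^{(out)}(\bm\theta^{(N)},\bm\phi)$ from Eq.~(\ref{a_plus_b}), and that the outer-level dependence on $\bm\theta$ enters only through $L^{(in)}$ (the inner loop of Alg.~\ref{calc_U} is initialised from a detached copy of $\bm\theta$), I would split $\nabla_{\bm z}\mathcal{L}$ into the ``direct'' part $\nabla_{\bm z}L^{(in)}$ and the ``meta'' part $\mu\,\nabla_{\bm z}\!\big(L^{(out)}(\bm\theta^{(N)},\bm\phi)\big)$, and bound the two parts by $2C_1$ and $\mu C_3$ respectively.

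The direct part is routine: Assumption~\ref{lipchiz}(2) bounds every block of the Hessian of $L^{(in)}$ by $C_1$, so by the mean-value theorem the $\bm\theta$- and $\bm\phi$-components of $\nabla_{\bm z}L^{(in)}$ are each $C_1$-Lipschitz in $\bm z$, and adding the two block estimates gives the factor $2C_1$. The meta part is where the work lies. I would write $\nabla_{\bm z}\!\big(L^{(out)}(\bm\theta^{(N)},\bm\phi)\big)$ by differentiating through the $N$ steps of Eq.~(\ref{GD}) exactly as in Eq.~(\ref{chain_L_out}): it is a sum over the inner steps of terms of the form (a product of at most $N-1$ single-step Jacobians $I-\beta\nabla^2 L^{(in)}$) $\times$ (a Hessian block of $L^{(in)}$) $\times$ ($\nabla L^{(out)}$ at the unrolled iterate), plus the trailing term $\hat{\bm g}_2(N)$. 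To control the difference of this quantity at $\bm z_w$ and at $\bm z_u$, I would establish by induction on the inner-step index two estimates in parallel: (i) a \emph{uniform} bound --- each single-step Jacobian has norm at most $\sqrt{p}+\beta C_1$ by Assumption~\ref{lipchiz}(2), hence a length-$\ell$ product is bounded by $(\sqrt{p}+\beta C_1)^{\ell}$, and $\|\nabla L^{(out)}\|\le C_0$, $\|\nabla^2 L^{(in)}\|\le C_1$ by Assumption~\ref{lipchiz}(1)--(2); and (ii) a \emph{Lipschitz} bound --- the change of each single-step Jacobian (and of each Hessian block) between the two trajectories is at most $\beta C_2$ (resp.\ $C_2$) times the displacement of the intermediate iterate, by Assumption~\ref{lipchiz}(3), while the intermediate iterates are themselves $(\sqrt{p}+\beta C_1)^{\ell}$-Lipschitz in $\bm z$ by (i). Telescoping the product differences via $\prod_i a_i-\prod_i b_i=\sum_k\!\big(\prod_{i<k}a_i\big)(a_k-b_k)\big(\prod_{i>k}b_i\big)$ converts these per-factor estimates into a single bound, and the sum over the $N$ inner steps produces exactly the combinatorial prefactor in $C_3=N\beta C_2(\sqrt{p}+\beta C_1)^{N-1}C_0$. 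Combining the two parts with the triangle inequality gives $\|\nabla_{\bm z}\mathcal{L}(\bm z_w)-\nabla_{\bm z}\mathcal{L}(\bm z_u)\|\le(2C_1+\mu C_3)\|\bm z_w-\bm z_u\|$, and taking the expectation over $p(\mathcal{T})$ concludes.

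The main obstacle is step (ii): propagating a genuine Lipschitz --- not merely a uniform --- bound through the $N$-fold composition that defines $\bm\theta^{(N)}$. Each inner step is only a \emph{near}-identity map, so a perturbation of $\bm z$ is amplified by the factor $\sqrt{p}+\beta C_1\ge 1$ at every step, and one must carefully keep track of which product factor is evaluated along which trajectory and invoke the third-order bound of Assumption~\ref{lipchiz}(3) at the correct intermediate iterates $\bm\theta^{(k)}$. It is here that the \emph{global} validity of Assumption~\ref{lipchiz} on all of $\mathbb{R}^p$ is used: without it one would first have to argue separately that the unrolled iterates never leave the region where the three bounds hold, which would require an a priori bound on their magnitude.
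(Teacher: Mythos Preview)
Your proposal is correct and follows essentially the same approach as the paper: split $\nabla_{\bm z}\mathcal{L}$ into the direct $L^{(in)}$ piece (contributing $2C_1$ via the block-wise Lipschitz bounds) and the meta piece $\mu\,\nabla_{\bm\phi}L^{(out)}(\bm\theta^{(N)},\bm\phi)$, then control the latter by propagating both a uniform bound on the single-step Jacobians $I-\beta\nabla^2 L^{(in)}$ (of norm $\le\sqrt{p}+\beta C_1$) and a Lipschitz bound on their differences (via $C_2$) through the $N$ inner steps, finally passing to the expectation over $p(\mathcal{T})$. The only cosmetic difference is that the paper organises the meta-part estimate as a backward recursion on auxiliary quantities $A_1(j)=\hat G_w^{(j,N)}-\hat G_u^{(j,N)}$ rather than your explicit telescoping identity $\prod a_i-\prod b_i=\sum_k(\prod_{i<k}a_i)(a_k-b_k)(\prod_{i>k}b_i)$, but the two bookkeepings are equivalent and yield the same constant $C_3$.
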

\begin{proof}
	First, we have
	\begin{equation}
		\begin{aligned}
			\| \nabla_{\bm{z}} \mathcal{L}
			(\bm{\theta},\bm{\phi}) \| 
			  & \leq 
			\| \nabla_{\bm{\theta}} \mathcal{L}
			(\bm{\theta},\bm{\phi}) \| +
			\| \nabla_{\bm{\phi}} \mathcal{L}
			(\bm{\theta},\bm{\phi}) \|.\\
		\end{aligned}
	\end{equation}
	Based on Eq.(\ref{a_plus_b}), we can get that  
	\begin{equation}
		\begin{aligned}
			\nabla_{\bm{\theta}} \mathcal{L}
			(\bm{\theta},\bm{\phi}) & = 
			\nabla_{\bm{\theta}} L^{(in)}
			(\bm{\theta},\bm{\phi})\\
		\end{aligned}
	\end{equation}
	and 
	\begin{equation}\label{outer_grad}
		\begin{aligned}
			\nabla_{\bm{\phi}} \mathcal{L}
			(\bm{\theta},\bm{\phi}) & = 
			\nabla_{\bm{\phi}} L^{(in)}
			(\bm{\theta},\bm{\phi})+
			\mu \nabla_{\bm{\phi}} L^{(out)}
			(\bm{\theta}^{{(N)}},\bm{\phi}).\\
		\end{aligned}
	\end{equation}
	Thus we can derive that
	\begin{equation}\label{lemma_1_main}
		\begin{aligned}
			     & \| \nabla_{\bm{z}_{w}} \mathcal{L} 
			(\bm{\theta}_w, \bm{\phi}_w) - 
			\nabla_{\bm{z}_{u}} \mathcal{L}
			(\bm{\theta}_u, \bm{\phi}_u)
			\|\\
			\leq &                                        
			\| \nabla_{\bm{\theta}_w} L^{(in)}
			(\bm{\theta}_w,\bm{\phi}_w) -
			\nabla_{\bm{\theta}_u} L^{(in)}
			(\bm{\theta}_u,\bm{\phi}_u)
			\| \\
			     & + \| \nabla_{\bm{\phi}_w} L^{(in)}       
			(\bm{\theta}_w,\bm{\phi}_w) -
			\nabla_{\bm{\phi}_u} L^{(in)}
			(\bm{\theta}_u,\bm{\phi}_u)
			\| \\
			     & + \mu \| \nabla_{\bm{\phi}_w} L^{(out)}     
			({\bm{\theta}_{w}}^{{(N)}},\bm{\phi}_w) -
			\nabla_{\bm{\phi}_u} L^{(out)}
			({\bm{\theta}_{u}}^{{(N)}},\bm{\phi}_u)
			\| \\
			\leq & 2 C_1 \| \bm{z}_w - \bm{z}_u\| + \mu A_0,              
		\end{aligned}
	\end{equation}
	where we define
	\begin{equation}
		A_0=\| \nabla_{\bm{\phi}_w} L^{(out)}
		({\bm{\theta}_{w}}^{{(N)}},\bm{\phi}_w) -
		\nabla_{\bm{\phi}_u} L^{(out)}
		({\bm{\theta}_{u}}^{{(N)}},\bm{\phi}_u)
		\|.
	\end{equation}
	Notably, ${\bm{\theta}_{w}}^{{(N)}}$ and 
	${\bm{\theta}_{u}}^{{(N)}}$ are functions of $\bm{\phi}_w$ and $\bm{\phi}_u$,
	respectively.
				
	We next upper-bound $A$ in the above inequality. To this end, 
	we first regard 
	$L^{(out)}(\bm{\theta}^{{(N)}},\bm{\phi})$ as 
	a special case of ${N}$-step MAML loss \cite{MAML} where
	both $\bm{\theta}$ and $\bm{\phi}$ are meta-learned but
	with different learning rate. 
	Specifically, we reformulate the inner-level learning rate as a vector 
	$\bm{q}_\beta \in 
	\mathbb{R}^{p}$
	such that
	\begin{equation}
		\bm{q}_\beta[i]=
		\left\{
		\begin{array}{lr}
			\beta,                        
			\ \ i \leq {N_\theta}; \\
			0,                            
			\ \ otherwise.                \\
		\end{array}
		\right.
	\end{equation}
	This means that the inner learning rate for each parameter $i$ is 
	$\bm{q}_\beta[i]$. 
	% Similarly, the outer learning rates are
	% represented as
	% $\bm{q}_\gamma \in 
	%  \mathbb{R}^{({N_\theta}+{N_\phi})}$
	% such that
	% \begin{equation}
	% 	\bm{q}_\gamma[i]=
	% 	\left\{
	% 		\begin{array}{lr}
	% 			0,
	% 			\ \ i \leq {N_\theta};    \\
	% 			\gamma,
	% 			\ \ otherwise.   \\
	% 		\end{array}
	% 	\right.
	% \end{equation}
	Now the initial
	values $\bm{z}_{w}^{{(0)}}=\bm{z}$ are trainable
	parameters and $\bm{z}_{}^{{(i+1})}=
	[{(\bm{\theta}^{(i+1)}(\bm{\phi}))}^\top ,{\bm{\phi}}^\top]^\top
	=\bm{z}_{}^{{(i})}-\bm{q}_\beta \odot G_{w}^{(i,i)}$.
	% Furthermore, we introduce the following auxiliary notations
	% for each $\square, \triangle \in \{w, u\}, 1 \leq j \leq N$: 
	% \begin{equation}
	% 	\begin{aligned}
	% 		&A_1(\square)=L_{1}^{(out)'}(\bm{\theta}_{\square}^{{(N)}},\bm{\phi}\square), 
	% 		; \\
	% 		&A_2(\square, j)=\nabla_{\bm{\phi}_\square}\bm{\theta}^{(N-j)}_\square
	% 		- 
	% 		\sum_{i=1}^{j}\beta L_{1,2}^{(in)''}({\bm{\theta}_{\square}}^{{(N-i)}},\bm{\phi}_\square)
	% 		; \\
	% 		&A_3(\square, j)=
	% 		- 
	% 		\sum_{i=1}^{j}\beta L_{1,2}^{(in)''}({\bm{\theta}_{\square}}^{{(N-i)}},\bm{\phi}_\square)
	% 		; \\
	% 		&A_4(\square, \triangle)=
	% 		A_3(\square,N)^\top A_1({\square}) - A_3(\square,N)^\top A_1({\triangle})
	% 		; \\
	% 	\end{aligned}
	% \end{equation}
	Thus we can reformulate $A_0$ as following:
	\begin{equation}
		\begin{aligned}
			A_0 & =\| \nabla_{\bm{\phi}_w} L^{(out)}         
			({\bm{\theta}_{w}}^{{(N)}},\bm{\phi}_w) -
			\nabla_{\bm{\phi}_u} L^{(out)}
			({\bm{\theta}_{u}}^{{(N)}},\bm{\phi}_u)
			\| \\
			  & \leq \| \nabla_{\bm{z}} L^{(out)} 
			(\bm{z}_{w}^{{(N})}) -
			\nabla_{\bm{z}} L^{(out)}
			(\bm{z}_{u}^{{(N})}) \| \\
			  & =\|                                   
			\hat{G}_{w}^{(0,{N})}-
			\hat{G}_{u}^{(0,{N}))}
			\|.
			% A_0 =& \| \nabla_{\bm{\phi}_w} L^{(out)}         
			% ({\bm{\theta}_{w}}^{{(N)}},\bm{\phi}_w) -
			% \nabla_{\bm{\phi}_u} L^{(out)}
			% ({\bm{\theta}_{u}}^{{(N)}},\bm{\phi}_u)
			% \| \\
			% \leq & 
			% \|
			% 	(\nabla_{\bm{\phi}_w}\bm{\theta}^{(N)}_w)^\top
			% 	A_1({w}) - 
			% 	(\nabla_{\bm{\phi}_u}\bm{\theta}^{(N)}_u)^\top
			% 	A_1({u})
			% \| \\
			% \leq & 
			% \|
			% 	A_2(w,1)^\top A_1({w}) - A_2(u,1)^\top A_1({u})
			% \| \\
			% \leq & 
			% \|
			% 	A_2(w,2)^\top A_1({w}) - A_2(u,2)^\top A_1({u})
			% \| \\
			% \cdots & \\
			% \leq & 
			% \|
			% 	A_2(w,N)^\top A_1({w}) - A_2(u,N)^\top A_1({u})
			% \| \\
			% = & 
			% \|
			% 	A_3(w,N)^\top A_1({w}) - A_3(u,N)^\top A_1({u})
			% \| \\
			% = & 
			% \|
			% 	A_3(w,N)^\top A_1({w}) - A_3(w,N)^\top A_1({u})
			% \\
			% & \ 
			% 	+ A_3(w,N)^\top A_1({u})- A_3(u,N)^\top A_1({u})
			% \| \\
			% \leq & 
			% \|
			% 	A_3(w,N)^\top A_1({w}) - A_3(w,N)^\top A_1({u}) \|
			% \\
			% & 
			% 	+ \| A_3(w,N)^\top A_1({u})- A_3(u,N)^\top A_1({u})
			% \| \\
			% \leq & 
			% \|A_3(w,N)\| \| A_1({w}) -  A_1({u}) \|
			% \\
			% & 
			% 	+ \| A_3(w,N)- A_3(u,N) \| \| A_1({u}) \|
			% \\
			% \leq & 
			% C_1 \|A_3(w,N)\| \| \bm{z}_{w}^{{(N})}-
			% \bm{z}_{u}^{{(N})}\|
			% \\
			% & 
			% 	+ C_0 \| A_3(w,N)- A_3(u,N) \|
			% \\
			% \leq & 
			% \beta C_1^{2} \sum_{i=0}^{N-1} \| \bm{z}_{w}^{{(i})}- \bm{z}_{u}^{{(i})}\|
			% % \\
			% % & 
			% + C_0 \| A_3(w,N)- A_3(u,N) \|
			% \\
			% \leq & 
			% \beta C_1^{2} \sum_{i=0}^{N-1} \| \bm{z}_{w}^{{(i})}- \bm{z}_{u}^{{(i})}\|
			% + \beta C_0 C_2 \sum_{i=0}^{N-1} \| \bm{z}_{w}^{{(i})}- \bm{z}_{u}^{{(i})}\|.
			% \\
		\end{aligned}
	\end{equation}
	Then we come to the inner-GD rollouts. For each
	$0 \leq j < {N}$, we have
	\begin{equation}
		\begin{aligned}
			     & \left\|              
			\bm{z}_{w}^{{(j+1})}-
			\bm{z}_{u}^{{(j+1})}\right\| \\
			=    & \left\|              
			\bm{z}_{w}^{{(j})}-
			\bm{z}_{u}^{{(j})}-
			\bm{q}_\beta \odot \left(
			G_{w}^{(j,j)}-
			G_{u}^{(j,j)}
			\right)
			\right\| \\
			\leq &                      
			\left\| 
			\bm{z}_{w}^{{(j})}-
			\bm{z}_{u}^{{(j})}
			\right\|+
			\left\|
			\bm{q}_\beta \odot \left(
			G_{w}^{(j,j)}-
			G_{u}^{(j,j)}
			\right)
			\right\| \\
			\leq &                      
			\left\| 
			\bm{z}_{w}^{{(j})}-
			\bm{z}_{u}^{{(j})}
			\right\|+
			\left\|
			\beta \left(
			G_{w}^{(j,j)}-
			G_{u}^{(j,j)}
			\right)
			\right\| \\
			=    &                      
			\left\| 
			\bm{z}_{w}^{{(j})}-
			\bm{z}_{u}^{{(j})}
			\right\|+\beta
			\left\|
			G_{w}^{(j,j)}-
			G_{u}^{(j,j)}
			\right\| \\
			=    & \left\|              
			\bm{z}_{w}^{{(j})}-
			\bm{z}_{u}^{{(j})}
			\right\|+\beta C_1
			\left\| 
			\bm{z}_{w}^{{(j})}-
			\bm{z}_{u}^{{(j})}
			\right\| \\
			=    & (1+\beta C_1)\left\| 
			\bm{z}_{w}^{{(j})}-
			\bm{z}_{u}^{{(j})}
			\right\| \\
	% 	\end{aligned}
	% \end{equation}
	% Unfolding this recursive inequality, we can further 
	% derive that
	% \begin{equation}
	% 	\begin{aligned}
			% & \left\| 
			% \bm{z}_{w}^{{(j+1})}-
			% \bm{z}_{u}^{{(j+1})}
			% \right\| \\
			& \cdots \\
			\leq &        
			(1+\beta C_1)^{j+1}
			\left\| 
			\bm{z}_{w}^{{(0})}-
			\bm{z}_{u}^{{(0})}
			\right\| \\
			\leq &        
			(\sqrt{p}+\beta C_1)^{j+1}
			\left\| 
			\bm{z}_{w}-
			\bm{z}_{u}
			\right\|.
		\end{aligned}
	\end{equation}	
	From this, we can get the following inequality:
	\begin{equation}
		\begin{aligned}
			     & \left\|     
			\nabla_{\bm{z}_{w}^{{(N})}} L^{(out)}
			(\bm{z}_{w}^{{(N})}) -
			\nabla_{\bm{z}_{u}^{{(N})}} L^{(out)}
			(\bm{z}_{u}^{{(N})})
			\right\| \\
			\leq & C_1 \left\| 
			\bm{z}_{w}^{{(N})}-
			\bm{z}_{u}^{{(N})}
			\right\| \\
			\leq & C_1         
			(\sqrt{p}+ \beta C_1)^{{N}}
			\left\| 
			\bm{z}_{w}-
			\bm{z}_{u}
			\right\|.
		\end{aligned}
	\end{equation}
	Meanwhile, we have
	\begin{equation}\label{G_norm}
		\begin{aligned}
			\| 
			\hat{G}_{w}^{(j,{N})}  
			\| 
			=    &         
			\| 
			(
			\prod_{i=j}^{{N-1}}
			(
			\bm{I}-\bm{q_\beta}
			\odot H_{w}^{(i)} 
			)
			)
			\hat{G}_{w}^{({N},{N})}
			\| \\
			\leq & 
			(  
			\prod_{i=j}^{{N-1}}
			\Vert
			\bm{I}-\bm{q_\beta}
			\odot H_{w}^{(i)} 
			\Vert
			)
			\Vert
			\hat{G}_{w}^{({N},{N})}
			\Vert \\
			\leq & 
			(     
			\prod_{i=j}^{{N-1}}
			(
			\sqrt{p}+
			\Vert
			\beta H_{w}^{(i)} 
			\Vert
			))
			\Vert
			\hat{G}_{w}^{({N},{N})}
			\Vert \\
			\leq &         
			(\sqrt{p} + \beta C_1)^{{N}-j}C_0 
		\end{aligned}
	\end{equation}
	Furthermore, we introduce the following auxiliary notations for each
	$0 \leq i < {N}$ and $0 \leq j < {N}$: 
	\begin{equation}
		\begin{aligned}
			&A_1(j)=\hat{G}_{w}^{(j,N)}-\hat{G}_{u}^{(j,N)}, \\
			&A_2(j)=H_{w}^{(j)}\hat{G}_{w}^{(j+1,N)}-
			  H_{u}^{(j)}\hat{G}_{u}^{(j+1,N)},\\
			&A_3(j)=\left(
				H_{w}^{(j)}-
				H_{u}^{(j)}
				\right)
			 \hat{G}_{u}^{(j+1,{N})}, \\
			&A_4(j)= \beta C_2
			(\sqrt{p}+\beta C_1)^{{N}-j-1}C_0.
		\end{aligned}
	\end{equation}
	Based on these, we can get the following inequality:
	\begin{equation}
		\begin{aligned}
			% & \left\| 
			% \hat{G}_{w}^{(j,{N})}-
			% \hat{G}_{u}^{(j,{N})}
			% \right\| \\
			&\Vert A_1(j) \Vert \\
			=      & 
			\left\| 
			A_1(j+1) -\bm{q}_\beta \odot A_2(j)
			\right\| \\
			=      &         
			\| 
			A_1(j+1)
			-\bm{q}_\beta \odot H_{w}^{(j)}A_1(j+1)
			-\bm{q}_\beta \odot A_3(j)
			\| \\
			\leq   &         
			\left\| 
			A_1(j+1)
			\right\| 
			+             
			\beta
			\| H_{w}^{(j)} \|
			\left\| 
			A_1(j+1)
			\right\| 
			+             
			\beta
			\left\| 
			A_3(j)
			\right\| 
			\\
			\leq   &         
			(\sqrt{p}+\beta C_1)
			\left\| 
			A_1(j+1)
			\right\| 
			+             
			A_4(j)
			\|
			\bm{z}_{w}^{{(j})}-
			\bm{z}_{u}^{{(j})}
			\|.
		\end{aligned}
	\end{equation}
	Thus the value of $A_0=\|A_1(0)\|$ can be obtained by applying 
	the following chain of inequalities applies 
	for each $0 \leq j < N$:
	\begin{equation}\label{bound_A}
		\begin{aligned}
			A_0
			% =& \| 
			% \hat{G}_{w}^{(0,{N})}-
			% \hat{G}_{u}^{(0,{N}))}
			% \| \\
			\leq &
			(\sqrt{p}+\beta C_1)
			\left\| 
			A_1(1)
			\right\| 
			+             
			A_4(0)
			\|
			\bm{z}_{w}^{}-
			\bm{z}_{u}^{}
			\| \\
			\leq &
			(\sqrt{p}+\beta C_1)^2
			\left\| 
			A_1(2)
			\right\| \\
			& +  
			\sum_{j=0}^{1}           
			A_4(j) (\sqrt{p}+\beta C_1)^j
			\|
			\bm{z}_{w}^{}-
			\bm{z}_{u}^{}
			\|
			 \\
			% \\
			% \leq      &              
			% \mathbb{E}\left(
			% (1+\beta C_1)
			% \left\| 
			% \hat{G}_{w}^{(1,{N})}-
			% \hat{G}_{u}^{(1,{N})} 
			% \right\| + \right.\\
			% \ \ \ \ \   &              
			% \left. 
			% \beta C_2
			% (1+\beta C_1)^{{N}-1}C_0
			% \left\|
			% \bm{z}_{w}^{{(0})}-
			% \bm{z}_{u}^{{(0})}
			% \right\|
			% \right) \\
			% \leq      &              
			% \mathbb{E}\left(
			% (1+\beta C_1)^2
			% \left\| 
			% \hat{G}_{w}^{(2,{N})}-
			% \hat{G}_{u}^{(2,{N})} 
			% \right\| + \right.\\
			% \ \ \ \ \   &              
			% (1+\beta C_1)
			% (1+\beta C_1)^{{N}-2}C_0
			% \left\|
			% \bm{z}_{w}^{{(1})}-
			% \bm{z}_{u}^{{(1})}
			% \right\|
			% \\
			% \ \ \ \ \   &              
			% \left. 
			% \beta C_2
			% (1+\beta C_1)^{{N}-1}C_0
			% \left\|
			% \bm{z}_{w}^{{(0})}-
			% \bm{z}_{u}^{{(0})}
			% \right\|
			% \right) \\
			% =         &              
			% \mathbb{E}\left(
			% (1+\beta C_1)^2
			% \left\| 
			% \hat{G}_{w}^{(2,{N})}-
			% \hat{G}_{u}^{(2,{N})} 
			% \right\| + \right.\\
			% \ \ \ \ \   &              
			% \beta C_2
			% (1+\beta C_1)^{{N}}C_0
			% \left\|
			% \bm{z}_{w}^{{(0})}-
			% \bm{z}_{u}^{{(0})}
			% \right\|
			% \\
			% \ \ \ \ \   &              
			% \left. 
			% \beta C_2
			% (1+\beta C_1)^{{N}-1}C_0
			% \left\|
			% \bm{z}_{w}^{{(0})}-
			% \bm{z}_{u}^{{(0})}
			% \right\|
			% \right) \\
			& \cdots       \\
			\leq &
			(\sqrt{p}+\beta C_1)^N
			\left\| 
			A_1(N)
			\right\| \\
			& +  
			\sum_{j=0}^{N-1}           
			A_4(j) (\sqrt{p}+\beta C_1)^j
			\|
			\bm{z}_{w}^{}-
			\bm{z}_{u}^{}
			\| \\
			\leq      &              
			(\sqrt{p}+\beta C_1)^N C_1
			\left\| 
			\bm{z}_{w}-
			\bm{z}_{u}
			\right\| \\
			&              
			+        
			N \beta C_2 (\sqrt{p}+\beta C_1)^{{N}-1}C_0 
			\|
			\bm{z}_{w}^{}-
			\bm{z}_{u}^{}
			\|\\
			% \leq      &              
			% (\sqrt{p}+\beta C_1)^N C_1
			% \left\| 
			% \bm{z}_{w}-
			% \bm{z}_{u}
			% \right\| \\
			% &              
			% +N        
			% \beta C_2 (\sqrt{p}+\beta C_1)^{{N}-1}C_0 
			% \|
			% \bm{z}_{w}^{}-
			% \bm{z}_{u}^{}
			% \|\\
			=         &              
			C_3
			\left\| 
			\bm{z}_{w}-
			\bm{z}_{u}
			\right\|.
		\end{aligned}
	\end{equation}
	% \beta C_2 (\sqrt{p}+\beta C_1)^{{N}-j-1}C_0
	% Now we define a new constant
	% \begin{equation}\label{c_3}
	% 	\begin{aligned}
	% 	C_3&=\left[
	% 		(1+\beta C_1)^N C_1+ 
	% 		\sum_{i={N-1}}^{{2N-1}}{
	% 			\beta C_2
	% 			(1+\beta C_1)^{i}C_0
	% 		}
	% 	)
	% 	\right].
	% 	\end{aligned}
	% \end{equation}
	Based on Eq.(\ref{bound_A}) and 
	Eq.(\ref{lemma_1_main}), we can get that
	% \begin{equation}
	% 	\begin{aligned}
	% 		     & \| \nabla_{\bm{z}_{w}} \mathcal{L} 
	% 		(\bm{\theta}_w, \bm{\phi}_w) - 
	% 		\nabla_{\bm{z}_{u}} \mathcal{L}
	% 		(\bm{\theta}_u, \bm{\phi}_u)
	% 		\|\\
	% 		\leq &                                        
	% 		2 C_1 
	% 		\left\| 
	% 		\bm{z}_{w}-
	% 		\bm{z}_{u}
	% 		\right\|+
	% 		\mu C_3
	% 		\left\| 
	% 		\bm{z}_{w}-
	% 		\bm{z}_{u}
	% 		\right\| \\
	% 		=    &                                        
	% 		(2C_1 + \mu C_3)
	% 		\left\| 
	% 		\bm{z}_{w}-
	% 		\bm{z}_{u}
	% 		\right\|.
	% 	\end{aligned}
	% \end{equation}
	% By applying Jensen inequality to this, we then have
	\begin{equation}
		\begin{aligned}	
			     & \| 
			\nabla_{\bm{z}_{w}}\mathcal{F} - 
			\nabla_{\bm{z}_{u}}\mathcal{F}
			\| \\
			\leq &    
			\mathbb{E}(
			\| \nabla_{\bm{z}_{w}} \mathcal{L}
			(\bm{\theta}_w, \bm{\phi}_w) - 
			\nabla_{\bm{z}_{u}} \mathcal{L}
			(\bm{\theta}_u, \bm{\phi}_u)
			\|
			)\\
			\leq &    
			(2C_1 + \mu C_3)
			\left\| 
			\bm{z}_{w}-
			\bm{z}_{u}
			\right\| \\
			% =    &    
			% C_{L1}
			% \left\| 
			% \bm{z}_{w}-
			% \bm{z}_{u}
			% \right\|
		\end{aligned}
	\end{equation}
	which is equivalent to the statement of this Lemma.
\end{proof}

\begin{lemma}\label{bound_grad_norm}
	Let $C_{L2}=C^2_0 + \mu (\sqrt{p}+\beta C_1)^{{2N}}C_{0}^2$.
	Based on the notations defined in Tab.\ref{tab:symbols},
	for arbitrary meta-model with parameters
	$\bm{\phi} \in \mathbb{R}^{{N_\phi}}$,
	the outer-level gradient $\mathcal{G}_{\bm{\phi}}$ holds that
	$\mathbb{E}\left(
	\|
	\mathcal{G}_{\bm{\phi}}
	\|^2
	\right) \leq C_{L2}$.
\end{lemma}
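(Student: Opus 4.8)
The plan is to start from the decomposition of the outer-level gradient already obtained as Eq.(\ref{outer_grad}) in the proof of Lemma.(\ref{bound_meta_grad}), namely
\[
\mathcal{G}_{\bm{\phi}}=\nabla_{\bm{\phi}} L^{(in)}(\bm{\theta},\bm{\phi})+\mu\,\nabla_{\bm{\phi}} L^{(out)}(\bm{\theta}^{(N)},\bm{\phi}),
\]
and to bound the two summands separately before recombining. The first summand is immediate: since $\nabla_{\bm{\phi}} L^{(in)}$ is a sub-block of the full gradient $\nabla_{\bm{z}} L^{(in)}$, Assumption.(\ref{lipchiz})(1) gives $\|\nabla_{\bm{\phi}} L^{(in)}(\bm{\theta},\bm{\phi})\|\le C_0$, hence $\|\nabla_{\bm{\phi}} L^{(in)}(\bm{\theta},\bm{\phi})\|^2\le C_0^2$.

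For the second summand I would reuse the chain-rule machinery built for Lemma.(\ref{bound_meta_grad}), now applied at the single point $\bm{z}=[\bm{\theta}^\top,\bm{\phi}^\top]^\top$ rather than at the pair of points appearing there. Reading $L^{(out)}(\bm{\theta}^{(N)},\bm{\phi})$ as a special case of the $N$-step MAML loss in which $\bm{z}^{(0)}=\bm{z}$ is the trainable initial point and $\bm{z}^{(j+1)}=\bm{z}^{(j)}-\bm{q}_\beta\odot G^{(j,j)}$ is the inner rollout, the gradient $\nabla_{\bm{\phi}} L^{(out)}(\bm{\theta}^{(N)},\bm{\phi})$ is (as a sub-block) exactly $\hat{G}^{(0,N)}=\big(\prod_{i=0}^{N-1}(\bm{I}-\bm{q}_\beta\odot H^{(i)})\big)\hat{G}^{(N,N)}$ in the notation of Tab.\ref{tab:symbols}. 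Specializing the estimate Eq.(\ref{G_norm}) to $j=0$ then gives $\|\nabla_{\bm{\phi}} L^{(out)}(\bm{\theta}^{(N)},\bm{\phi})\|\le(\sqrt{p}+\beta C_1)^{N}C_0$, so its squared norm is at most $(\sqrt{p}+\beta C_1)^{2N}C_0^2$.

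The last step is to recombine the two bounds: using the triangle inequality together with $\mu\in[0,1]$ and elementary norm inequalities, one gets $\|\mathcal{G}_{\bm{\phi}}\|^2\le C_0^2+\mu(\sqrt{p}+\beta C_1)^{2N}C_0^2=C_{L2}$, and this estimate is uniform over every draw of $\mathcal{D}^{(in)},\mathcal{D}^{(out)}$, so taking $\mathbb{E}_{\mathcal{P}(\mathcal{T})}$ of both sides preserves it and yields the claim. I expect the only delicate point to be the middle step: one must carefully justify that differentiating the rolled-out loss $L^{(out)}(\bm{\theta}^{(N)},\bm{\phi})$ with respect to $\bm{\phi}$ reduces, after norm-bounding, exactly to the product-of-Jacobians expression controlled by Eq.(\ref{G_norm})---in particular keeping straight which coordinates are updated during the rollout and which are merely carried along, and tracking the operator-norm versus Euclidean-norm conversions (and the accompanying $\sqrt{p}$ factors) that enter there. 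Everything else is routine bookkeeping with Assumption.(\ref{lipchiz}).
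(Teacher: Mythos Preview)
Your proposal is essentially identical to the paper's proof: both start from the decomposition Eq.(\ref{outer_grad}), bound the $L^{(in)}$ term by $C_0^2$ via Assumption.(\ref{lipchiz})(1), bound the $L^{(out)}$ term by $(\sqrt{p}+\beta C_1)^{2N}C_0^2$ via the product-of-Jacobians estimate Eq.(\ref{G_norm}) specialized to $j=0$, and then recombine. Even the recombination step---passing from the two summand bounds directly to $\|\mathcal{G}_{\bm{\phi}}\|^2\le C_0^2+\mu(\sqrt{p}+\beta C_1)^{2N}C_0^2$---is handled in the paper with the same informal inequality you invoke as ``elementary norm inequalities,'' so your treatment matches line for line.
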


\begin{proof}
	According to Eq.(\ref{outer_grad}), we can derive that
	\begin{equation}
		\begin{aligned}
			\| \mathcal{G}_{\bm{\phi}} \|^2 
			& \leq 
			\| \nabla_{\bm{\phi}} L^{(in)} (\bm{\theta},\bm{\phi})\|^2  +
			\mu \|  \nabla_{\bm{\phi}} 
			   L^{(out)}(\bm{\theta}^{{(N)}},\bm{\phi}) \|^2 
			\\
			& \leq 
			\| \nabla_{\bm{z}} L^{(in)} (\bm{z})\|^2  +
			\mu \|  \nabla_{\bm{z}} 
			   L^{(out)}(\bm{z}^{{(N)}}) \|^2 
			\\
			& \leq 
			C^2_0 + \mu \| \hat{G}_{w}^{(0,{N})}  \|^2 
			\\
		\end{aligned}
	\end{equation}
	Based on Eq.(\ref{G_norm}), we have
	\begin{equation}
		\begin{aligned}
			\| 
			\hat{G}_{w}^{(0,{N})}  
			\|^2 
			\leq &         
			((\sqrt{p} + \beta C_1)^{{N}-j}C_0)^2 \\
			= &
			(\sqrt{p}+\beta C_1)^{{2N}}C_{0}^2.
		\end{aligned}
	\end{equation}
	Now we can get that
	\begin{equation}
		\begin{aligned}
			\| \mathcal{G}_{\bm{\phi}} \| 
			& \leq 
			C^2_0 + \mu \| \hat{G}_{w}^{(0,{N})}  \|
			\\
			& \leq 
			C^2_0 + \mu (\sqrt{p}+\beta C_1)^{{2N}}C_{0}^2.
		\end{aligned}
	\end{equation}
	Thus the proof is is concluded.
\end{proof}

\begin{lemma}\label{bound_grad_M_2}
	Let
	$
	C_{L3}=\frac{C_{L1}}{2} (C^2_0 + C_{L2})
	$.
	Based on Lemma.(\ref{bound_meta_grad}) and Lemma.(\ref{bound_grad_norm}),
	as well as notations defined in Tab.\ref{tab:symbols},
	we have the following inequality for each $k \in \mathbb{N}$:
	$
	\sum_{i=1}^{k} \gamma_{i} 
				\|\nabla_{\bm{z}}\mathcal{F}_{i-1}\|^2
		\leq
		\mathcal{F}_{0} + C_{L3}\sum_{i=1}^{k}\gamma_{i}^2.
	$
	% \begin{equation}\label{lm3_result}
	% 	\sum_{i=1}^{k} \gamma_{i} \mathbb{E}( 
	% 			\|\nabla_{\bm{z}}\mathcal{F}_{i-1}\|^2
	% 			)
	% 	\leq
	% 	\mathcal{F}_{0} + C_{L3}\sum_{i=1}^{k}\gamma_{i}^2.
	% \end{equation}
\end{lemma}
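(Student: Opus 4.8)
The plan is to turn Alg.~\ref{training} into a classical nonconvex stochastic-gradient descent recursion on the expected loss $\mathcal{F}$ and then telescope it. First I would record that $\nabla_{\bm{z}}\mathcal{F}$ is $C_{L1}$-Lipschitz: this is exactly Lemma~\ref{bound_meta_grad}. Consequently the standard descent lemma (quadratic upper bound for a function with Lipschitz-continuous gradient) applies, so along the outer-GD update $\bm{z}_{(i)}=\bm{z}_{(i-1)}-\gamma_i\,\mathcal{G}_{\bm{z}}(\bm{z}_{(i-1)})$ performed in line~$7$ of Alg.~\ref{training} — where $\mathcal{G}_{\bm{z}}(\bm{z}_{(i-1)})=\nabla_{\bm{z}}\mathcal{L}(\bm{z}_{(i-1)})$ is the gradient of Eq.~(\ref{a_plus_b}) evaluated on the freshly drawn task $\mathcal{T}_i=\{\mathcal{D}^{(in)}_i,\mathcal{D}^{(out)}_i\}$ — we have
\begin{align*}
\mathcal{F}(\bm{z}_{(i)}) \le{} & \mathcal{F}(\bm{z}_{(i-1)}) - \gamma_i \big\langle \nabla_{\bm{z}}\mathcal{F}(\bm{z}_{(i-1)}),\, \mathcal{G}_{\bm{z}}(\bm{z}_{(i-1)})\big\rangle \\
& + \tfrac{C_{L1}}{2}\gamma_i^2 \,\|\mathcal{G}_{\bm{z}}(\bm{z}_{(i-1)})\|^2 .
\end{align*}

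Next I would take the expectation over the sampling of $\mathcal{T}_i$ conditioned on $\bm{z}_{(i-1)}$. Because the minibatches $\mathcal{D}^{(in)}_i,\mathcal{D}^{(out)}_i$ are drawn so that $\mathcal{G}_{\bm{z}}$ is an unbiased estimator of $\nabla_{\bm{z}}\mathcal{F}$, the inner-product term reduces to $-\gamma_i\|\nabla_{\bm{z}}\mathcal{F}(\bm{z}_{(i-1)})\|^2$. For the quadratic term I would split $\mathcal{G}_{\bm{z}}=[\mathcal{G}_{\bm{\theta}}^\top,\mathcal{G}_{\bm{\phi}}^\top]^\top$; since $\mathcal{G}_{\bm{\theta}}=\nabla_{\bm{\theta}}L^{(in)}$ we get $\|\mathcal{G}_{\bm{\theta}}\|\le C_0$ from Assumption~\ref{lipchiz}(1), while Lemma~\ref{bound_grad_norm} gives $\mathbb{E}(\|\mathcal{G}_{\bm{\phi}}\|^2)\le C_{L2}$, hence $\mathbb{E}(\|\mathcal{G}_{\bm{z}}\|^2)\le C_0^2+C_{L2}$ and $\tfrac{C_{L1}}{2}\gamma_i^2\,\mathbb{E}(\|\mathcal{G}_{\bm{z}}\|^2)\le C_{L3}\gamma_i^2$ by the definition of $C_{L3}$. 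Taking a further full expectation and using the tower property (and writing $\mathcal{F}_i=\mathbb{E}(\mathcal{L}(\bm{z}_{(i)}))$, $\|\nabla_{\bm{z}}\mathcal{F}_{i-1}\|^2$ for $\mathbb{E}\|\nabla_{\bm{z}}\mathcal{F}(\bm{z}_{(i-1)})\|^2$ as in Tab.~\ref{tab:symbols}) yields the one-step recursion $\mathcal{F}_i \le \mathcal{F}_{i-1} - \gamma_i\|\nabla_{\bm{z}}\mathcal{F}_{i-1}\|^2 + C_{L3}\gamma_i^2$.

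Finally I would rearrange this to $\gamma_i\|\nabla_{\bm{z}}\mathcal{F}_{i-1}\|^2 \le \mathcal{F}_{i-1}-\mathcal{F}_i+C_{L3}\gamma_i^2$, sum over $i=1,\dots,k$ so the $\mathcal{F}$ terms telescope to $\mathcal{F}_0-\mathcal{F}_k$, and drop $\mathcal{F}_k\ge 0$ (the cross-entropy losses $L^{(in)},L^{(out)}$ are nonnegative and $\mu\ge 0$, so $\mathcal{F}_k\ge 0$). This gives precisely $\sum_{i=1}^{k}\gamma_i\|\nabla_{\bm{z}}\mathcal{F}_{i-1}\|^2\le \mathcal{F}_0+C_{L3}\sum_{i=1}^{k}\gamma_i^2$, the claimed inequality.

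The main obstacle is not any single estimate — those are all delivered by Lemmas~\ref{bound_meta_grad} and~\ref{bound_grad_norm} and Assumption~\ref{lipchiz} — but the careful handling of the randomness: one must apply the descent lemma pathwise, then take the conditional expectation over $\mathcal{T}_i$ before invoking the tower property, and one must justify that the task-sampled meta-gradient (which wraps the $N$-step inner rollout on $\mathcal{D}^{(in)}_i$ and the outer loss on $\mathcal{D}^{(out)}_i$) is genuinely an unbiased estimate of $\nabla_{\bm{z}}\mathcal{F}=\nabla_{\bm{z}}\mathbb{E}_{\mathcal{P}(\mathcal{T})}(\mathcal{L})$. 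Once the stochastic bookkeeping is set up correctly, the remainder is the textbook nonconvex SGD telescoping argument.
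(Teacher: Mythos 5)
Your proposal is correct and follows essentially the same route as the paper: the $C_{L1}$-smoothness descent lemma applied along the outer-GD update, unbiasedness of the task-sampled gradient, the second-moment bound $\mathbb{E}\|\mathcal{G}_{\bm{\theta}}\|^2\le C_0^2$ and $\mathbb{E}\|\mathcal{G}_{\bm{\phi}}\|^2\le C_{L2}$ giving the $C_{L3}\gamma_i^2$ term, then telescoping and dropping $\mathcal{F}_k\ge 0$. The only cosmetic difference is that you apply the descent lemma to the joint vector $\bm{z}$ while the paper splits it blockwise into $\bm{\theta}$ and $\bm{\phi}$ (and, if anything, your explicit treatment of the conditional expectation and unbiasedness is tidier than the paper's).
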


\begin{proof}
	For each $0 < i \leq k$, we have
	\begin{equation}\label{F_z_theta_phi}
		\begin{aligned}
			\|\nabla_{\bm{z}}\mathcal{F}_{i-1}\|^2       
			\leq                                             
			\|\nabla_{\bm{\theta}}\mathcal{F}_{i-1}\|^2+ 
			\|\nabla_{\bm{\phi}}\mathcal{F}_{i-1}\|^2.   
		\end{aligned}
	\end{equation}
	So we can derive that
	\begin{equation}\label{M_two_terms}
		\begin{aligned}
			     & \sum_{i=1}^{k} \gamma_{i} 
			\|\nabla_{\bm{z}}\mathcal{F}_{i-1}\|^2
			\\
			\leq &                                       
			\sum_{i=1}^{k} \gamma_{i} (
			\|\nabla_{\bm{\theta}}\mathcal{F}_{i-1}\|^2
			+
			\|\nabla_{\bm{\phi}}\mathcal{F}_{i-1}\|^2
			) \\
			\leq &                                       
			\sum_{i=1}^{k} \gamma_{i} 
			\|\nabla_{\bm{\theta}}\mathcal{F}_{i-1}\|^2
			+
			\sum_{i=1}^{k} \gamma_{i} 
			\|\nabla_{\bm{\phi}}\mathcal{F}_{i-1}\|^2
			.
		\end{aligned}
	\end{equation}	
	% Here we can consider the two terms in Eq.(\ref{M_two_terms}) respectively.	
	Based on Eq.(4.3) from \cite{SGD}, 
	we have
	\begin{equation}\label{term_1_M}
		\begin{aligned}
			\mathcal{F}_i
			\leq &   
			\mathcal{F}_{i-1} +
			\nabla_{\bm{\theta}}\mathcal{F}_{i-1}^\top
			( \bm{\theta}_{(i)}-\bm{\theta}_{(i-1)} )
			+ 
			\frac{C_{L1}}{2} \| \bm{\theta}_{(i)}-\bm{\theta}_{(i-1)} \|^2 \\
			& +
			\nabla_{\bm{\phi}}\mathcal{F}_{i-1}^\top
			( \bm{\phi}_{(i)}-\bm{\phi}_{(i-1)} )
			+ 
			\frac{C_{L1}}{2} \| \bm{\phi}_{(i)}-\bm{\phi}_{(i-1)} \|^2 \\
			\leq &   
			\mathcal{F}_{i-1} -
			\gamma_i
			\nabla_{\bm{\theta}}\mathcal{F}_{i-1}^\top
			\mathcal{G}_{\bm{\theta}}+
			\frac{\gamma_{i}^2C_{L1}}{2} 
			\|\mathcal{G}_{\bm{\theta}}\|^2 \\
			& -
			\gamma_i
			\nabla_{\bm{\phi}}\mathcal{F}_{i-1}^\top
			\mathcal{G}_{\bm{\phi}}+
			\frac{\gamma_{i}^2C_{L1}}{2} 
			\|\mathcal{G}_{\bm{\phi}}\|^2.
		\end{aligned}
	\end{equation}
	Taking the expectation over $\mathcal{P}(\mathcal{T})$ 
	on both sides of the equation, 
	we can get that 
	\begin{equation}\label{term_1_M_E}
		\begin{aligned}
			\mathcal{F}_i
			\leq &   
			\mathcal{F}_{i-1} -
			\gamma_i
			\nabla_{\bm{\theta}}\mathcal{F}_{i-1}^\top
			\mathbb{E}(\mathcal{G}_{\bm{\theta}})+
			\frac{\gamma_{i}^2 C_{L1}}{2} 
			\mathbb{E}(\|\mathcal{G}_{\bm{\theta}}\|^2) \\
			& -
			\gamma_i
			\nabla_{\bm{\phi}}\mathcal{F}_{i-1}^\top
			\mathbb{E}(\mathcal{G}_{\bm{\phi}})+
			\frac{\gamma_{i}^2 C_{L1}}{2} 
			\mathbb{E}(\|\mathcal{G}_{\bm{\phi}}\|^2) \\
			\leq &   
			\mathcal{F}_{i-1} -
			\gamma_i
			\| \nabla_{\bm{\theta}}\mathcal{F}_{i-1} \|^2 + 
			\frac{\gamma_{i}^2 C_{L1}}{2} 
			\mathbb{E}(\|\mathcal{G}_{\bm{\theta}}\|^2) \\
			& -
			\gamma_i
			\| \nabla_{\bm{\phi}}\mathcal{F}_{i-1} \|^2 + 
			\frac{\gamma_{i}^2 C_{L1}}{2} 
			\mathbb{E}(\|\mathcal{G}_{\bm{\phi}}\|^2) \\
			\leq &   
			\mathcal{F}_{i-1} -
			\gamma_i
			\| \nabla_{\bm{\theta}}\mathcal{F}_{i-1} \|^2 + 
			\frac{\gamma_{i}^2 C_{L1}}{2} C^2_0 \\
			& -
			\gamma_i
			\| \nabla_{\bm{\phi}}\mathcal{F}_{i-1} \|^2 + 
			\frac{\gamma_{i}^2 C_{L1}}{2} C_{L2}. \\
		\end{aligned}
	\end{equation}
	This inequality is equivalent to
	\begin{equation}\label{theta_i}
		\begin{aligned}
			& \gamma_i
			\| \nabla_{\bm{\theta}}\mathcal{F}_{i-1} \|^2
			+ 
			\gamma_i
			\| \nabla_{\bm{\phi}}\mathcal{F}_{i-1} \|^2 \\
			\leq &   
			\mathcal{F}_{i-1} -
			\mathcal{F}_i + 
			\frac{\gamma_{i}^2 C_{L1}}{2} (C^2_0 + C_{L2}). \\
		\end{aligned}
	\end{equation}
	Summing up Eq.(\ref{theta_i}) for all $0 < i \leq k$ yields that
	\begin{equation}\label{bound_term_1}
		\begin{aligned}
			&\sum_{i=1}^{k}
			\gamma_i
			\| \nabla_{\bm{\theta}}\mathcal{F}_{i-1} \|^2
			+ 
			\sum_{i=1}^{k}
			\gamma_i
			\| \nabla_{\bm{\phi}}\mathcal{F}_{i-1} \|^2 \\
			\leq &   
			\mathcal{F}_{0} -
			\mathcal{F}_{k} + 
			\sum_{i=1}^{k}
			\frac{\gamma_{i}^2 C_{L1}}{2} (C^2_0 + C_{L2}). \\
			\leq &   
			\mathcal{F}_{0} + 
			\sum_{i=1}^{k}
			\frac{\gamma_{i}^2 C_{L1}}{2} (C^2_0 + C_{L2}). \\
		\end{aligned}
	\end{equation}
	% Similarly with Eq.(\ref{term_1_M}) and Eq.(\ref{term_1_M_E}), we have
	% \begin{equation}\label{bound_term_1}
	% 	\begin{aligned}
	% 		\mathcal{F}_i
	% 		\leq &   
	% 		\mathcal{F}_{i-1} -
	% 		\gamma_i
	% 		\| \nabla_{\bm{\phi}}\mathcal{F}_{i-1} \|^2 + 
	% 		\frac{\gamma_{i}^2 C_{L1}}{2} 
	% 		\mathbb{E}(\|\mathcal{G}_{\bm{\phi}}\|^2) \\
	% 		\leq &   
	% 		\mathcal{F}_{i-1} -
	% 		\gamma_i
	% 		\| \nabla_{\bm{\theta}}\mathcal{F}_{i-1} \|^2 + 
	% 		\frac{\gamma_{i}^2 C_{L1}}{2} C_{L2} \\
	% 	\end{aligned}
	% \end{equation}
	% which is equivalent to
	% \begin{equation}\label{phi_i}
	% 	\begin{aligned}
	% 		\gamma_i
	% 		\| \nabla_{\bm{\phi}}\mathcal{F}_{i-1} \|^2
	% 		\leq &   
	% 		\mathcal{F}_{i-1} -
	% 		\mathcal{F}_i + 
	% 		\frac{\gamma_{i}^2 C_{L1}}{2} C_{L2}. \\
	% 	\end{aligned}
	% \end{equation}
	% Thus we can get that 
	% \begin{equation}\label{bound_term_2}
	% 	\begin{aligned}
	% 		\sum_{i=1}^{k}
	% 		\gamma_i
	% 		\| \nabla_{\bm{\theta}}\mathcal{F}_{i-1} \|^2
	% 		\leq &   
	% 		\mathcal{F}_{0} -
	% 		\mathcal{F}_{k} + 
	% 		\sum_{i=1}^{k}
	% 		\frac{\gamma_{i}^2 C_{L1}}{2} C_{L2}. \\
	% 		\leq &   
	% 		\mathcal{F}_{0} + 
	% 		\sum_{i=1}^{k}
	% 		\frac{\gamma_{i}^2 C_{L1}}{2} C_{L2}. \\
	% 	\end{aligned}
	% \end{equation}
	Thus we have
	\begin{equation}\label{M_gamma}
		\begin{aligned}
			     & \sum_{i=1}^{k} \gamma_{i} 
			\|\nabla_{\bm{z}}\mathcal{F}_{i-1}\|^2
			 \\
			\leq &                                       
			\sum_{i=1}^{k} \gamma_{i} 
			\|\nabla_{\bm{\theta}}\mathcal{F}_{i-1}\|^2
			+
			\sum_{i=1}^{k} \gamma_{i}
			\|\nabla_{\bm{\phi}}\mathcal{F}_{i-1}\|^2
			 \\
			\leq &                                       
			\mathcal{F}_{0} + 
			\sum_{i=1}^{k}
			\frac{\gamma_{i}^2 C_{L1}}{2} (C^2_0 + C_{L2}). \\
		\end{aligned}
	\end{equation}	
	In this way, the proof is concluded.
\end{proof}

\begin{proof}[Proof of Theorem.(\ref{converge_thm})]$ $\\
	\textbf{Proof of Statement (1).}
	Based on Eq.(\ref{term_1_M}) and the result of
	Lemma.(\ref{bound_grad_norm}), we can
	get the following inequality by algebraic manipulation:
	\begin{equation}\label{FF}
		\begin{aligned}
			\mathcal{F}_{i-1} - \mathcal{F}_i 
			> &   
			\gamma_i
			C^2_0
			+ 
			\gamma_i
			C_{L2}
			-
			\frac{\gamma_{i}^2}{2} C_{L3}.
		\end{aligned}
	\end{equation}
	From this, we can derive that when $\gamma_i < 2 \frac{C^2_0+C_{L2}}{C_{L3}}$,
	the right-side of Eq.(\ref{FF}) would be positive, thereby ensuring
	$\mathcal{F}_{i-1} > \mathcal{F}_i$. Thus the statement 1 is proved.\\

	\noindent
	\textbf{Proof of Statement (2).}
	Since we have the condition that $\sum_{i=1}^{\infty}\gamma_{i}^2<\infty$, 
	the right-hand side of the statement of Lemma.(\ref{bound_grad_M_2}) 
	converges to a finite value 
	when $k \rightarrow \infty$. Thus the left-hand side is also convergent. 
	Now suppose this statement is false. Then there exists $k_0 \in \mathbb{N}$
	and $B>0$ such that 
	$\forall i \geq k_0:\|\nabla_{\bm{z}}\mathcal{F}_{i-1}\|^2>B$. 
	However, for all $i > k_0$ we have
	\begin{equation}
		\sum_{i=1}^{k} 
		\gamma_{i} 
		\|\nabla_{\bm{z}}\mathcal{F}_{i-1}\|^2
		\geq
		B \sum_{i=k_0}^{k} \gamma_{i} 
		\rightarrow \infty
	\end{equation}
	while $k \rightarrow \infty$, which is a contradiction.
	Thus we can get that
	\begin{equation}\label{F_infty}
		\lim_{i \rightarrow \infty}
		{\|\nabla_{\bm{z}}\mathcal{F}_{i-1}\|^2}=0.
	\end{equation}
	Taking a step further, the parameters $\bm{z}$ is bounded as
	$\| \bm{z}_{(i)} \| < C_r$.  According to the Bolzano-Weierstrass 
	theorem, any bounded sequence must have
	convergent subsequence, which immediately suggests the existence of at 
	least one limit point for $\{\bm{z}_{(i)}\}_i$.
	Picking an arbitrary convergent subsequence $\{\bm{z}_{(s_i)}\}_i$ with a 
	limit point $\bm{z}^\star$. Since $s_i \rightarrow \infty$ when 
	$i \rightarrow \infty$, we can get that
	\begin{equation}\label{F_S_i}
		\lim_{i \rightarrow \infty}
		{\|\nabla_{\bm{z}}\mathcal{F}(\bm{z}_{(s_i)})\|^2} = 0
	\end{equation}
	following the similar spirit of Eq.(\ref{F_infty}). 
	Using the continuity of $\nabla_{\bm{z}}\mathcal{F}$, we
	have $\nabla_{\bm{z}}\mathcal{F}(\bm{z}_{(s_i)}) \rightarrow 
	\nabla_{\bm{z}}\mathcal{F}(\bm{z}^\star)$
	when $i \rightarrow \infty$.
	Thus Eq.(\ref{F_S_i}) is equivalent to
	$
		{\|\nabla_{\bm{z}}\mathcal{F}(\bm{z}^\star)\|^2} = 0
	$, which means that $\{\bm{z}_{(s_i)}\}_i$ converges to 
	a stationary point. \\
	% Now suppose that statement 1 is false. Then there exist $k_0 \in \mathbb{N}$
	% and $B>0$ such that 
	% $\forall i \geq k_0:\|\nabla_{\bm{z}}\mathcal{F}_{i-1}\|^2>B$. 
	% However, for all $i > k_0$ we have
	% \begin{equation}
	% 	\sum_{i=1}^{k} 
	% 	\gamma_{i} 
	% 	\|\nabla_{\bm{z}}\mathcal{F}_{i-1}\|^2
	% 	\geq
	% 	B \sum_{i=k_0}^{k} \gamma_{i} 
	% 	\rightarrow \infty
	% \end{equation}
	% while $k \rightarrow \infty$, which is a contradiction. So statement 2 
	% is true.

	\noindent
	\textbf{Proof of Statement (3).} We have
	\begin{equation}
		\begin{aligned}
			\min_{0 \leq j < k} \left( 
			\|\nabla_{\bm{z}}\mathcal{F}_{j}\|^2
			\right)
			\sum_{i=1}^{k} \gamma_{i} 
			\leq &                            
			\sum_{i=1}^{k} \gamma_{i}
			\|\nabla_{\bm{z}}\mathcal{F}_{i-1}\|^2 \\
			\leq &                            
			\mathcal{F}_{0} + 
			C_{L3}
			\sum_{i=1}^{k}\gamma_{i}^2.
		\end{aligned}
	\end{equation}
	Dividing $\sum_{i=1}^{k} \gamma_{i}$ yields that
	\begin{equation}
		\begin{aligned}
			\min_{0 \leq j < k}  
			\|\nabla_{\bm{z}}\mathcal{F}_{j}\|^2 
			\leq &                            
			\frac{\mathcal{F}_{0}}{\sum_{i=1}^{k} \gamma_{i}} + 
			C_{L3}
			\frac{\sum_{i=1}^{k}\gamma_{i}^2}{\sum_{i=1}^{k} \gamma_{i}}.
		\end{aligned}
	\end{equation}
	In this way, statement 3 is satisfied by observing that
	\begin{equation}
		\begin{aligned}
			  & \sum_{i=1}^{k}\gamma_{i}=\sum_{i=1}^{k}i^{-0.5}=\bm{\Omega}(k^{0.5}), \\
			  & \sum_{i=1}^{k}\gamma_{i}^2=\sum_{i=1}^{k}i^{-1}=O(\log k)=\bm{\Omega}(k^\epsilon)   
		\end{aligned}
	\end{equation}
	for any $\epsilon > 0$.
\end{proof}

% % that's all folks
\end{document}